\title{Quantitative  
 Timed Simulation Functions and Refinement Metrics for Timed Systems
\thanks{%
 This work has been financially supported in part by the
European Commission FP7-ICT Cognitive Systems, Interaction, and Robotics under the contract  \# 270180 (NOPTILUS);  
by Funda\c c\~ ao para Ci\^ encia e Tecnologia under project 
PTDC/EEA-CRO/104901/2008 (Modeling and control of Networked vehicle systems in persistent autonomous
operations);
by Austrian Science Fund (FWF) Grant No P 23499-N23 on Modern
Graph Algorithmic Techniques in Formal Verification; FWF NFN Grant No
S11407-N23 (RiSE); ERC Start grant (279307: Graph Games); and the Microsoft faculty
fellows award
}
 }
\author{Krishnendu Chatterjee$^1$    
\and Vinayak S. Prabhu$^2$ }
\institute{
$^1$ IST Austria (Institute of Science and Technology Austria) $\qquad$
$^2$Faculty of Engineering, University of Porto \\
{\tt Krishnendu.Chatterjee@ist.ac.at, vinayak@fe.up.pt}
}
\date{}
\begin{document}
\maketitle

\thispagestyle{plain}
\pagestyle{plain}

\begin{abstract}

We introduce quantatitive timed refinement  and timed simulation (directed) metrics,  incorporating  zenoness check
s,
 for timed systems.
These metrics assign  positive real numbers between zero and infinity which quantify the
\emph{timing mismatches}  between two timed systems, amongst non-zeno runs.
We quantify timing mismatches in three ways:
(1)~the maximal timing mismatch that can arise,
(2)~the ``steady-state'' maximal timing mismatches, where initial transient timing mismatches are ignored;
and
(3)~the (long-run) average timing mismatches amongst two systems.
These three kinds of mismatches constitute three important types of timing differences.
Our event times are the \emph{global times}, measured from the start of the system
execution, not just the  time durations of individual steps.
We present algorithms over timed automata for computing the
 three quantitative simulation distances to
within any desired degree of accuracy.
In order to compute the values of the quantitative simulation distances, we
use a game theoretic formulation.
We introduce  two new kinds of objectives for two player games on finite-state game graphs:
(1)~eventual debit-sum level objectives, and
(2)~average debit-sum level objectives.
We present algorithms for computing the optimal values for these objectives in graph games,
and then use these algorithms to compute the values of the  timed simulation
distances over timed automata.

\end{abstract}

\section{Introduction}

Theories of system approximation for continuous systems
 are used for analyzing systems that differ to a 
small extent, as opposed to the traditional boolean yes/no view of system
refinement for discrete systems.
These theories are necessary as formal models are only approximations
of the real world, and are subject to estimation and modelling errors.
Approximation  theories have been traditionally developed for continuous control 
systems~\cite{Antoulas01asurvey}
and more recently for  linear and non-linear systems
 \cite{Pola2010365, GirardSwitch, GLPappas08},
  timed systems~\cite{HMP05}, 
labeled Markov Processes~\cite{DesharnaisGJP04},
 probabilistic automata~\cite{BreugelMOW03},
quantitative transition systems~\cite{AlfaroFS09}, games~\cite{CARV10},
and software systems~\cite{ChaudhuriGL12}.

Timed and hybrid systems model  the evolution of system outputs as well
as the timing aspects related to the system evolution.
In this work we develop a theory of system approximation for timed systems by quantifying
the \emph{timing differences} between corresponding system events.
We first generalize timed refinement relations to metrics on timed systems that quantitatively
estimate the closeness of two systems.
Given a timed model $T_s$ denoting the abstract specification model, and a model $T_r$ 
denoting the concrete refined  implementation of $T_s$, we assign a positive real number between
zero and infinity to the pair $(T_r,T_s)$ which denotes the quantitative refinement distance
between $T_r$ and $T_s$.
Given a trace $\trace_r$ of $T_r$, and  a trace  $\trace_s$ of $T_s$, we define various 
distances between the two traces, e.g., the distance being $\infty$ if the untimed
trace sequences differ, and being the supremum of the differences of the matching
timepoints for matching events otherwise.
Our event times are the \emph{global times}, measured from the start of the system
execution, not just the time durations of individual steps.
The distance between the systems  $T_r$ and  $T_s$ is taken to be the supremum of
closest matching trace differences from the initial states.

Timed trace inclusion is undecidable on timed automata~\cite{AlurD94}, thus timed refinement is
conservatively estimated using  \emph{timed simulation relations}~\cite{Cerans92}.
Simulation relations take a branching time 
view, unlike the linear view of refinement relations, and
can be defined using two player \emph{games}.
We generalize timed simulation relations to quantitative timed simulation functions, and
define the values of quantitative timed simulation functions as the real-valued
outcome of games played on the corresponding timed graphs.

\emph{Zeno} runs where time converges is an artifact present in models of timed
systems due to model imperfections; such runs are obviously absent in the physical
systems which our timed models are meant to represent.
We thus exclude Zeno runs in our computation of quantitative timed refinement and
quantitative timed simulation relations.

We define three illustrative quantitative timed simulation directed distances which measure
three important system differences.
The \emph{maximal time difference} quantitative simulation distance denotes the maximal
time discrepancy that can arise amongst matching transitions.
The \emph{eventual maximal time difference} quantitative simulation distance denotes
 the eventual maximal time discrepancy that arises (ignoring finite time trace 
prefix discrepancies)  amongst matching transitions.
This corresponds to the ``steady-state'' difference between systems, ignoring transient
behavior.
The \emph{(long-run) average time difference} quantitative simulation distance denotes
the average time discrepancy amongst matching transitions.
This distance measures the long-run 
average time discrepancies, per transition, amongst two
timed systems.
Ideally, we want all three simulation distances to be as small as possible between the
specification and the implementation systems, but
minimizing one may lead to increase in values for others.
Thus, all three simulation distances give important information about systems.
We illustrate  the various quantitative timed simulation distances  via examples.

\begin{example}[Maximal Time Difference]
\begin{figure}[t]
\vspace*{-1em}
 \begin{minipage}[t]{0.6\linewidth}
   \hspace*{-10mm}
    \strut\centerline{\setlength{\unitlength}{0.00049869in}
\begingroup\makeatletter\ifx\SetFigFont\undefined%
\gdef\SetFigFont#1#2#3#4#5{%
  \reset@font\fontsize{#1}{#2pt}%
  \fontfamily{#3}\fontseries{#4}\fontshape{#5}%
  \selectfont}%
\fi\endgroup%
{\renewcommand{\dashlinestretch}{30}
\begin{picture}(3640,2724)(0,-10)
\put(503,1602){\ellipse{990}{540}}
\put(2933,1602){\ellipse{990}{540}}
\put(1605,335){\ellipse{990}{540}}
\path(998,1602)(2438,1602)
\path(2258.000,1542.000)(2438.000,1602.000)(2258.000,1662.000)
\path(2933,1332)(1853,567)
\path(1965.203,720.005)(1853.000,567.000)(2034.565,622.082)
\path(458,2322)(458,1872)
\blacken\path(398.000,2052.000)(458.000,1872.000)(518.000,2052.000)(458.000,1998.000)(398.000,2052.000)
\path(1133,387)(1130,388)(1124,389)
	(1113,392)(1096,396)(1073,401)
	(1046,407)(1015,414)(982,421)
	(947,428)(912,434)(878,440)
	(845,446)(814,450)(785,454)
	(757,456)(731,458)(705,459)
	(681,459)(656,459)(632,457)
	(608,454)(583,451)(558,447)
	(533,443)(507,437)(480,431)
	(453,424)(426,416)(398,408)
	(371,399)(344,389)(317,379)
	(291,368)(267,357)(243,346)
	(221,335)(200,324)(181,313)
	(164,302)(149,291)(135,280)
	(123,270)(113,259)(105,250)
	(99,241)(93,231)(89,222)
	(86,212)(85,202)(85,193)
	(86,183)(88,173)(92,164)
	(97,154)(103,145)(109,135)
	(117,127)(126,118)(136,110)
	(146,102)(157,94)(169,87)
	(181,80)(193,74)(206,68)
	(219,62)(233,57)(250,51)
	(267,45)(286,40)(305,35)
	(326,31)(347,27)(370,23)
	(393,20)(416,17)(441,15)
	(465,13)(490,13)(515,12)
	(539,13)(563,14)(586,15)
	(609,17)(632,20)(654,23)
	(676,27)(695,31)(715,36)
	(735,41)(756,47)(777,54)
	(800,62)(824,71)(849,81)
	(877,92)(905,104)(936,117)
	(966,130)(997,144)(1027,158)
	(1055,170)(1079,182)(1099,191)(1133,207)
\path(995.680,76.067)(1133.000,207.000)(944.585,184.646)
\path(3023,1872)(3020,1875)(3015,1882)
	(3006,1893)(2994,1909)(2979,1928)
	(2963,1949)(2947,1970)(2932,1990)
	(2918,2009)(2905,2027)(2894,2043)
	(2883,2057)(2874,2071)(2866,2084)
	(2858,2097)(2851,2108)(2845,2120)
	(2838,2132)(2833,2144)(2827,2156)
	(2822,2169)(2818,2181)(2814,2194)
	(2812,2206)(2810,2219)(2809,2231)
	(2809,2242)(2811,2253)(2813,2264)
	(2817,2273)(2822,2283)(2828,2291)
	(2836,2300)(2844,2307)(2854,2315)
	(2866,2322)(2880,2329)(2895,2336)
	(2912,2342)(2930,2348)(2950,2354)
	(2971,2359)(2993,2363)(3017,2367)
	(3040,2370)(3065,2372)(3089,2374)
	(3114,2375)(3138,2376)(3163,2375)
	(3188,2375)(3211,2373)(3234,2371)
	(3258,2369)(3283,2366)(3308,2362)
	(3334,2358)(3360,2353)(3385,2348)
	(3411,2342)(3436,2335)(3460,2328)
	(3483,2320)(3505,2312)(3525,2304)
	(3544,2295)(3560,2286)(3575,2276)
	(3588,2267)(3599,2257)(3608,2247)
	(3615,2236)(3621,2225)(3625,2214)
	(3628,2202)(3628,2189)(3628,2176)
	(3626,2162)(3622,2148)(3617,2133)
	(3611,2118)(3603,2104)(3595,2089)
	(3586,2074)(3576,2060)(3566,2046)
	(3555,2033)(3544,2020)(3533,2008)
	(3522,1996)(3511,1984)(3498,1972)
	(3485,1960)(3471,1948)(3457,1935)
	(3441,1923)(3424,1909)(3406,1895)
	(3386,1880)(3364,1864)(3342,1848)
	(3320,1833)(3299,1818)(3281,1805)(3248,1782)
\path(3361.364,1934.147)(3248.000,1782.000)(3429.979,1835.699)
\put(368,1512){\makebox(0,0)[lb]{\smash{{\SetFigFont{9}{10.8}{\familydefault}{\mddefault}{\updefault}$a$}}}}
\put(2798,1512){\makebox(0,0)[lb]{\smash{{\SetFigFont{9}{10.8}{\familydefault}{\mddefault}{\updefault}$b$}}}}
\put(1493,252){\makebox(0,0)[lb]{\smash{{\SetFigFont{9}{10.8}{\familydefault}{\mddefault}{\updefault}$c$}}}}
\put(1358,1062){\makebox(0,0)[lb]{\smash{{\SetFigFont{11}{13.2}{\familydefault}{\mddefault}{\updefault}$\A_1$}}}}
\put(1268,1692){\makebox(0,0)[lb]{\smash{{\SetFigFont{9}{10.8}{\familydefault}{\mddefault}{\updefault}$\text{reset } x$}}}}
\put(2483,747){\makebox(0,0)[lb]{\smash{{\SetFigFont{9}{10.8}{\familydefault}{\mddefault}{\updefault}$x\leq 1$}}}}
\put(1268,2007){\makebox(0,0)[lb]{\smash{{\SetFigFont{9}{10.8}{\familydefault}{\mddefault}{\updefault}$x\leq 10$}}}}
\put(2483,477){\makebox(0,0)[lb]{\smash{{\SetFigFont{9}{10.8}{\familydefault}{\mddefault}{\updefault}$\text{reset } x$}}}}
\put(143,792){\makebox(0,0)[lb]{\smash{{\SetFigFont{9}{10.8}{\familydefault}{\mddefault}{\updefault}$x\leq 7$}}}}
\put(143,522){\makebox(0,0)[lb]{\smash{{\SetFigFont{9}{10.8}{\familydefault}{\mddefault}{\updefault}$\text{reset } x$}}}}
\put(2843,2502){\makebox(0,0)[lb]{\smash{{\SetFigFont{9}{10.8}{\familydefault}{\mddefault}{\updefault}$x\leq 1$}}}}
\end{picture}
}}
    \end{minipage}
  \begin{minipage}[t]{0.35\linewidth}
    \strut\centerline{\setlength{\unitlength}{0.00049869in}
\begingroup\makeatletter\ifx\SetFigFont\undefined%
\gdef\SetFigFont#1#2#3#4#5{%
  \reset@font\fontsize{#1}{#2pt}%
  \fontfamily{#3}\fontseries{#4}\fontshape{#5}%
  \selectfont}%
\fi\endgroup%
{\renewcommand{\dashlinestretch}{30}
\begin{picture}(3640,2724)(0,-10)
\put(503,1602){\ellipse{990}{540}}
\put(2933,1602){\ellipse{990}{540}}
\put(1605,335){\ellipse{990}{540}}
\path(998,1602)(2438,1602)
\path(2258.000,1542.000)(2438.000,1602.000)(2258.000,1662.000)
\path(2933,1332)(1853,567)
\path(1965.203,720.005)(1853.000,567.000)(2034.565,622.082)
\path(458,2322)(458,1872)
\blacken\path(398.000,2052.000)(458.000,1872.000)(518.000,2052.000)(458.000,1998.000)(398.000,2052.000)
\path(1133,387)(1130,388)(1124,389)
	(1113,392)(1096,396)(1073,401)
	(1046,407)(1015,414)(982,421)
	(947,428)(912,434)(878,440)
	(845,446)(814,450)(785,454)
	(757,456)(731,458)(705,459)
	(681,459)(656,459)(632,457)
	(608,454)(583,451)(558,447)
	(533,443)(507,437)(480,431)
	(453,424)(426,416)(398,408)
	(371,399)(344,389)(317,379)
	(291,368)(267,357)(243,346)
	(221,335)(200,324)(181,313)
	(164,302)(149,291)(135,280)
	(123,270)(113,259)(105,250)
	(99,241)(93,231)(89,222)
	(86,212)(85,202)(85,193)
	(86,183)(88,173)(92,164)
	(97,154)(103,145)(109,135)
	(117,127)(126,118)(136,110)
	(146,102)(157,94)(169,87)
	(181,80)(193,74)(206,68)
	(219,62)(233,57)(250,51)
	(267,45)(286,40)(305,35)
	(326,31)(347,27)(370,23)
	(393,20)(416,17)(441,15)
	(465,13)(490,13)(515,12)
	(539,13)(563,14)(586,15)
	(609,17)(632,20)(654,23)
	(676,27)(695,31)(715,36)
	(735,41)(756,47)(777,54)
	(800,62)(824,71)(849,81)
	(877,92)(905,104)(936,117)
	(966,130)(997,144)(1027,158)
	(1055,170)(1079,182)(1099,191)(1133,207)
\path(995.680,76.067)(1133.000,207.000)(944.585,184.646)
\path(3023,1872)(3020,1875)(3015,1882)
	(3006,1893)(2994,1909)(2979,1928)
	(2963,1949)(2947,1970)(2932,1990)
	(2918,2009)(2905,2027)(2894,2043)
	(2883,2057)(2874,2071)(2866,2084)
	(2858,2097)(2851,2108)(2845,2120)
	(2838,2132)(2833,2144)(2827,2156)
	(2822,2169)(2818,2181)(2814,2194)
	(2812,2206)(2810,2219)(2809,2231)
	(2809,2242)(2811,2253)(2813,2264)
	(2817,2273)(2822,2283)(2828,2291)
	(2836,2300)(2844,2307)(2854,2315)
	(2866,2322)(2880,2329)(2895,2336)
	(2912,2342)(2930,2348)(2950,2354)
	(2971,2359)(2993,2363)(3017,2367)
	(3040,2370)(3065,2372)(3089,2374)
	(3114,2375)(3138,2376)(3163,2375)
	(3188,2375)(3211,2373)(3234,2371)
	(3258,2369)(3283,2366)(3308,2362)
	(3334,2358)(3360,2353)(3385,2348)
	(3411,2342)(3436,2335)(3460,2328)
	(3483,2320)(3505,2312)(3525,2304)
	(3544,2295)(3560,2286)(3575,2276)
	(3588,2267)(3599,2257)(3608,2247)
	(3615,2236)(3621,2225)(3625,2214)
	(3628,2202)(3628,2189)(3628,2176)
	(3626,2162)(3622,2148)(3617,2133)
	(3611,2118)(3603,2104)(3595,2089)
	(3586,2074)(3576,2060)(3566,2046)
	(3555,2033)(3544,2020)(3533,2008)
	(3522,1996)(3511,1984)(3498,1972)
	(3485,1960)(3471,1948)(3457,1935)
	(3441,1923)(3424,1909)(3406,1895)
	(3386,1880)(3364,1864)(3342,1848)
	(3320,1833)(3299,1818)(3281,1805)(3248,1782)
\path(3361.364,1934.147)(3248.000,1782.000)(3429.979,1835.699)
\put(368,1512){\makebox(0,0)[lb]{\smash{{\SetFigFont{9}{10.8}{\familydefault}{\mddefault}{\updefault}$a$}}}}
\put(2798,1512){\makebox(0,0)[lb]{\smash{{\SetFigFont{9}{10.8}{\familydefault}{\mddefault}{\updefault}$b$}}}}
\put(1493,252){\makebox(0,0)[lb]{\smash{{\SetFigFont{9}{10.8}{\familydefault}{\mddefault}{\updefault}$c$}}}}
\put(1268,1692){\makebox(0,0)[lb]{\smash{{\SetFigFont{9}{10.8}{\familydefault}{\mddefault}{\updefault}$\text{reset } x$}}}}
\put(2483,477){\makebox(0,0)[lb]{\smash{{\SetFigFont{9}{10.8}{\familydefault}{\mddefault}{\updefault}$\text{reset } x$}}}}
\put(143,792){\makebox(0,0)[lb]{\smash{{\SetFigFont{9}{10.8}{\familydefault}{\mddefault}{\updefault}$x\leq 7$}}}}
\put(143,522){\makebox(0,0)[lb]{\smash{{\SetFigFont{9}{10.8}{\familydefault}{\mddefault}{\updefault}$\text{reset } x$}}}}
\put(2843,2502){\makebox(0,0)[lb]{\smash{{\SetFigFont{9}{10.8}{\familydefault}{\mddefault}{\updefault}$x\leq 1$}}}}
\put(1268,2007){\makebox(0,0)[lb]{\smash{{\SetFigFont{9}{10.8}{\familydefault}{\mddefault}{\updefault}$x\leq 1$}}}}
\put(2483,747){\makebox(0,0)[lb]{\smash{{\SetFigFont{9}{10.8}{\familydefault}{\mddefault}{\updefault}$x\leq 10$}}}}
\put(1358,1062){\makebox(0,0)[lb]{\smash{{\SetFigFont{11}{13.2}{\familydefault}{\mddefault}{\updefault}$\A_2$}}}}
\end{picture}
}}
  
  \end{minipage}
   \caption{Two timed automata $\A_1, \A_2$.
  }
  \label{figure:ExampleOne} 
\vspace{-1em}
\end{figure}
Consider the two timed automata $\A_1$ and $\A_2$ in Figure~\ref{figure:ExampleOne}.
The locations are labelled with the observations.
The starting location of each automaton is the one labelled with the observation $a$, 
and the starting value of the clock $x$ is 0. 
Let us first look at the value of the \emph{maximal time difference} 
 quantitative timed simulation distance 
$\simfunc_{\maxdiff}$ for the
state pair $\left(\tuple{a,x=0}^{\A_1}, \tuple{a,x=0}^{\A_2}\right)$.
The value is 
(1)~infinity
if the state of $\A_1$  does not \emph{time-abstract simulate} (combined with time-divergence encoded as a fairness constraint~\cite{Cerans92, FairSimulation}, to allow only time-divergent runs in $\A_1$) the state of $\A_2$; 
(2)~the maximal time difference between matching transitions of $\A_1$ and
 $\A_2$ otherwise, amongst time-divergent runs.
For the two timed automata in  Figure~\ref{figure:ExampleOne}, it can be checked
that $\tuple{a,x=0}^{\A_1}$ time-abstract simulates $ \tuple{a,x=0}^{\A_2}$, and that
the maximal time difference between matching transitions is
$9$ time units,
(\emph{e.g.} between the paths 
$\tuple{a,x=0}^{\A_1} \stackrel{10}{\longrightarrow} \tuple{b,x=0}^{\A_1} 
\stackrel{0}{\longrightarrow} \tuple{c,x=0}^{\A_1} \stackrel{5}{\longrightarrow} 
\tuple{c,x=0}^{\A_1}\stackrel{5}
 {\longrightarrow} \cdots$ and
$\tuple{a,x=0}^{\A_2} \stackrel{1}{\longrightarrow} \tuple{b,x=0}^{\A_2} 
\stackrel{9}{\longrightarrow} \tuple{c,x=0}^{\A_2} \stackrel{5}{\longrightarrow} 
\tuple{c,x=0}^{\A_2}\stackrel{5}
{ \longrightarrow} \cdots$).
\qed
\end{example}

\begin{example}[Global Event Times]
\begin{figure}[h]
\vspace*{-1em}
 \hspace*{-9mm}
 \begin{minipage}[t]{0.6\linewidth}
 \hspace*{-2mm}
      \strut\centerline{\setlength{\unitlength}{0.00049869in}
\begingroup\makeatletter\ifx\SetFigFont\undefined%
\gdef\SetFigFont#1#2#3#4#5{%
  \reset@font\fontsize{#1}{#2pt}%
  \fontfamily{#3}\fontseries{#4}\fontshape{#5}%
  \selectfont}%
\fi\endgroup%
{\renewcommand{\dashlinestretch}{30}
\begin{picture}(3391,1893)(0,-10)
\put(503,987){\ellipse{990}{540}}
\put(2888,987){\ellipse{990}{540}}
\path(503,1707)(503,1257)
\blacken\path(443.000,1437.000)(503.000,1257.000)(563.000,1437.000)(503.000,1383.000)(443.000,1437.000)
\path(2438,1077)(2436,1079)(2431,1083)
	(2422,1091)(2410,1102)(2393,1116)
	(2373,1132)(2351,1150)(2327,1169)
	(2302,1187)(2278,1204)(2253,1221)
	(2230,1235)(2207,1249)(2184,1260)
	(2161,1271)(2138,1280)(2114,1288)
	(2089,1295)(2063,1302)(2042,1307)
	(2020,1311)(1997,1315)(1973,1318)
	(1948,1322)(1922,1325)(1895,1327)
	(1867,1329)(1838,1331)(1808,1333)
	(1778,1334)(1748,1334)(1717,1335)
	(1686,1334)(1656,1334)(1625,1333)
	(1596,1331)(1566,1329)(1538,1327)
	(1510,1325)(1484,1322)(1458,1318)
	(1434,1315)(1410,1311)(1387,1307)
	(1366,1302)(1338,1295)(1312,1288)
	(1287,1280)(1262,1271)(1238,1260)
	(1213,1249)(1188,1235)(1161,1221)
	(1135,1204)(1107,1187)(1079,1169)
	(1052,1150)(1027,1132)(1004,1116)
	(985,1102)(953,1077)
\path(1057.906,1235.097)(953.000,1077.000)(1131.783,1140.534)
\path(953,897)(955,895)(961,891)
	(971,883)(985,872)(1003,858)
	(1026,842)(1050,824)(1077,805)
	(1103,787)(1130,770)(1155,753)
	(1180,739)(1204,725)(1227,714)
	(1250,703)(1272,694)(1295,686)
	(1319,679)(1343,672)(1364,667)
	(1386,662)(1409,658)(1432,655)
	(1457,651)(1482,648)(1509,646)
	(1536,644)(1564,642)(1593,641)
	(1622,641)(1652,641)(1681,641)
	(1710,642)(1739,644)(1767,646)
	(1795,648)(1822,651)(1848,655)
	(1873,659)(1898,664)(1921,668)
	(1944,674)(1966,679)(1989,686)
	(2011,694)(2033,702)(2055,712)
	(2077,722)(2100,734)(2123,747)
	(2147,762)(2172,778)(2198,795)
	(2224,814)(2252,833)(2278,853)
	(2304,872)(2328,890)(2348,906)
	(2365,920)(2393,942)
\path(2288.532,783.613)(2393.000,942.000)(2214.393,877.971)
\put(1223,357){\makebox(0,0)[lb]{\smash{{\SetFigFont{9}{10.8}{\familydefault}{\mddefault}{\updefault}$x=1$}}}}
\put(1223,87){\makebox(0,0)[lb]{\smash{{\SetFigFont{9}{10.8}{\familydefault}{\mddefault}{\updefault}$\text{reset } x$}}}}
\put(1583,897){\makebox(0,0)[lb]{\smash{{\SetFigFont{11}{13.2}{\familydefault}{\mddefault}{\updefault}$\A_3$}}}}
\put(2798,897){\makebox(0,0)[lb]{\smash{{\SetFigFont{9}{10.8}{\familydefault}{\mddefault}{\updefault}$b$}}}}
\put(413,897){\makebox(0,0)[lb]{\smash{{\SetFigFont{9}{10.8}{\familydefault}{\mddefault}{\updefault}$a$}}}}
\put(1223,1437){\makebox(0,0)[lb]{\smash{{\SetFigFont{9}{10.8}{\familydefault}{\mddefault}{\updefault}$\text{reset } x$}}}}
\put(1223,1707){\makebox(0,0)[lb]{\smash{{\SetFigFont{9}{10.8}{\familydefault}{\mddefault}{\updefault}$x=1$}}}}
\end{picture}
}}
    \end{minipage}
  \begin{minipage}[t]{0.35\linewidth}
 \hspace*{5mm}
    \strut\centerline{\setlength{\unitlength}{0.00049869in}
\begingroup\makeatletter\ifx\SetFigFont\undefined%
\gdef\SetFigFont#1#2#3#4#5{%
  \reset@font\fontsize{#1}{#2pt}%
  \fontfamily{#3}\fontseries{#4}\fontshape{#5}%
  \selectfont}%
\fi\endgroup%
{\renewcommand{\dashlinestretch}{30}
\begin{picture}(3391,1893)(0,-10)
\put(503,987){\ellipse{990}{540}}
\put(2888,987){\ellipse{990}{540}}
\path(503,1707)(503,1257)
\blacken\path(443.000,1437.000)(503.000,1257.000)(563.000,1437.000)(503.000,1383.000)(443.000,1437.000)
\path(2438,1077)(2436,1079)(2431,1083)
	(2422,1091)(2410,1102)(2393,1116)
	(2373,1132)(2351,1150)(2327,1169)
	(2302,1187)(2278,1204)(2253,1221)
	(2230,1235)(2207,1249)(2184,1260)
	(2161,1271)(2138,1280)(2114,1288)
	(2089,1295)(2063,1302)(2042,1307)
	(2020,1311)(1997,1315)(1973,1318)
	(1948,1322)(1922,1325)(1895,1327)
	(1867,1329)(1838,1331)(1808,1333)
	(1778,1334)(1748,1334)(1717,1335)
	(1686,1334)(1656,1334)(1625,1333)
	(1596,1331)(1566,1329)(1538,1327)
	(1510,1325)(1484,1322)(1458,1318)
	(1434,1315)(1410,1311)(1387,1307)
	(1366,1302)(1338,1295)(1312,1288)
	(1287,1280)(1262,1271)(1238,1260)
	(1213,1249)(1188,1235)(1161,1221)
	(1135,1204)(1107,1187)(1079,1169)
	(1052,1150)(1027,1132)(1004,1116)
	(985,1102)(953,1077)
\path(1057.906,1235.097)(953.000,1077.000)(1131.783,1140.534)
\path(953,897)(955,895)(961,891)
	(971,883)(985,872)(1003,858)
	(1026,842)(1050,824)(1077,805)
	(1103,787)(1130,770)(1155,753)
	(1180,739)(1204,725)(1227,714)
	(1250,703)(1272,694)(1295,686)
	(1319,679)(1343,672)(1364,667)
	(1386,662)(1409,658)(1432,655)
	(1457,651)(1482,648)(1509,646)
	(1536,644)(1564,642)(1593,641)
	(1622,641)(1652,641)(1681,641)
	(1710,642)(1739,644)(1767,646)
	(1795,648)(1822,651)(1848,655)
	(1873,659)(1898,664)(1921,668)
	(1944,674)(1966,679)(1989,686)
	(2011,694)(2033,702)(2055,712)
	(2077,722)(2100,734)(2123,747)
	(2147,762)(2172,778)(2198,795)
	(2224,814)(2252,833)(2278,853)
	(2304,872)(2328,890)(2348,906)
	(2365,920)(2393,942)
\path(2288.532,783.613)(2393.000,942.000)(2214.393,877.971)
\put(1223,87){\makebox(0,0)[lb]{\smash{{\SetFigFont{9}{10.8}{\familydefault}{\mddefault}{\updefault}$\text{reset } x$}}}}
\put(1223,357){\makebox(0,0)[lb]{\smash{{\SetFigFont{9}{10.8}{\familydefault}{\mddefault}{\updefault}$x=2$}}}}
\put(1583,897){\makebox(0,0)[lb]{\smash{{\SetFigFont{11}{13.2}{\familydefault}{\mddefault}{\updefault}$\A_4$}}}}
\put(2843,897){\makebox(0,0)[lb]{\smash{{\SetFigFont{9}{10.8}{\familydefault}{\mddefault}{\updefault}$b$}}}}
\put(413,897){\makebox(0,0)[lb]{\smash{{\SetFigFont{9}{10.8}{\familydefault}{\mddefault}{\updefault}$a$}}}}
\put(1223,1437){\makebox(0,0)[lb]{\smash{{\SetFigFont{9}{10.8}{\familydefault}{\mddefault}{\updefault}$\text{reset } x$}}}}
\put(1223,1707){\makebox(0,0)[lb]{\smash{{\SetFigFont{9}{10.8}{\familydefault}{\mddefault}{\updefault}$x=2$}}}}
\end{picture}
}}
    \end{minipage}
   \caption{Two timed automata $\A_3, \A_4$.}
  \label{figure:ExampleThree} 
\end{figure}
Consider the two timed automata in Figure~\ref{figure:ExampleThree}.
 The value of the maximal time difference
 quantitative timed simulation distance
$\simfunc_{\maxdiff}$ for the
state pai  $\left(\tuple{a,x=0}^{\A_3}, \tuple{a,x=0}^{\A_4}\right)$
is $\infty$, since timing mismatch corresponding to the $n$-th transition is
$n$ (the $n$-th transition in $\A_3$ occurs at global time $n$, the 
$n$-th transition in $\A_4$ occurs at global time $2\cdot n$).
We depict the timelines in Figure~\ref{figure:exampleTimeline}.
\begin{figure}[h]
\vspace{-1em}
\scalebox{1} 
{
\begin{pspicture}(0,-2.168125)(8.329687,2.168125)
\definecolor{color681b}{rgb}{0.19215686274509805,0.0,1.0}
\definecolor{color683b}{rgb}{0.07450980392156863,0.49019607843137253,0.19215686274509805}
\definecolor{color685b}{rgb}{0.6392156862745098,0.023529411764705882,0.09803921568627451}
\definecolor{color687b}{rgb}{0.5843137254901961,0.08627450980392157,0.6901960784313725}
\definecolor{color689b}{rgb}{0.47058823529411764,0.29411764705882354,0.03137254901960784}
\definecolor{color691}{rgb}{0.7450980392156863,0.34901960784313724,0.12156862745098039}
\psline[linewidth=0.04cm,tbarsize=0.07055555cm 5.0,arrowsize=0.05291667cm 2.0,arrowlength=1.4,arrowinset=0.4]{|->}(0.0,1.5496875)(6.82,1.5496875)
\psline[linewidth=0.04cm,tbarsize=0.07055555cm 5.0,arrowsize=0.05291667cm 2.0,arrowlength=1.4,arrowinset=0.4]{|->}(0.0,0.3496875)(6.8,0.3496875)
\usefont{T1}{ptm}{m}{n}
\rput(6.6853123,1.0946875){$\A_3$ event timeline}
\psellipse[linewidth=0.04,dimen=outer,fillstyle=solid,fillcolor=color681b](0.58,1.5596875)(0.16,0.13)
\psellipse[linewidth=0.04,dimen=outer,fillstyle=solid,fillcolor=color683b](1.78,1.5596875)(0.16,0.13)
\psellipse[linewidth=0.04,dimen=outer,fillstyle=solid,fillcolor=color685b](2.38,1.5596875)(0.16,0.13)
\psellipse[linewidth=0.04,dimen=outer,fillstyle=solid,fillcolor=color687b](3.0,1.5596875)(0.16,0.13)
\psellipse[linewidth=0.04,dimen=outer,fillstyle=solid,fillcolor=color689b](2.4,0.3596875)(0.16,0.13)
\psline[linewidth=0.02cm,linecolor=color691,linestyle=dashed,dash=0.16cm 0.16cm](0.54,1.5496875)(1.18,0.3896875)
\psline[linewidth=0.02cm,linecolor=color691,linestyle=dashed,dash=0.16cm 0.16cm](1.22,1.5296875)(2.42,0.4096875)
\psline[linewidth=0.02cm,linecolor=color691,linestyle=dashed,dash=0.16cm 0.16cm](1.76,1.5896875)(3.62,0.3496875)
\psline[linewidth=0.02cm,linecolor=color691,linestyle=dashed,dash=0.16cm 0.16cm](2.42,1.5496875)(4.82,0.3696875)
\usefont{T1}{ptm}{m}{n}
\rput(6.7053127,-0.0853125){$\A_4$ event timeline}
\psellipse[linewidth=0.04,dimen=outer,fillstyle=solid,fillcolor=color681b](1.18,0.3596875)(0.16,0.13)
\psellipse[linewidth=0.04,dimen=outer,fillstyle=solid,fillcolor=color689b](1.2,1.5796875)(0.16,0.13)
\psellipse[linewidth=0.04,dimen=outer,fillstyle=solid,fillcolor=color683b](3.6,0.3796875)(0.16,0.13)
\psellipse[linewidth=0.04,dimen=outer,fillstyle=solid,fillcolor=color685b](4.78,0.3596875)(0.16,0.13)
\psline[linewidth=0.04cm,linecolor=color681b](1.6,-0.2503125)(2.2,-0.2503125)
\psline[linewidth=0.04cm,linecolor=color689b](1.6,-0.6503125)(2.76,-0.6503125)
\psline[linewidth=0.04cm,linecolor=color683b](1.58,-1.0503125)(3.4,-1.0503125)
\psline[linewidth=0.04cm,linecolor=color685b](1.6,-1.4303125)(4.0,-1.4303125)
\usefont{T1}{ptm}{m}{n}
\rput(2.929375,-1.9453125){Timing mismatch magnitude}
\psellipse[linewidth=0.04,dimen=outer,fillstyle=solid,fillcolor=color681b](1.0,-0.2603125)(0.16,0.13)
\psellipse[linewidth=0.04,dimen=outer,fillstyle=solid,fillcolor=color689b](1.0,-0.6603125)(0.16,0.13)
\psellipse[linewidth=0.04,dimen=outer,fillstyle=solid,fillcolor=color683b](1.0,-1.0403125)(0.16,0.13)
\psellipse[linewidth=0.04,dimen=outer,fillstyle=solid,fillcolor=color685b](1.0,-1.4603125)(0.16,0.13)
\usefont{T1}{ptm}{m}{n}
\rput(0.57140625,1.9746875){$b$}
\usefont{T1}{ptm}{m}{n}
\rput(1.7914063,1.9746875){$b$}
\usefont{T1}{ptm}{m}{n}
\rput(1.1814063,1.9746875){$a$}
\usefont{T1}{ptm}{m}{n}
\rput(2.3614063,1.9746875){$a$}
\usefont{T1}{ptm}{m}{n}
\rput(2.5114062,-0.2453125){$1$}
\usefont{T1}{ptm}{m}{n}
\rput(3.0514061,-0.6453125){$2$}
\usefont{T1}{ptm}{m}{n}
\rput(3.7114062,-1.0253125){$3$}
\usefont{T1}{ptm}{m}{n}
\rput(4.311406,-1.4453125){$4$}
\end{pspicture} 
}
\vspace*{-4mm}
 \caption{Timeline of  $\A_3, \A_4$ events.}
\label{figure:exampleTimeline}
 \vspace{-1em}
\end{figure}
\qed
\end{example}

\begin{example}[Eventual Maximal Time Difference]
Consider the two timed automata $\A_1$ and $\A_2$ in Figure~\ref{figure:ExampleOne}.
Let us look at the value of the \emph{eventual maximal time difference} 
 quantitative timed simulation distance
$\simfunc_{\limmaxdiff}$ for the
state pair  $\left(\tuple{a,x=0}^{\A_1}, \tuple{a,x=0}^{\A_2}\right)$.
The value is 
(1)~infinity
if the state of $\A_1$ does not \emph{time-abstract simulate} (combined with time-divergence  encoded as a fairness constraint to allow only time-divergent runs in $\A_1$) the state of $\A_2$;
(2)~the \emph{eventual} 
maximal time difference between matching transitions of $\A_1$ and
 $\A_2$ otherwise (ignoring the time differences amongst finite timed trace prefixes), 
amongst time-divergent runs.
In the automata $\A_1,\A_2$, there is a time mismatch only at the transitions from
$a$, and this transition can only occur before time $10$.
Once the executions reach the location $c$, the automaton $\A_2$ is able to match
the transitions of $\A_1$ at the exact times, with zero time discrepancy.
Thus, $\simfunc_{\limmaxdiff}$ denotes the ``steady-state'' time discrepancy between
$\A_1$, $\A_2$, and this value is zero for the
state pair $\left(\tuple{a,x=0}^{\A_1}, \tuple{a,x=0}^{\A_2}\right)$,
in contrast to the value of $9$ for $\simfunc_{\maxdiff}$ for the state pair.
Note that we ignore time-discrepancies for finite \emph{time} (by discarding Zeno
runs), not just finite
trace prefixes.
If we ignore only finite trace prefixes, then we would have obtained a value of $9$,
as $\A_1$ can loop on the location $b$ by preventing time from progressing (note that
the clock $x$ is not reset on the $b$ loop transition).
\qed
\end{example}

\begin{example}[Eventual Maximal Time Difference]
%
\begin{figure}[h]
\vspace{-1em}
 \begin{minipage}[t]{0.6\linewidth}
 \hspace*{-2mm}
      \strut\centerline{\setlength{\unitlength}{0.00049869in}
\begingroup\makeatletter\ifx\SetFigFont\undefined%
\gdef\SetFigFont#1#2#3#4#5{%
  \reset@font\fontsize{#1}{#2pt}%
  \fontfamily{#3}\fontseries{#4}\fontshape{#5}%
  \selectfont}%
\fi\endgroup%
{\renewcommand{\dashlinestretch}{30}
\begin{picture}(3533,2291)(0,-10)
\put(600,1544){\ellipse{990}{540}}
\put(3030,1544){\ellipse{990}{540}}
\put(1702,277){\ellipse{990}{540}}
\path(1095,1544)(2535,1544)
\path(2355.000,1484.000)(2535.000,1544.000)(2355.000,1604.000)
\path(1455,509)(600,1274)
\path(774.151,1198.691)(600.000,1274.000)(694.136,1109.262)
\path(3030,1274)(1950,509)
\path(2062.203,662.005)(1950.000,509.000)(2131.565,564.082)
\path(555,2264)(555,1814)
\blacken\path(495.000,1994.000)(555.000,1814.000)(615.000,1994.000)(555.000,1940.000)(495.000,1994.000)
\put(465,1454){\makebox(0,0)[lb]{\smash{{\SetFigFont{9}{10.8}{\familydefault}{\mddefault}{\updefault}$a$}}}}
\put(2895,1454){\makebox(0,0)[lb]{\smash{{\SetFigFont{9}{10.8}{\familydefault}{\mddefault}{\updefault}$b$}}}}
\put(1590,194){\makebox(0,0)[lb]{\smash{{\SetFigFont{9}{10.8}{\familydefault}{\mddefault}{\updefault}$c$}}}}
\put(1365,1634){\makebox(0,0)[lb]{\smash{{\SetFigFont{9}{10.8}{\familydefault}{\mddefault}{\updefault}$\text{reset } x$}}}}
\put(15,734){\makebox(0,0)[lb]{\smash{{\SetFigFont{9}{10.8}{\familydefault}{\mddefault}{\updefault}$x\leq 7$}}}}
\put(2580,689){\makebox(0,0)[lb]{\smash{{\SetFigFont{9}{10.8}{\familydefault}{\mddefault}{\updefault}$x\leq 1$}}}}
\put(1365,1949){\makebox(0,0)[lb]{\smash{{\SetFigFont{9}{10.8}{\familydefault}{\mddefault}{\updefault}$x\leq 10$}}}}
\put(2580,419){\makebox(0,0)[lb]{\smash{{\SetFigFont{9}{10.8}{\familydefault}{\mddefault}{\updefault}$\text{reset } x$}}}}
\put(15,419){\makebox(0,0)[lb]{\smash{{\SetFigFont{9}{10.8}{\familydefault}{\mddefault}{\updefault}$\text{reset } x$}}}}
\put(1455,1004){\makebox(0,0)[lb]{\smash{{\SetFigFont{11}{13.2}{\familydefault}{\mddefault}{\updefault}$\A_5$}}}}
\end{picture}
}}
    \end{minipage}
  \begin{minipage}[t]{0.35\linewidth}
 \hspace*{5mm}
    \strut\centerline{\setlength{\unitlength}{0.00049869in}
\begingroup\makeatletter\ifx\SetFigFont\undefined%
\gdef\SetFigFont#1#2#3#4#5{%
  \reset@font\fontsize{#1}{#2pt}%
  \fontfamily{#3}\fontseries{#4}\fontshape{#5}%
  \selectfont}%
\fi\endgroup%
{\renewcommand{\dashlinestretch}{30}
\begin{picture}(3443,2291)(0,-10)
\put(510,1544){\ellipse{990}{540}}
\put(2940,1544){\ellipse{990}{540}}
\put(1612,277){\ellipse{990}{540}}
\path(1005,1544)(2445,1544)
\path(2265.000,1484.000)(2445.000,1544.000)(2265.000,1604.000)
\path(1365,509)(510,1274)
\path(684.151,1198.691)(510.000,1274.000)(604.136,1109.262)
\path(2940,1274)(1860,509)
\path(1972.203,662.005)(1860.000,509.000)(2041.565,564.082)
\path(465,2264)(465,1814)
\blacken\path(405.000,1994.000)(465.000,1814.000)(525.000,1994.000)(465.000,1940.000)(405.000,1994.000)
\put(375,1454){\makebox(0,0)[lb]{\smash{{\SetFigFont{9}{10.8}{\familydefault}{\mddefault}{\updefault}$a$}}}}
\put(2805,1454){\makebox(0,0)[lb]{\smash{{\SetFigFont{9}{10.8}{\familydefault}{\mddefault}{\updefault}$b$}}}}
\put(1500,194){\makebox(0,0)[lb]{\smash{{\SetFigFont{9}{10.8}{\familydefault}{\mddefault}{\updefault}$c$}}}}
\put(1230,1949){\makebox(0,0)[lb]{\smash{{\SetFigFont{9}{10.8}{\familydefault}{\mddefault}{\updefault}$x\leq 1$}}}}
\put(1230,1679){\makebox(0,0)[lb]{\smash{{\SetFigFont{9}{10.8}{\familydefault}{\mddefault}{\updefault}$\text{reset } x$}}}}
\put(15,734){\makebox(0,0)[lb]{\smash{{\SetFigFont{9}{10.8}{\familydefault}{\mddefault}{\updefault}$x\leq 7$}}}}
\put(2580,734){\makebox(0,0)[lb]{\smash{{\SetFigFont{9}{10.8}{\familydefault}{\mddefault}{\updefault}$x\leq 10$}}}}
\put(2580,419){\makebox(0,0)[lb]{\smash{{\SetFigFont{9}{10.8}{\familydefault}{\mddefault}{\updefault}$\text{reset } x$}}}}
\put(15,419){\makebox(0,0)[lb]{\smash{{\SetFigFont{9}{10.8}{\familydefault}{\mddefault}{\updefault}$\text{reset } x$}}}}
\put(1500,1004){\makebox(0,0)[lb]{\smash{{\SetFigFont{11}{13.2}{\familydefault}{\mddefault}{\updefault}$\A_6$}}}}
\end{picture}
}}
  \end{minipage}
   \caption{Two timed automata $\A_5, \A_6$.}
  \label{figure:ExampleTwo} 
\end{figure}
Consider the two timed automata $\A_5$ and $\A_6$ in Figure~\ref{figure:ExampleTwo}.
Let us look at the value of the \emph{eventual maximal time difference} 
 quantitative timed simulation distance
$\simfunc_{\limmaxdiff}$ for the state pair
 $\left(\tuple{a,x=0}^{\A_5}, \tuple{a,x=0}^{\A_6}\right)$.
In this case, a time difference of $9$ occurs infinitely often in 
\emph{time-divergent} runs, 
(\emph{e.g.} between the paths 
$\tuple{a,x=0}^{\A_5} \stackrel{10}{\longrightarrow} \tuple{b,x=0}^{\A_5} 
\stackrel{0}{\longrightarrow} \tuple{c,x=0}^{\A_5} \stackrel{5}{\longrightarrow} \tuple{a,x=0}^{\A_5}\stackrel{10}
 {\longrightarrow} \cdots$ and
$\tuple{a,x=0}^{\A_6} \stackrel{1}{\longrightarrow} \tuple{b,x=0}^{\A_6} 
\stackrel{9}{\longrightarrow} \tuple{c,x=0}^{\A_6} \stackrel{5}{\longrightarrow} \tuple{a,x=0}^{\A_6}\stackrel{1}
{ \longrightarrow} \cdots$.
The maximal time difference of $9$ time units arises when taking the transitions from
the $a$-labelled states.
Thus, the value of $\simfunc_{\limmaxdiff}$ for the
state pair $\left(\tuple{a,x=0}^{\A_5}, \tuple{a,x=0}^{\A_6}\right)$ is $9$.
It can be checked that in this case, the value of  $\simfunc_{\maxdiff}$
for the state pair is also $9$.
\qed
\end{example}


\begin{example}[Average Time Difference]
Consider the two timed automata $\A_5$ and $\A_6$ in Figure~\ref{figure:ExampleTwo}.
Let us look at the value of the (long-run) \emph{average time difference} 
 quantitative timed simulation distance
$\simfunc_{\limavg}$ for the state pair
 $\left(\tuple{a,x=0}^{\A_5}, \tuple{a,x=0}^{\A_6}\right)$.
As usual, for the value to be finite, we require time-abstract simulation (with time-divergence).
If time-abstract simulation holds, we 
 take the average with respect to the number of transitions (over non-Zeno runs).
For the state pair, a time difference of $9$ occurs infinitely often, but this difference occurs
in only  one-third of the transitions (the transitions from $a$ locations).
For the transitions from $b$ and $c$, the time discrepancy is zero.
Thus, the value for $\simfunc_{\limavg}$ is $ \frac{9+0+0}{3} = 3$.
\qed
\end{example}

To compute the values of the three simulation functions, we use  the framework
of turn-based games on finite-state game graphs.
We introduce two new game theoretic objectives (these objectives are required for 
computing two of the quantitative simulation functions) on these game
graphs, namely, \emph{eventual debit-sum level}  
and \emph{average debit-sum level} objectives,
 and present novel solutions for both.
We need to consider the  sums of the weights encountered as in our
quantitative simulation functions, the global time is the sum of the time
durations of all the preceding transitions.

Eventual debit-sum level and average debit-sum level games are also 
interesting on their own. 
We next illustrate average debit-sum level games.
These games are played on two-player finite-state turn-based game graphs.
Each transition in the game graph incurs a cost (denoted by a negative weight), or
a reward (denoted by a positive weight).
These costs can be viewed as monetary losses, or monetary gains.
The debit-sum level at a stage in the game denotes the absolute value of the
monetary balance if the balance  is negative
(the balance is the sum of all the positive and negative costs and rewards).
The objective of player~1 is to have the lowest possible average debit-sum level.
As a financial application, consider
  the case when  banks have to
take overnight loans from the Central Bank  loan windows in case of need (these loans
need to be renewed each day the loan is not repaid). 
It is in the banks interests to minimize the average of the loan amount per day.

\begin{example}[Debit Sum-Level Turn Based Games]
\label{example:dummy}
\begin{figure}[t]
\vspace{-1em}
  \strut\centerline{\setlength{\unitlength}{0.00052493in}
\begingroup\makeatletter\ifx\SetFigFont\undefined%
\gdef\SetFigFont#1#2#3#4#5{%
  \reset@font\fontsize{#1}{#2pt}%
  \fontfamily{#3}\fontseries{#4}\fontshape{#5}%
  \selectfont}%
\fi\endgroup%
{\renewcommand{\dashlinestretch}{30}
\begin{picture}(7197,1892)(0,-10)
\put(1804,1599){\ellipse{990}{540}}
\put(1801,277){\ellipse{990}{540}}
\put(5164,1600){\ellipse{990}{540}}
\put(5164,295){\ellipse{990}{540}}
\put(6694,925){\ellipse{990}{540}}
\put(3507,925){\ellipse{990}{540}}
\put(503,909){\ellipse{990}{540}}
\path(3507,1645)(3507,1195)
\blacken\path(3447.000,1375.000)(3507.000,1195.000)(3567.000,1375.000)(3507.000,1321.000)(3447.000,1375.000)
\path(3057,1060)(2292,1510)
\path(2477.569,1470.452)(2292.000,1510.000)(2416.727,1367.020)
\path(4002,970)(4677,1510)
\path(4573.925,1350.703)(4677.000,1510.000)(4498.962,1444.407)
\path(2292,250)(3102,745)
\path(2979.696,599.942)(3102.000,745.000)(2917.122,702.336)
\path(5667,1510)(6432,1150)
\path(6243.585,1172.354)(6432.000,1150.000)(6294.680,1280.933)
\path(6252,790)(5622,430)
\path(5748.515,571.400)(5622.000,430.000)(5808.052,467.210)
\path(4722,340)(3957,835)
\path(4140.718,787.589)(3957.000,835.000)(4075.528,686.841)
\path(1347,1555)(762,1105)
\path(868.090,1262.305)(762.000,1105.000)(941.255,1167.190)
\path(717,655)(1302,250)
\path(1119.853,303.126)(1302.000,250.000)(1188.158,401.789)
\put(1707,1510){\makebox(0,0)[lb]{\smash{{\SetFigFont{9}{10.8}{\familydefault}{\mddefault}{\updefault}$b_1$}}}}
\put(402,835){\makebox(0,0)[lb]{\smash{{\SetFigFont{9}{10.8}{\familydefault}{\mddefault}{\updefault}$b_2$}}}}
\put(5037,1510){\makebox(0,0)[lb]{\smash{{\SetFigFont{9}{10.8}{\familydefault}{\mddefault}{\updefault}$b_4$}}}}
\put(6522,835){\makebox(0,0)[lb]{\smash{{\SetFigFont{9}{10.8}{\familydefault}{\mddefault}{\updefault}$b_5$}}}}
\put(5037,250){\makebox(0,0)[lb]{\smash{{\SetFigFont{9}{10.8}{\familydefault}{\mddefault}{\updefault}$b_6$}}}}
\put(3417,880){\makebox(0,0)[lb]{\smash{{\SetFigFont{9}{10.8}{\familydefault}{\mddefault}{\updefault}$a$}}}}
\put(492,1420){\makebox(0,0)[lb]{\smash{{\SetFigFont{9}{10.8}{\familydefault}{\mddefault}{\updefault}$4$}}}}
\put(4047,1375){\makebox(0,0)[lb]{\smash{{\SetFigFont{9}{10.8}{\familydefault}{\mddefault}{\updefault}$-5$}}}}
\put(5982,1510){\makebox(0,0)[lb]{\smash{{\SetFigFont{9}{10.8}{\familydefault}{\mddefault}{\updefault}$2$}}}}
\put(6027,385){\makebox(0,0)[lb]{\smash{{\SetFigFont{9}{10.8}{\familydefault}{\mddefault}{\updefault}$2$}}}}
\put(2652,1375){\makebox(0,0)[lb]{\smash{{\SetFigFont{9}{10.8}{\familydefault}{\mddefault}{\updefault}$-9$}}}}
\put(582,295){\makebox(0,0)[lb]{\smash{{\SetFigFont{9}{10.8}{\familydefault}{\mddefault}{\updefault}$4$}}}}
\put(1662,205){\makebox(0,0)[lb]{\smash{{\SetFigFont{9}{10.8}{\familydefault}{\mddefault}{\updefault}$b_3$}}}}
\put(2697,250){\makebox(0,0)[lb]{\smash{{\SetFigFont{9}{10.8}{\familydefault}{\mddefault}{\updefault}$1$}}}}
\put(4002,340){\makebox(0,0)[lb]{\smash{{\SetFigFont{9}{10.8}{\familydefault}{\mddefault}{\updefault}$1$}}}}
\end{picture}
}}
  \caption{Debit sum-level game}
  \label{figure:exampleDebitSum}
\vspace{-1em}
\end{figure}
Consider the turn-based game depicted in Figure~\ref{figure:exampleDebitSum}.
All locations are player-1 locations.
The numbers on the edges denote the costs or rewards that player-1 gets when
that transition is taken. 
Positive weights denotes rewards, and negative weights denotes costs.
Viewing the weights as monetary transactions, and starting with a monetary 
balance of zero
at $a$, 
if player~1 loops around the left loop, then the trace, together with the monetary 
balances is:
$\left((a, 0)\  (b_1,-9)\ (b_2,-5)\  (b_3,-1)\right)^{\omega}$, where the numbers
denote the accumulated balances during the run of the play.
The average negative balance, \emph{i.e}, the average debit-sum level (per unit
location visit), is
$\frac{0 + 9 + 5 +1}{4} = \frac{15}{4}$.
If  player~1 loops around the right loop, then the trace, together with the balances is:
$\left((a, 0)\  (b_4,-5)\  (b_5,-3)\  (b_6,-1)\  \right)^{\omega}$.
The average negative balance is
$\frac{0 + 5 + 3 +1}{4} = \frac{9}{4}$.
Thus the optimum average debit sum-level value for player~1 is $9/4$, and the optimum
strategy is to loop around the right-hand side, where it needs to borrow less, on average.
\qed
\end{example}

\noindent\textbf{Our Contributions.}
Our main contributions in the present work are as follows.
\begin{compactenum}[$\star$]
\item We define  three quantitative refinement metrics incorporating Zenoness conditions
semantically, that is our refinement metrics ignore artificial Zeno runs present in
systems due to modelling artifacts.
We also show that these quantitative functions are actually (directed) metrics.

\item We define 
quantitative timed simulation functions corresponding to the refinement
metrics using a game theoretic formulation.
These quantitative simulation functions also incorporate Zenoness conditions
for obtaining physically meaningful system differences.
As far we know, this is is the first work which handles Zeno runs when
computing simulation functions.

\item We present \emph{algorithms} for computing all the defined quantitative timed simulation
functions to within any desired degree of accuracy for \emph{any} given timed automaton. 

\item We introduce new game theoretic objectives on
finite-state turn-based game graphs,  namely, eventual debit-sum level objectives and average debit-sum level objectives,
 and present novel solutions for both.
These new objectives are required in the computation of the  defined quantitative simulation functions.

\end{compactenum}
We have considered the (more challenging) framework of global event times
in our quantitative simulation functions.
Our solution framework is also applicable where the mismatches are only with
respect to transition \emph{durations} (simple algorithms are applicable in this case).
Our algorithms can easily be generalized to consider quantitative simulation functions
in which an observation $\sigma$ is allowed to match a different observation $\sigma'$, but with some matching penalty in case $\sigma\neq \sigma'$ (the
penalty being in addition to the timing mismatch of $\sigma$,  $\sigma'$).
Thus, our algorithms apply to the computation of quantitative simulation functions
which consider the \emph{Skorokhod metric}~\cite{Jacod} over mismatches.

\smallskip\noindent\textbf{Related Work.}
There has been a recent body of work on the
theory of \emph{approximate bisimulation} for continuous
and switched systems (\emph{e.g.}~\cite{GirardP11, GirardSwitch, GLPappas08, Girard201334, Girard2012947, tab08}).
The focus of the  approach in~\cite{GirardSwitch, GLPappas08}
is on systems with real-valued outputs
and the approximations are targeted towards 
 output values which change in a continuous fashion.
 The focus is not on the
 timing aspect.
The simulation relations are constrained to match output values at exactly the same
sample points, thus there is no mechanism to incorporate  the time discrepancies.
The work in~\cite{Girard201334} uses quantized system values in the bisimulation relation, 
and shows
how this can be used to synthesize controllers of lower complexity.
Approximate bisimilar models have also been used in symbolic frameworks to design controllers for various classes of systems and desired properties, see, \emph{e.g.}~\cite{Girard2012947, tab08}.
The work in~\cite{QueselFD11} presents a similarity relation for hybrid systems (which are more
general than the timed automata in our work)  where the
approximation is  with respect to the maximal timing mismatch over runs, 
as well as output values.
Computation of similarity relations is reduced to solving a class of derived hybrid games, 
however, these games are not decidable.
The work gives a sufficient condition which ensures decidability.
For timed systems, the work in ~\cite{HMP05} presented maximal time difference
quantitative timed simulation functions, however, Zeno issues were ignored.
Our solutions for the new objectives on finite-state game graphs builds on 
previous work on mean payoff parity games, multi-dimensional mean payoff, and
energy games~\cite{Bouyer_ATVA11, Chaloupka, CHJ05, CDHR10, CD10}.
The new game objectives presented in our work, that are
required for the quantitative timed
simulation functions, were previously unstudied, and require new ideas in their
solutions.

\section{Quantitative Timed Trace Difference and Refinement  Metrics}
\label{sec:refine}

In this section we define {\em quantitative} refinement functions on timed systems
which quantify timing mismatches. These functions  allow approximate
matching of timed traces and generalize timed and untimed refinement  relations.

\smallskip\noindent\textbf{Timed Transition System (TTS).}
A  \emph{timed transition system} (\tts) is a tuple 
$A = \tuple{S , \Sigma, \rightarrow , \mu , S_0}$ where
\begin{compactitem}
\item $S$ is the set of states.
\item $\Sigma$ is a set of atomic propositions (the observations).
\item $\rightarrow\, \subseteq\, S\times \reals^+ \times S$ is the transition relation such that
for all $s\in S$ there exists at least one $s'\in S$ such that for some $\Delta$, we have
that $(s,\Delta, s') $ belongs to $\rightarrow$.
\item $ \mu:\, S\, \mapsto\, 2^{\Sigma}$ is the observation map which assigns a truth value to atomic propositions in each state.
\item $S_0\,\subseteq\, S$ is the set of initial states.
\end{compactitem}
We write $s\stackrel{t}{\rightarrow}s'$ if $(s,t,s')$ belongs to $\rightarrow$.
A {\em state trajectory} is an infinite sequence 
$s_0 \stackrel{t_0}{\rightarrow} s_1 \stackrel{t_1}{\rightarrow} \cdots$,
where for each $j\geq 0$, we have $s_j \stackrel{t_j}{\rightarrow} s_{j+1}$. 
The state trajectory is {\em initialized} if $s_0\in S_0$ is an initial state.
A state trajectory $s_0\smash{\stackrel{t_0}{\rightarrow}} s_1\cdots$ induces 
a {\em trace} given by the observation sequence 
$\mu(s_0) \stackrel{t_0}{\rightarrow} \mu(s_1) \stackrel{t_1}{\rightarrow} \cdots$.
To emphasize the initial state, we say $s_0$-trace for a trace 
induced by a state trajectory
starting from $s_0$.
A trace is initialized if it is induced by an initialized state trajectory.
 Given a trace $\trace$ induced by a state trajectory
 $s_0 \stackrel{t_0}{\rightarrow} s_1 \stackrel{t_1}{\rightarrow} \cdots$,
let
$\mtime_{\tr}[i]$ denote $\sum_{j=0}^i t_j$, \emph{i.e.} the time of the $i$-th
transition.
The trace  $\tr$ is \emph{time-convergent} or \emph{Zeno} if 
$\lim_{i\rightarrow \infty}\mtime_{\tr}[i]$ is finite; otherwise it is \emph{time-divergent} or 
\emph{non-Zeno}.
We denote the sets  allof time-divergent initialized 
trajectories, and of time-divergent initialized traces
of a timed transition system $A$ by $\td(A)$ and $\mu(\td(A))$ respectively,
and the sets of intialized trajectories, and of all initialized traces of $A$ by $\trajecs(A)$
and $\mu(\trajecs(A))$ respectively.
A \tts is \emph{well-formed} if from every  $s_0\in S_0$, there \emph{exists} a $s_0$-trace
in   $\td(A)$.
We consider only well-formed \tts{s} in the sequel.
The  \tts $A_{\myr}$ \emph{refines} or \emph{implements} the \tts $A_{\mys}$ (the specification)
if every initialized trace of $A_{\myr}$ is also an initialized trace of $A_{\mys}$.
We first define various quantitative  notions of refinement
that quantify 
if the behavior of an implementation \tts is ``close enough'' to a specification \tts.
We begin by defining several  metrics on trace differences and refinements.

\medskip\noindent\textbf{Maximal Timing Mismatch Trace Difference Distance.}
Given two traces $\trace = \sigma_0 \stackrel{t_0}{\rightarrow}\sigma_1  \stackrel{t_1}{\rightarrow} \sigma_2 \dots$ and 
$\trace' = \sigma'_0 \stackrel{t'_0}{\rightarrow}\sigma'_1  \stackrel{t'_1}{\rightarrow} \sigma'_2 \dots$, 
the maximal timing mismatch trace difference distance $\dist_{\maxdiff}(\trace,\trace')$
is defined by
\[
 \dist_{\maxdiff}(\trace,\trace')  = \left\{
\begin{array}{ll}
\infty & \text{if }\sigma_n\neq \sigma'_n \text{ for some } n\\
 \sup_n\set{|\mtime_{\trace}[n] - \mtime_{\trace'}[n]|} &  \text{otherwise} 
\end{array}
\right.
\]
The distance $\dist_{\maxdiff}(\trace,\trace')$ indicates the maximal time
discrepancy between matching observations in the two traces $\trace$ and $\trace'$.
\begin{proposition}
\label{proposition:DistMaxDiffMetric}
The function  $\dist_{\maxdiff}()$ is a metric on timed traces.\qedhere\qed
\end{proposition}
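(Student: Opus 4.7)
The plan is to verify the four defining properties of an extended metric (allowing the value $\infty$): non-negativity, identity of indiscernibles, symmetry, and the triangle inequality. Each property follows quickly from the two-case definition of $\dist_{\maxdiff}$, so the proof will essentially be a case analysis on whether the observation sequences agree.

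First I would handle the easy axioms. Non-negativity is immediate, since in either branch of the definition the value is either $\infty$ or a supremum of absolute values. For the identity of indiscernibles, if $\trace = \trace'$ then the observation sequences agree and $\mtime_{\trace}[n] = \mtime_{\trace'}[n]$ for all $n$, giving distance $0$; conversely, if $\dist_{\maxdiff}(\trace,\trace') = 0$, the value cannot be $\infty$, so the observation sequences agree, and the supremum being zero forces $\mtime_{\trace}[n] = \mtime_{\trace'}[n]$ for every $n$, which means the time increments $t_n$ agree as well, so $\trace = \trace'$. Symmetry is clear because the predicate ``$\sigma_n \neq \sigma'_n$ for some $n$'' is symmetric and $|a-b| = |b-a|$.

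The only step requiring any real care is the triangle inequality $\dist_{\maxdiff}(\trace,\trace'') \leq \dist_{\maxdiff}(\trace,\trace') + \dist_{\maxdiff}(\trace',\trace'')$. I would split on whether the right-hand side is finite. If either summand is $\infty$, the inequality holds trivially in the extended reals. Otherwise both summands are finite, which forces the observation sequences of $\trace$, $\trace'$, and $\trace''$ to agree pointwise, so in particular $\trace$ and $\trace''$ have matching observations and the left-hand side is computed by the second branch of the definition. For each $n$, the ordinary triangle inequality for $|\cdot|$ on $\reals$ gives
\[
|\mtime_{\trace}[n] - \mtime_{\trace''}[n]| \;\leq\; |\mtime_{\trace}[n] - \mtime_{\trace'}[n]| + |\mtime_{\trace'}[n] - \mtime_{\trace''}[n]|,
\]
and taking the supremum over $n$ on the left, and bounding each term on the right by its supremum, yields the desired inequality.

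The main (mild) obstacle is simply the bookkeeping around the $\infty$ case and ensuring the contrapositive direction of the triangle inequality is handled cleanly; there are no deep arguments. Since the function is symmetric in its two arguments, no separate ``directed'' treatment is needed, in contrast to what will be necessary later for the simulation distances.
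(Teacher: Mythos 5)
Your proof is correct and follows essentially the same route the paper takes for its analogous propositions (the paper omits the proof of this one, but proves $\dist_{\limmaxdiff}$ and $\dist_{\limavg}$ by exactly this pattern: dispose of the $\infty$/mismatched-observation cases, then apply the pointwise triangle inequality and pass to the supremum). Your additional verification of non-negativity, symmetry, and identity of indiscernibles is sound and simply makes explicit what the paper leaves implicit.
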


\medskip\noindent\textbf{Refinement Distance Induced by $\dist_{\maxdiff}$.}
The trace difference metric $\dist_{\maxdiff}$ induces a {\em refinement distance} 
between two \tts{}s.
Given two \tts{}s $A_{\myr}$ (the refined system) and $A_{\mys}$ (the specification), 
with initial state sets
 $S_{\myr},S_{\mys}$ respectively, the 
\emph{refinement distance} of $A_{\myr}$ with respect to $A_{\mys}$ induced by $\dist_{\maxdiff}$
is given by
\[
\refine_{\maxdiff}(A_{\myr}, A_{\mys}) =
\sup_{\trace_{q_{\myr}}\in \mu(\td(A_{\myr}))} 
\inf_{\trace_{q_{\mys}} \in \mu(\trajecs(A_{\mys}))}
\dist_{\maxdiff}(\trace_{q_{\myr}},\trace_{q_{\mys}})
\]
where $\trace_{q_{\myr}}$ (respectively, $\trace_{q_{\mys}}$) is a $q_{\myr}$-trace (respectively, $q_{\mys}$-trace) for some 
$q_{\myr}\in S_{\myr}$
(respectively, $q_{\mys}\in S_{\mys}$).
We quantify over time-divergent traces of the refinement as Zeno traces are physically 
unrealizable\footnote{
Note that we do not need to put any time-divergence requirement on the traces from $A_{\mys}$; the ``$\inf$''
operator ensures that only time-divergent traces are considered from the well-fomed \tts 
$A_{\mys}$.
The distance $\dist_{\maxdiff}(\trace_{q_{\myr}},\trace_{q_{\mys}})$ is infinite if one trace is time-divergent and the other Zeno.}.
Notice that this refinement distance is asymmetric: it is a {\em directed distance}~\cite{AlfaroFS09}.
The refinement distance  $\refine_{\maxdiff}(A_{\myr}, A_{\mys}) $ indicates quantitatively 
how
well initialized time-divergent  traces of $A_{\myr}$ match corresponding initialized traces of 
$A_{\mys}$ with
respect to the $\dist_{\maxdiff}$ trace difference metric.

\begin{proposition}
The function  $\refine_{\maxdiff}()$ is a directed metric on timed transition systems.
\end{proposition}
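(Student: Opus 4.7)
My plan is to verify the three defining properties of a directed metric for $\refine_{\maxdiff}$: non-negativity, $\refine_{\maxdiff}(A, A) = 0$, and the triangle inequality. Non-negativity is immediate since $\dist_{\maxdiff} \geq 0$ by Proposition~\ref{proposition:DistMaxDiffMetric}, so a supremum of infima of such quantities is non-negative. For $\refine_{\maxdiff}(A, A) = 0$, I would observe that for every initialized time-divergent trace $\trace$ of $A$, the trace $\trace$ itself belongs to $\mu(\trajecs(A))$, so the inner infimum at each such outer witness is bounded above by $\dist_{\maxdiff}(\trace, \trace) = 0$; combined with non-negativity, the outer supremum equals $0$.

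The substantive step is the triangle inequality $\refine_{\maxdiff}(A_1, A_3) \leq \refine_{\maxdiff}(A_1, A_2) + \refine_{\maxdiff}(A_2, A_3)$. If either summand on the right is $\infty$, the bound is vacuous, so I assume both are finite. I would fix $\epsilon > 0$ and an arbitrary initialized time-divergent trace $\trace_1$ of $A_1$. By definition of $\refine_{\maxdiff}(A_1, A_2)$ as a sup-inf, there exists an initialized trace $\trace_2$ of $A_2$ with $\dist_{\maxdiff}(\trace_1, \trace_2) \leq \refine_{\maxdiff}(A_1, A_2) + \epsilon$. The crucial observation is that this $\trace_2$ must itself be time-divergent: otherwise $\trace_1$ time-diverges while $\trace_2$ is Zeno, forcing $\dist_{\maxdiff}(\trace_1, \trace_2) = \infty$ (as the footnote following the definition records), contradicting finiteness. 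Hence $\trace_2 \in \mu(\td(A_2))$ is an admissible outer witness in $\refine_{\maxdiff}(A_2, A_3)$, and so there exists an initialized trace $\trace_3$ of $A_3$ with $\dist_{\maxdiff}(\trace_2, \trace_3) \leq \refine_{\maxdiff}(A_2, A_3) + \epsilon$. Applying the triangle inequality of $\dist_{\maxdiff}$ from Proposition~\ref{proposition:DistMaxDiffMetric}, then taking the inner infimum over $\trace_3$ and the outer supremum over $\trace_1$, yields $\refine_{\maxdiff}(A_1, A_3) \leq \refine_{\maxdiff}(A_1, A_2) + \refine_{\maxdiff}(A_2, A_3) + 2\epsilon$; letting $\epsilon \to 0$ closes the argument.

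The main obstacle is the interaction between the sup-inf structure of $\refine_{\maxdiff}$ and the asymmetric handling of Zenoness: the outer sup ranges only over time-divergent traces, while the inner inf ranges over \emph{all} initialized traces of the specification. Chaining the triangle inequality requires the intermediate trace $\trace_2$, produced by the inner inf of $\refine_{\maxdiff}(A_1, A_2)$, to be admissible as an outer witness of $\refine_{\maxdiff}(A_2, A_3)$; the $\infty$-penalty for Zeno/time-divergent mismatches guarantees exactly this, bridging the two quantifier patterns. Well-formedness of the intermediate system is not needed for this step, since $\trace_2$ is produced directly and $A_3$ being well-formed ensures that the inner infimum in $\refine_{\maxdiff}(A_2, A_3)$ is taken over a nonempty set.
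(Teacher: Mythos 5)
Your proposal is correct and follows essentially the same route as the paper: the paper proves this proposition by reference to the proof of Proposition~\ref{prop:refinelimmaxdiffmetric}, which is exactly your $\epsilon$-argument — pick a near-optimal intermediate trace $\trace_2$, observe it must be time-divergent (else the distance is infinite), chain through the triangle inequality of the trace metric, and take the infimum, supremum, and $\epsilon \to 0$. Your additional explicit checks of non-negativity and reflexivity are fine and consistent with the definitions.
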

\begin{proof}
The proof is similar to the proof of Proposition~\ref{prop:refinelimmaxdiffmetric}.
\qedhere
\end{proof}

We next define several other trace difference metrics, which in turn induce their
own  refinement distances on \tts{s}.

\medskip\noindent\textbf{Limit-Maximal Timing Mismatch Trace Difference Distance.}
Given two traces $\trace = \sigma_0 \stackrel{t_0}{\rightarrow}\sigma_1  \stackrel{t_1}{\rightarrow} \sigma_2 \dots$ and 
$\trace' = \sigma'_0 \stackrel{t'_0}{\rightarrow}\sigma'_1  \stackrel{t'_1}{\rightarrow} \sigma'_2 \dots$, 
the limit-maximal timing mismatch trace difference distance $\dist_{\limmaxdiff}(\trace,\trace')$
is defined by 
\[\dist_{\limmaxdiff}(\trace,\trace')  = 
  \left\{
\begin{array}{ll}
\infty & \text{if }\sigma_n\neq \sigma'_n 
  \text{ for some } n\\
 \lim_{M\rightarrow \infty}\sup_{n\geq M}\set{|\mtime_{\trace}[n] - \mtime_{\trace'}[n]|} &  \text{otherwise} 
\end{array}
\right.
\]
The distance $\dist_{\limmaxdiff}(\trace,\trace')$ indicates the limit-maximal time
discrepancy between matching observations in the two traces $\trace$ and $\trace'$.
That is, it indicates the eventual ``steady state'' maximal time
discrepancy, ignoring any initial spikes in the time discrepancy between the two traces (we still require all observations to be matched).

In the sequel, we view limits as having values on the extended real line (\emph{i.e.} in 
$\reals \cup \set{-\infty, \infty}$).
\begin{lemma}
\label{lemma:sumlimit}
Let $a_n$ and $b_n$ both be non-decreasing or both be  non-increasing
 sequences of real numbers for $n\geq 0$. 
Then $\lim_{n\rightarrow \infty} (a_n)$ and  $\lim_{n\rightarrow \infty} (b_n)$ both exist and 
$\lim_{n\rightarrow \infty} (a_n) \ +\ \lim_{n\rightarrow \infty} (b_n) \ =\ 
\lim_{n\rightarrow \infty}(a_n + b_n)$.\qed
\end{lemma}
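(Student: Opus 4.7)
The plan is to invoke the monotone convergence theorem on the extended real line and then split into cases depending on whether the limits are finite or infinite; the hypothesis that $a_n$ and $b_n$ are monotone in the \emph{same} direction is exactly what prevents the ill-defined case $\infty + (-\infty)$ from arising.

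First I would establish existence. Since every monotone real sequence converges on the extended real line $\reals\cup\{-\infty,+\infty\}$, the limits $L_a = \lim_n a_n$ and $L_b = \lim_n b_n$ exist. In the non-decreasing case both $L_a,L_b$ lie in $(-\infty,+\infty]$ (each is bounded below by the initial term), and in the non-increasing case both lie in $[-\infty,+\infty)$. In particular, $L_a+L_b$ is well-defined in the extended reals. Moreover, $a_n+b_n$ is itself monotone in the same direction, so $\lim_n(a_n+b_n)$ also exists in the extended reals.

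Next I would handle the finite case: if both $L_a$ and $L_b$ are finite, the conclusion is the standard sum rule for convergent real sequences, i.e., given $\eps>0$ choose $N$ so that $|a_n-L_a|<\eps/2$ and $|b_n-L_b|<\eps/2$ for $n\geq N$, and apply the triangle inequality to $(a_n+b_n)-(L_a+L_b)$.

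The remaining case is when at least one limit is infinite, say $L_a=+\infty$ in the non-decreasing setting (the non-increasing/$-\infty$ case is symmetric). Here $b_n\geq b_0$ for all $n$, so $a_n+b_n\geq a_n+b_0$, and the right-hand side tends to $+\infty$; hence $\lim_n(a_n+b_n)=+\infty=L_a+L_b$, as required. The only subtlety — and the one worth being careful about — is precisely this extended-real bookkeeping: one must note that a non-decreasing real sequence cannot have limit $-\infty$, so in the $L_a=+\infty$ case the sum $L_a+L_b$ cannot degenerate into $\infty-\infty$, and likewise for the non-increasing direction. Once this is observed, no further work is needed and the lemma follows.
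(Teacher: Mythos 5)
Your proof is correct. The paper in fact omits any proof of this lemma (it is stated with the limits understood on the extended real line and marked as immediate), and your argument is precisely the routine verification being taken for granted: monotone sequences converge in $\reals\cup\{-\infty,+\infty\}$, the finite case is the usual sum rule, and the same-direction monotonicity rules out the indeterminate form $\infty+(-\infty)$, with the infinite case settled by the bound $a_n+b_n\geq a_n+b_0$. Nothing further is needed.
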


\begin{lemma}
\label{lemma:supsum}
Let $a_n$ and $b_n$ be real numbers for $n\geq 0$ and let $M\geq 0$. 
Then 
$\sup_{n\geq M} \set{a_n} \ +\ \sup_{n\geq M} \set{b_n} \ \geq\ 
\sup_{n\geq M}\set{a_n + b_n}$.\qed
\end{lemma}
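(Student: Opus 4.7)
The plan is to prove the supremum-of-sum inequality by the standard pointwise-bound-then-take-supremum argument, which is the textbook approach for this kind of inequality in real analysis. The key observation is that the supremum is an upper bound on every term of the sequence, so one can bound each summand $a_n, b_n$ individually before combining.

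Concretely, I would first fix an arbitrary $n \geq M$. By the definition of supremum applied to the sets $\set{a_m : m \geq M}$ and $\set{b_m : m \geq M}$, I have $a_n \leq \sup_{m \geq M} \set{a_m}$ and $b_n \leq \sup_{m \geq M} \set{b_m}$. Adding these two real-valued inequalities yields
\[
a_n + b_n \;\leq\; \sup_{m \geq M} \set{a_m} \;+\; \sup_{m \geq M} \set{b_m}.
\]
Since $n \geq M$ was arbitrary, the right-hand side is an upper bound for the set $\set{a_n + b_n : n \geq M}$, so taking the supremum over $n \geq M$ on the left preserves the inequality and gives exactly the claimed bound.

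A small point to address is the handling of the extended real line, since the preceding discussion in the excerpt allows suprema to take the value $+\infty$. If either $\sup_{n \geq M} \set{a_n}$ or $\sup_{n \geq M} \set{b_n}$ equals $+\infty$, the right-hand side is $+\infty$ and the inequality is trivial; otherwise both suprema are finite real numbers and the pointwise addition step above is valid in the ordinary sense. There is no obstacle to speak of here, and indeed the reverse inequality does not hold in general (equality can fail when the suprema of $a_n$ and $b_n$ are attained at different indices), which is precisely why the lemma is stated as an inequality rather than an equality and why it is useful as a one-sided bound in the subsequent trace-distance arguments.
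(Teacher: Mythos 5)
Your proof is correct: the pointwise bound $a_n + b_n \leq \sup_{m\geq M}\set{a_m} + \sup_{m\geq M}\set{b_m}$ followed by taking the supremum over $n \geq M$, with the $+\infty$ case handled separately, is exactly the standard argument the paper leaves implicit (the lemma is stated there without proof as an elementary fact). Nothing is missing.
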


\begin{proposition}
\label{proposition:LimMaxDiffMetric}
The function  $\dist_{\limmaxdiff}()$ is a metric on timed traces.
\end{proposition}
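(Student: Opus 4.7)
The plan is to verify the three nontrivial metric axioms; non-negativity is immediate from the definition (the expression inside $\lim\sup$ is an absolute value, and the only alternative value is $\infty$). Symmetry follows because $|\mtime_\trace[n] - \mtime_{\trace'}[n]|$ is symmetric in the two traces and the condition ``$\sigma_n \neq \sigma'_n$ for some $n$'' is also symmetric in $\trace,\trace'$. Identity of indiscernibles ($\dist_{\limmaxdiff}(\trace,\trace)=0$) is trivial since all pointwise differences vanish; the converse direction must be interpreted modulo the natural equivalence where two traces with identical observation sequences whose timing discrepancies tend to $0$ are identified (i.e., we obtain at least a pseudometric, and a genuine metric on the quotient).

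The main work is the triangle inequality $\dist_{\limmaxdiff}(\trace_1,\trace_3)\le \dist_{\limmaxdiff}(\trace_1,\trace_2)+\dist_{\limmaxdiff}(\trace_2,\trace_3)$. First I would dispose of the infinite cases: if either pair $(\trace_1,\trace_2)$ or $(\trace_2,\trace_3)$ disagrees in its observation sequence, the RHS is $\infty$ and we are done (using the extended-real convention $x+\infty=\infty$). If both pairs share the same observation sequence, then so does $(\trace_1,\trace_3)$, and the argument reduces to an inequality about the time sequences.

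In that case I would apply the pointwise triangle inequality
\[
|\mtime_{\trace_1}[n]-\mtime_{\trace_3}[n]| \;\le\; |\mtime_{\trace_1}[n]-\mtime_{\trace_2}[n]| + |\mtime_{\trace_2}[n]-\mtime_{\trace_3}[n]|,
\]
then take $\sup_{n\ge M}$ on both sides and invoke Lemma~\ref{lemma:supsum} to pull the supremum through the sum on the right, yielding
\[
\sup_{n\ge M}|\mtime_{\trace_1}[n]-\mtime_{\trace_3}[n]| \;\le\; \sup_{n\ge M}|\mtime_{\trace_1}[n]-\mtime_{\trace_2}[n]| + \sup_{n\ge M}|\mtime_{\trace_2}[n]-\mtime_{\trace_3}[n]|.
\]
Finally, since each function $M\mapsto\sup_{n\ge M}(\cdot)$ is non-increasing in $M$, Lemma~\ref{lemma:sumlimit} lets me interchange the limit $M\to\infty$ with the sum on the RHS, producing $\dist_{\limmaxdiff}(\trace_1,\trace_2)+\dist_{\limmaxdiff}(\trace_2,\trace_3)$ on the right and $\dist_{\limmaxdiff}(\trace_1,\trace_3)$ on the left.

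The principal (but mild) obstacle is precisely this last interchange: the limit of a sum is not in general the sum of limits, so monotonicity must be exploited via Lemma~\ref{lemma:sumlimit}; analogously, moving the $\sup$ through the sum requires subadditivity of $\sup$, whence Lemma~\ref{lemma:supsum}. Extended-real edge cases (one summand already $\infty$ at each $M$, or the limits being $+\infty$) need a brief separate check but cause no real difficulty since the inequality is vacuous in those cases.
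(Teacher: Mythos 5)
Your proof is correct and follows essentially the same route as the paper's: both reduce the triangle inequality to the pointwise inequality on $|\mtime_{\trace_i}[n]-\mtime_{\trace_j}[n]|$, pass the supremum through the sum via Lemma~\ref{lemma:supsum}, and exchange the limit $M\to\infty$ with the sum via the monotonicity hypothesis of Lemma~\ref{lemma:sumlimit}; the only difference is that you chain the inequalities from left to right while the paper writes them from the right-hand sum down to $\dist_{\limmaxdiff}(\trace_1,\trace_3)$. Your additional remarks on non-negativity, symmetry, and the pseudometric caveat for identity of indiscernibles go slightly beyond what the paper checks, which confines itself to the triangle inequality.
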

\begin{proof}
We prove $\dist_{\limmaxdiff}(\trace_1,\trace_2) + 
\dist_{\limmaxdiff}(\trace_2,\trace_3) \ \geq \ 
\dist_{\limmaxdiff}(\trace_1,\trace_3)$.\\
If all  the observation sequences of $\trace_1, \trace_2, \trace_3$ are not the same,
or if  $\dist_{\limmaxdiff}(\trace_1,\trace_2) $ or $\dist_{\limmaxdiff}(\trace_2,\trace_3)$ is
infinite,
then the claim is straightforward.
So consider that the  observation sequences of the three traces are the same and that
 $\dist_{\limmaxdiff}(\trace_1,\trace_2) $ and $\dist_{\limmaxdiff}(\trace_2,\trace_3)$ are
both finite.
We have $\dist_{\limmaxdiff}(\trace_1,\trace_2) + \dist_{\limmaxdiff}(\trace_2,\trace_3) $
\begin{align*}
= & \lim_{M\rightarrow \infty}\sup_{n\geq M}\set{|\mtime_{\trace_1}[n] - \mtime_{\trace_2}[n]|}\ +\ 
  \lim_{M\rightarrow \infty}\sup_{n\geq M}\set{|\mtime_{\trace_2}[n] - \mtime_{\trace_3}[n]|}\\
= &   \lim_{M\rightarrow \infty}  \left(
  \begin{array}{l}
  \sup_{n\geq M}\set{|\mtime_{\trace_1}[n] - \mtime_{\trace_2}[n]|}\  + \\
  \sup_{n\geq M}\set{|\mtime_{\trace_2}[n] - \mtime_{\trace_3}[n]|}\\
  \end{array}
\right)
 \quad  \text{by Lemma~\ref{lemma:sumlimit}}.\\
\geq &   \lim_{M\rightarrow \infty}  \left( \sup_{n\geq M}\left\{
 \begin{array}{l}
  (|\mtime_{\trace_1}[n] - \mtime_{\trace_2}[n]|) + \\
  (|\mtime_{\trace_2}[n] - \mtime_{\trace_3}[n]|)
\end{array}
\right\}
\right) \text{ by Lemma~\ref{lemma:supsum}}.\\
\geq &  \lim_{M\rightarrow \infty}\sup_{n\geq M}\set{|\mtime_{\trace_1}[n] - \mtime_{\trace_3}[n]|}\\
= & \dist_{\limmaxdiff}(\trace_1,\trace_3).
\end{align*}
The desired  result follows.\qedhere
\end{proof}

\medskip\noindent\textbf{Refinement Distance Induced by $\dist_{\limmaxdiff}$.}
The trace difference metric $\dist_{\limmaxdiff}$ induces the refinement distance 
$\refine_{\limmaxdiff}(A_{\myr}, A_{\mys})$.
Formally, given two timed transition systems $A_{\myr}$, $A_{\mys}$, with initial state sets
 $S_{\myr},S_{\mys}$ respectively, the 
refinement distance of $A_{\myr}$ with respect to $A_{\mys}$ induced by $\dist_{\limmaxdiff}$
is given by 
%
\[\refine_{\limmaxdiff}(A_{\myr}, A_{\mys}) =
\sup_{\trace_{q_{\myr}}\in \mu(\td(A_{\myr}))}
\inf_{\trace_{q_{\mys}} \in \mu(\trajecs(A_{\mys}))} 
\dist_{\limmaxdiff}(\trace_{q_{\myr}},\trace_{q_{\mys}})
\]
where $\trace_{q_{\myr}}$ (respectively, $\trace_{q_{\mys}}$) is a $q_{\myr}$-trace (respectively, $q_{\mys}$-trace) for some 
$q_{\myr}\in S_{\myr}$
(respectively, $q_{\mys}\in S_{\mys}$).

\begin{proposition}
\label{prop:refinelimmaxdiffmetric}
The function  $\refine_{\limmaxdiff}()$ is a directed metric on timed transition systems.
\end{proposition}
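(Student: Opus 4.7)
The plan is to verify the three axioms of a directed metric: non-negativity, identity ($\refine_{\limmaxdiff}(A,A) = 0$), and the triangle inequality. Non-negativity is immediate since $\dist_{\limmaxdiff}$ is non-negative. For the identity axiom, given any $\trace \in \mu(\td(A))$, the trace $\trace$ itself lies in $\mu(\trajecs(A))$, and $\dist_{\limmaxdiff}(\trace,\trace) = 0$ by Proposition~\ref{proposition:LimMaxDiffMetric}, so the inner infimum is $0$ for every choice of $\trace$, making the outer supremum $0$.

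The substantive content is the triangle inequality $\refine_{\limmaxdiff}(A_1,A_3) \leq \refine_{\limmaxdiff}(A_1,A_2) + \refine_{\limmaxdiff}(A_2,A_3)$. I would argue by $\epsilon$-approximation. Assume first that both summands on the right are finite; the case where either is infinite is trivial. Fix $\epsilon > 0$ and an arbitrary $\trace_1 \in \mu(\td(A_1))$. By the definition of the infimum in $\refine_{\limmaxdiff}(A_1,A_2)$, pick $\trace_2 \in \mu(\trajecs(A_2))$ with
\[
\dist_{\limmaxdiff}(\trace_1,\trace_2) \;\leq\; \refine_{\limmaxdiff}(A_1,A_2) + \epsilon.
\]
The key observation is that this $\trace_2$ must be time-divergent: if $\trace_2$ were Zeno while $\trace_1$ is time-divergent, then $|\mtime_{\trace_1}[n] - \mtime_{\trace_2}[n]| \to \infty$, which would force $\dist_{\limmaxdiff}(\trace_1,\trace_2) = \infty$, contradicting finiteness. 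Hence $\trace_2 \in \mu(\td(A_2))$, so it is a legal witness for the outer supremum defining $\refine_{\limmaxdiff}(A_2,A_3)$, and we may pick $\trace_3 \in \mu(\trajecs(A_3))$ with
\[
\dist_{\limmaxdiff}(\trace_2,\trace_3) \;\leq\; \refine_{\limmaxdiff}(A_2,A_3) + \epsilon.
\]

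Applying the triangle inequality from Proposition~\ref{proposition:LimMaxDiffMetric} to $\trace_1,\trace_2,\trace_3$, we obtain
\[
\dist_{\limmaxdiff}(\trace_1,\trace_3) \;\leq\; \refine_{\limmaxdiff}(A_1,A_2) + \refine_{\limmaxdiff}(A_2,A_3) + 2\epsilon.
\]
Since $\trace_3 \in \mu(\trajecs(A_3))$, this bounds the inner infimum for $\trace_1$ in the expression for $\refine_{\limmaxdiff}(A_1,A_3)$. Taking the supremum over $\trace_1 \in \mu(\td(A_1))$ and then letting $\epsilon \to 0$ yields the desired inequality.

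The main subtlety, and the only place where this argument is non-routine, is the step that forces the intermediate trace $\trace_2$ to be time-divergent. Without this observation, $\trace_2$ would be an arbitrary (possibly Zeno) trace of $A_2$ and we could not legitimately feed it into the supremum defining $\refine_{\limmaxdiff}(A_2, A_3)$, whose range is restricted to $\mu(\td(A_2))$. The argument relies on the fact that $\dist_{\limmaxdiff}$ penalizes time-divergence/Zeno mismatches with value $\infty$, which is exactly the design choice flagged in the footnote preceding the proposition.
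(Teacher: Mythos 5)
Your proof is correct and follows essentially the same route as the paper's: for each time-divergent $\trace_1$ pick an $\epsilon$-near-optimal $\trace_2$, observe that finiteness of $\dist_{\limmaxdiff}(\trace_1,\trace_2)$ forces $\trace_2$ to be time-divergent so it is a legitimate witness for the supremum in $\refine_{\limmaxdiff}(A_2,A_3)$, pick $\trace_3$, and chain via Proposition~\ref{proposition:LimMaxDiffMetric}. The only addition is your explicit check of non-negativity and the identity axiom, which the paper leaves implicit.
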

\begin{proof}
We prove $\refine_{\limmaxdiff}(A_1,A_2) + 
\refine_{\limmaxdiff}(A_2,A_3) \ \geq \ 
\refine_{\limmaxdiff}(A_1,A_3)$.\\
The interesting case is when both $\refine_{\limmaxdiff}(A_1,A_2)$ and
$\refine_{\limmaxdiff}(A_2,A_3)$ are  finite.
%
%
Let $\refine_{\limmaxdiff}(A_1,A_2) = K_{1,2}$ and let $\refine_{\limmaxdiff}(A_2,A_3) = K_{2,3}$.
Consider any $\trace_1 \in  \mu(\td(A_1)) $.
Since $K_{1,2} = \sup_{\trace_{q_1}\in \mu(\td(A_1))}\inf_{\trace_{q_2} \in \mu(\trajecs(A_2))} 
\set{\dist_{\limmaxdiff}(\trace_{q_1},\trace_{q_2})}$, we have 
that $K_{1,2} \geq  \ \inf_{\trace_{q_2}} 
\set{\dist_{\limmaxdiff}(\trace_{1},\trace_{q_2})}$.
Hence we have
that for any given $\epsilon >0$, there exists $\trace_2 \in \mu(\trajecs(A_2))$
such that $\dist_{\limmaxdiff}(\trace_{1},\trace_2)  < K_{1,2} + \epsilon$.
Now, $\trace_2$ must be time divergent (\emph{i.e.}  $\trace_2\in \mu(\td(A_2))
$, otherwise $\dist_{\limmaxdiff}(\trace_{1},\trace_2)$ is not
finite.
Using a similar argument, we have that there exists a trace $\trace_3\in \mu(\trajecs(A_3))$ such 
that $\dist_{\limmaxdiff}(\trace_{2},\trace_3)  < K_{2,3} + \epsilon$.

Since 
\[\dist_{\limmaxdiff}(\trace_{1},\trace_2) + \dist_{\limmaxdiff}(\trace_{2},\trace_3)  \geq
\dist_{\limmaxdiff}(\trace_{1},\trace_3) \]
 we have that 
\[\dist_{\limmaxdiff}(\trace_{1},\trace_3)  <  K_{1,2}  + K_{2,3} + 2\cdot\epsilon.\]
Since there exists a $ \trace_3$ such that the above inquality 
holds for any $\epsilon >0$, we have that
\[\inf_{\trace_{q_3}\in \mu(\trajecs(A_3))}\dist_{\limmaxdiff}(\trace_{1},\trace_{q_3}) \leq 
K_{1,2}  + K_{2,3} .\]
And since this inequality holds for any $\trace_1 \in  \mu(\td(A_1)) $, we have
\[ \sup_{\trace_{q_1}\in \mu(\td(A_1))}\inf_{\trace_{q_3\in \mu(\trajecs(A_3))}} \dist_{\limmaxdiff}(\trace_{q_1},\trace_{q_3}) \leq 
K_{1,2}  + K_{2,3}. \]
The desired result follows.\qedhere
\end{proof}

\medskip\noindent\textbf{Limit-Average Trace Difference Distance.}
Given two traces $\trace = \sigma_0 \stackrel{t_0}{\rightarrow}\sigma_1  \stackrel{t_1}{\rightarrow} \sigma_2 \dots$ and 
$\trace' = \sigma'_0 \stackrel{t'_0}{\rightarrow}\sigma'_1  \stackrel{t'_1}{\rightarrow} \sigma'_2 \dots$, 
the limit-average  trace difference distance $\dist_{\limavg}(\trace,\trace')$
is defined by 
\[\dist_{\limavg}(\trace,\trace')  =
 \left\{
\begin{array}{ll}
\infty & \text{if }\sigma_j\neq \sigma'_j 
  \text{ for some } j\\
\lim_{M\rightarrow \infty}\left(\sup_{n\geq M}
\left\{\frac{\sum_{i=0}^n (|\mtime_{\trace}[i] - \mtime_{\trace'}[i]|) }{n}\right\}\right)&  \text{otherwise} 
\end{array}
\right.
\]
The distance $\dist_{\limavg}(\trace,\trace') $ indicates the long-run average of the time discrepancies
between the two traces.

\begin{proposition}
\label{proposition:LimAvgMetric}
The function  $\dist_{\limavg}()$ is a metric on timed traces.
\end{proposition}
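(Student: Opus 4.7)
The plan is to follow the template of Proposition~\ref{proposition:LimMaxDiffMetric} essentially verbatim, since the only structural difference between $\dist_{\limmaxdiff}$ and $\dist_{\limavg}$ is that inside the outer $\lim\sup$ we have $\frac{1}{n}\sum_{i=0}^{n}|\mtime_{\trace}[i]-\mtime_{\trace'}[i]|$ rather than $|\mtime_{\trace}[n]-\mtime_{\trace'}[n]|$, and this normalized partial sum is still a non-negative real-valued expression to which the pointwise triangle inequality on $|\cdot|$ distributes. Non-negativity, $\dist_{\limavg}(\trace,\trace)=0$, and symmetry are all immediate from the definition and the symmetry of $|\cdot|$, so the substance of the proof is the triangle inequality
\[
\dist_{\limavg}(\trace_1,\trace_2) + \dist_{\limavg}(\trace_2,\trace_3) \ \geq\ \dist_{\limavg}(\trace_1,\trace_3).
\]

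First I would dispose of the trivial cases: if the observation sequences of $\trace_1,\trace_2,\trace_3$ do not all coincide, or if either of the summands on the left is $\infty$, the inequality holds vacuously. Otherwise all three traces share the same observation sequence and both summands are finite non-negative reals.

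Next I would apply the same two-lemma chain used in Proposition~\ref{proposition:LimMaxDiffMetric}. For each $M$ set
\[
\alpha_M = \sup_{n\geq M}\left\{\frac{\sum_{i=0}^n |\mtime_{\trace_1}[i] - \mtime_{\trace_2}[i]|}{n}\right\}, \qquad \beta_M = \sup_{n\geq M}\left\{\frac{\sum_{i=0}^n |\mtime_{\trace_2}[i] - \mtime_{\trace_3}[i]|}{n}\right\}.
\]
Both $M\mapsto \alpha_M$ and $M\mapsto \beta_M$ are non-increasing (the supremum is taken over a shrinking index set), so Lemma~\ref{lemma:sumlimit} permits me to combine the two outer limits into $\lim_{M\to\infty}(\alpha_M+\beta_M)$. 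Then Lemma~\ref{lemma:supsum} lets me collapse $\alpha_M+\beta_M$ to an upper bound of the form $\sup_{n\geq M}$ applied to the sum of the two normalized partial sums. Since the two summands share the same denominator $n$, I can combine the numerators, apply the pointwise triangle inequality $|a-b|+|b-c|\geq |a-c|$ termwise under $\sum_{i=0}^n$, and conclude by monotonicity of $\sup$ and $\lim$ that this is at least $\dist_{\limavg}(\trace_1,\trace_3)$.

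The main obstacle I anticipate is not the inequality itself but ensuring that the hypotheses of Lemma~\ref{lemma:sumlimit} are cleanly discharged so that the limits are manipulated as finite real numbers rather than extended ones; this is precisely why I reduce to the case in which both left-hand summands are finite before combining limits. Once that reduction is in hand, the $\frac{1}{n}$ normalization simply rides along through the calculation and plays no further role, so the remainder of the argument is a line-by-line repetition of the $\dist_{\limmaxdiff}$ case.
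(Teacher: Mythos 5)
Your proposal is correct and follows essentially the same route as the paper's proof: reduce to the case of identical observation sequences with both summands finite, combine the outer limits via Lemma~\ref{lemma:sumlimit}, collapse the suprema via Lemma~\ref{lemma:supsum}, and finish with the pointwise triangle inequality on the normalized partial sums. Your explicit check that $\alpha_M,\beta_M$ are non-increasing (so Lemma~\ref{lemma:sumlimit} applies) is a detail the paper leaves implicit, but the argument is the same.
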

\begin{proof}
We prove $\dist_{\limavg}(\trace_1,\trace_2) + \dist_{\limavg}(\trace_2,\trace_3) \ \geq \ 
\dist_{\limavg}(\trace_1,\trace_3)$.\\
If all  the observation sequences of $\trace_1, \trace_2, \trace_3$ are not the same,
or if  $\dist_{\limavg}(\trace_1,\trace_2) $ or $\dist_{\limavg}(\trace_2,\trace_3)$ is
infinite,
then the claim is straightforward.
So consider that the  observation sequences of the three traces are the same and that
  $\dist_{\limavg}(\trace_1,\trace_2) $ and $\dist_{\limavg}(\trace_2,\trace_3)$ are both finite.
We have $\dist_{\limavg}(\trace_1,\trace_2) + \dist_{\limavg}(\trace_2,\trace_3) $
\begin{align*}
= & \lim_{M\rightarrow \infty}\left(\sup_{n\geq M}
\left\{\frac{\sum_{i=0}^n (|\mtime_{\trace_1}[i] - \mtime_{\trace_2}[i]|) }{n}\right\}\right) \ +\
 \lim_{M\rightarrow \infty}\left(\sup_{n\geq M}
\left\{\frac{\sum_{i=0}^n (|\mtime_{\trace_2}[i] - \mtime_{\trace_3}[i]|) }{n}\right\}\right)\\
= &  \lim_{M\rightarrow \infty}\left(
 \begin{array}{l}
\sup_{n\geq M}
\left\{\frac{\sum_{i=0}^n (|\mtime_{\trace_1}[i] - \mtime_{\trace_2}[i]|) }{n}\right\} \ +\ \\
\sup_{n\geq M}\left\{\frac{\sum_{i=0}^n (|\mtime_{\trace_2}[i] - \mtime_{\trace_3}[i]|) }{n}\right\}
\end{array}
\right)
 \text{ by Lemma~\ref{lemma:sumlimit}}.\\
\geq &   \lim_{M\rightarrow \infty}  \left( \sup_{n\geq M}\left\{
 \begin{array}{l} 
\frac{\sum_{i=0}^n (|\mtime_{\trace_1}[i] - \mtime_{\trace_2}[i]|) }{n}\ +\\
\frac{\sum_{i=0}^n (|\mtime_{\trace_2}[i] - \mtime_{\trace_3}[i]|) }{n}
\end{array}
\right\}
\right) \text{ by Lemma~\ref{lemma:supsum}}.\\
\geq &  \lim_{M\rightarrow \infty}\left(\sup_{n\geq M}
\left\{\frac{\sum_{i=0}^n (|\mtime_{\trace_1}[i] - \mtime_{\trace_3}[i]|) }{n}\right\}\right)\\
= & \dist_{\limavg}(\trace_1,\trace_3).
\end{align*}
The desired result follows.\qedhere
\end{proof}

\medskip\noindent\textbf{Refinement Distance Induced by $\dist_{\limavg}$.}
The trace difference metric $\dist_{\limavg}$ induces the refinement distance 
$\refine_{\limavg}(A_{\myr}, A_{\mys})$.
The formal definition is as that for $\refine_{\limmaxdiff}(A_{\myr}, A_{\mys})$, 
replacing $\dist_{\limmaxdiff}$ with $\dist_{\limavg}$.

\begin{proposition}
The function  $\refine_{\limavg}()$ is a directed metric on timed transition systems.
\end{proposition}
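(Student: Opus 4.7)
The plan is to reuse the argument of Proposition~\ref{prop:refinelimmaxdiffmetric} essentially verbatim, with $\dist_{\limmaxdiff}$ replaced by $\dist_{\limavg}$. A directed metric requires non-negativity, the identity $\refine_{\limavg}(A,A)=0$, and the triangle inequality. Non-negativity is immediate from the definition since $|\mtime_{\trace}[i]-\mtime_{\trace'}[i]|\geq 0$, and the identity follows from choosing $\trace_{q_{\mys}}=\trace_{q_{\myr}}$ in the infimum (which is legal because every time-divergent trace of $A$ is a trace of $A$). So the real content is the triangle inequality $\refine_{\limavg}(A_1,A_3)\leq \refine_{\limavg}(A_1,A_2)+\refine_{\limavg}(A_2,A_3)$.

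For the triangle inequality, I would first dispose of the case where one of $\refine_{\limavg}(A_1,A_2)$ or $\refine_{\limavg}(A_2,A_3)$ is infinite, where the inequality is trivial. Setting $K_{1,2}=\refine_{\limavg}(A_1,A_2)$ and $K_{2,3}=\refine_{\limavg}(A_2,A_3)$, both finite, fix any $\trace_1\in\mu(\td(A_1))$ and any $\epsilon>0$. By the $\sup$-$\inf$ definition, there exists $\trace_2\in\mu(\trajecs(A_2))$ with $\dist_{\limavg}(\trace_1,\trace_2)<K_{1,2}+\epsilon$. Finiteness of this distance forces $\trace_2$ to be time-divergent (otherwise one trace is Zeno and the other non-Zeno, so all time-coordinate differences eventually blow up and the long-run average is infinite), hence $\trace_2\in\mu(\td(A_2))$. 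Applying the same reasoning to the pair $(A_2,A_3)$ with the specific $\trace_2$ just chosen yields $\trace_3\in\mu(\trajecs(A_3))$ with $\dist_{\limavg}(\trace_2,\trace_3)<K_{2,3}+\epsilon$.

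Now invoke the triangle inequality for $\dist_{\limavg}$ established in Proposition~\ref{proposition:LimAvgMetric} to get
\[
\dist_{\limavg}(\trace_1,\trace_3)\ \leq\ \dist_{\limavg}(\trace_1,\trace_2)+\dist_{\limavg}(\trace_2,\trace_3)\ <\ K_{1,2}+K_{2,3}+2\epsilon.
\]
Hence $\inf_{\trace_{q_3}\in\mu(\trajecs(A_3))}\dist_{\limavg}(\trace_1,\trace_{q_3})\leq K_{1,2}+K_{2,3}+2\epsilon$, and since $\epsilon>0$ was arbitrary the $2\epsilon$ term drops. Taking the supremum over $\trace_1\in\mu(\td(A_1))$ gives $\refine_{\limavg}(A_1,A_3)\leq K_{1,2}+K_{2,3}$, which is the desired triangle inequality.

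The only subtle point is the implicit claim that $\trace_2$ must be time-divergent for $\dist_{\limavg}(\trace_1,\trace_2)$ to be finite, which is the sole place where the well-formedness of the \tts{}s and the choice of $\td(\cdot)$ on the left of the refinement distance interact with the Cauchy-type chaining argument; once that is checked the proof is purely a transcription of the earlier one. Everything else is routine.
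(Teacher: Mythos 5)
Your proof is correct and follows exactly the route the paper intends: the paper's own proof of this proposition is just the remark that it is ``similar to the proof of Proposition~\ref{prop:refinelimmaxdiffmetric}'', and your argument is precisely that transcription, including the key observation (also needed there) that finiteness of $\dist_{\limavg}(\trace_1,\trace_2)$ forces $\trace_2$ to be time-divergent so that it falls under the supremum defining $K_{2,3}$.
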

\begin{proof}
The proof is similar to the proof of Proposition~\ref{prop:refinelimmaxdiffmetric}.
\qedhere
\end{proof}

\smallskip\noindent\textbf{A Note on Zeno-Asymmetry in Refinement Metrics.}
There is an asymmetry  in the definitions for refinement metrics
with respect to Zenoness as only Zeno behaviors of
$A_{\myr}$ are given special
treatment.
This is because in case of Zeno behavior by the specification, our
definitions automatically give a value of $\infty$, which is the correct notion.
That is, for $\Psi \in
      \set{\dist_{\maxdiff}, \dist_{\limmaxdiff}, \dist_{\limavg}}$, we have
$\Psi(\trace_{q_{\myr}}, \trace_{q_{\mys}}) = \infty$ if
$\trace_{q_{\myr}}$ is time divergent, and $\trace_{q_{\mys}}$ is
time convergent.

\section{Timed Simulation Relations}
\label{sec:simulation}
The general trace inclusion problem for timed systems is undecidable \cite{AlurD94}; and
simulation relations allow us to restrict our attention to a computable relation.
In this section we recall the definitions of timed and untimed simulation relations.
We also present timed and untimed simulation games which give an alternative
way of defining the simulation relations.
This will motivate the game theoretic definitions of quantitative timed 
simulation functions in the sequel.

\smallskip\noindent\textbf{Timed Simulation Relations.} 
Let $A_{\myr}$  and $A_{\mys}$ be  two \tts{}s. A binary relation 
$\preceq\, \subseteq S_{\myr}\times S_{\mys}$ is a 
\emph{timed simulation} if
$s_{\myr}\preceq s_{\mys}$ implies the following conditions: 
(1)~$\mu(s_{\myr}) = \mu(s_{\mys})$; and
 (2)~If $s_{\myr}\stackrel{t}{\rightarrow} s_{\myr}'$, then there exists $s_{\mys}'$ such 
       that $s_{\mys}\stackrel{t}{\rightarrow} s_{\mys}'$, and $s_{\myr}' \preceq s_{\mys}'$.
The state $s_{\myr}$ is timed simulated by the state $s_{\mys}$ if there exists a timed simulation $\preceq$ such that $s_{\myr} \preceq s_{\mys}$.
A binary relation $\equiv$ is a {\em timed bisimulation} if it is a symmetric timed simulation.
Two states $s_{\myr}$ and $s_{\mys}$ are timed bisimilar if there exists a timed bisimulation 
$\equiv$ with $s_{\myr} \equiv s_{\mys}$.
Timed bisimulation is stronger than timed simulation which in turn is stronger than trace inclusion.
If state $s_{\myr}$ is timed simulated by state $s_{\mys}$, then every $s_{\myr}$-trace is also a $s_{\mys}$-trace.

\smallskip\noindent\textbf{Untimed Simulation Relations.} 
{\em Untimed} simulation and bisimulation relations are defined analogously to
timed simulation and bisimulation relations  by ignoring the duration of
time steps. 
Formally, a binary relation $\preceq_u\, \subseteq S_{\myr}\times S_{\mys}$ is an (untimed) simulation 
if $s_{\myr}\preceq_u s_{\mys}$ implies the following conditions: 
 (1)~$\mu(s_{\myr}) = \mu(s_{\mys})$. 
(2)~If $s_{\myr}\stackrel{t}{\rightarrow} s_{\myr}'$, then there exists $s_{\mys}'$ and $t'\in\reals^+$ such
   that $s_{\mys} \stackrel{t'}{\rightarrow} s_{\mys}'$, and $s_{\myr}'\preceq s_{\mys}'$.
%
A symmetric untimed simulation relation is called an untimed bisimulation.

Timed simulation and bisimulation require that times be matched exactly.
This is often too strict a requirement, especially since timed models are approximations of the real world.
On the other hand, 
untimed simulation and bisimulation relations ignore the times on moves altogether.
Analogous to the  notions of quantitative refinement presented in Section~\ref{sec:refine},
we will define quantitative notions of simulation  functions which lie
in between these extremes in Section~\ref{section:QuantTimeSim}.
We will define quantitative simulation functions in a game theoretic
framework.
The motivation for the game theoretic framework for simulation relations
is presented next.

\smallskip\noindent\textbf{Timed and Untimed Simulation Games.}
There exists an alternative equivalent game theoretic view of timed simulation (a similar view
exists for  untimed simulation).
Given two timed transition systems   $A_{\myr}$ and $A_{\mys}$, 
consider a two player turn-based bipartite timed transition game structure 
$\simgame_t(A_{\myr},A_{\mys})$ with state
space $\left(S_{\myr}\times S_{\mys}\times\set{1}\right)\,\cup\, \left( S_{\myr}\times S_{\mys}\times \set{2}\right)$
(the full formal definitions of game structures will be presented in Section~\ref{section:Finite}).
The states of player~2 (the antagonist) are  $S_{\myr}\times S_{\mys} \times \set{2}$ and 
the states of player-1 (the
protagonist) are 
$S_{\myr}\times S_{\mys}\times \set{1}$.
The transitions are:
\begin{compactdesc}
\item[Player-2 transitions.]
$\tuple{s_{\myr},s_{\mys},2} \stackrel{\Delta_{\myr}}{\longrightarrow}\tuple{s_{\myr}',s_{\mys},1}$ such that
$s_{\myr} \stackrel{\Delta_{\myr}}{\longrightarrow} s_{\myr}'$ is a valid
transition in $A_{\myr}$.
\item[Player-1 transitions.]
$\tuple{s_{\myr},s_{\mys},1} \stackrel{\Delta_{\mys}}{\longrightarrow}\tuple{s_{\myr},s_{\mys}',2}$ such that
$s_{\mys} \stackrel{\Delta_{\mys}}{\longrightarrow} s_{\mys}'$ is a valid
transition in $A_{\mys}$.
\end{compactdesc} 

To decide if $s_{\mys}$ time-simulates $s_{\myr}$, we play the following game.
Let $\tuple{s_{\myr},s_{\mys},2}$ be the initial state such that $\mu(s_{\myr}) = \mu(s_{\mys}) $.
Player-2 picks a transition of some duration $\Delta_{\myr}$ 
from this state and moves to some state $\tuple{s_{\myr}',s_{\mys},1}$.
From $\tuple{s_{\myr}',s_{\mys},1}$,
player~1 then picks a transition of duration $\Delta_{\mys}$ such that
$ \Delta_{\mys} = \Delta_{\myr}$
 and moves to $\tuple{s_{\myr}',s_{\mys}',2}$ such that 
$\mu(s_{\mys}') = \mu(s_{\mys}') $.
If no such transition exists, then player~1 loses.
If the game can proceed forever without player-1 losing, then player~2 loses and 
player~1 wins.
If player~1 has a winning strategy  from $\tuple{s_{\myr},s_{\mys},2}$, then $s_{\mys}$ time-simulates $s_{\myr}$.
For untimed simulation, we ignore the time durations of the moves (player~1
can pick transitions of any duration from $A_{\mys}$).
We denote the  two player turn-based bipartite \emph{untimed} transition game as
$\simgame_u(A_{\myr},A_{\mys})$.

\section{Finite-state Game Graphs}
\label{section:Finite}

We will define the values of \emph{quantitative timed simulation functions} 
in Section~\ref{section:QuantTimeSim} through game theoretic formulations of
problems for finite-state turn based game graphs.
In this section, we first present the basic background on finite-state 
game graphs, and the relevant known results; 
then introduce new game theoretic objectives (that were not studied before 
but are required for quantitative timed simulation functions) and 
present solutions for the new objectives.

\subsection{Basic Definitions and Known Results}
In this section we present definitions of finite game graphs, plays, 
strategies, objectives, notion of winning, and the decision problems. 

\smallskip\noindent{\bf Game Graphs.}
A \emph{game graph} $G=\tuple{Q, E}$ consists of a finite set $Q$ of states 
partitioned into \mbox{player-$1$} states $Q_1$ and player-2 states $Q_2$ 
(i.e., $Q=Q_1 \cup Q_2$ and $Q_1 \cap Q_2=\emptyset$), 
and a set $E \subseteq Q \times Q$ of directed edges such that for all $q \in Q$,
there exists (at least one) $q' \in Q$ such that $(q,q') \in E$.
A \emph{player-$1$ game} is a game graph where $Q_1 = Q$ and $Q_2 = \emptyset$.
The subgraph of $G$ induced by $S \subseteq Q$ is the graph 
$\tuple{S, E \cap (S \times S)}$ (which is not a game graph in 
general); the subgraph induced by $S$ is a game graph if for all 
$s \in S$ there exist $s' \in S$ such that $(s,s')\in E$.

\smallskip\noindent{\bf Plays and Strategies.}
A game on $G$ starting from a state $q_0 \in Q$ is played in rounds as follows. 
If the game is in a player-1 state, then player~$1$ chooses the successor state from the set
of outgoing edges; otherwise the game is in a player-$2$ state, and player $2$ chooses the successor 
state from the set of outgoing edges. 
The game results in a \emph{play} from~$q_0$, i.e., 
an infinite path $\rho = q_0 q_1 \dots$ such that $(q_i,q_{i+1}) \in E$ for all $i \geq 0$. 
The prefix of length $n$ of $\rho$ is denoted by $\rho(n) = q_0 \dots q_n$. 
A \emph{strategy} for player~$1$ is a function
$\straa: Q^*Q_1 \to Q$ such that $(q,\straa(\rho\cdot q)) \in E$ for all $\rho \in Q^*$ 
and $q \in Q_1$. An \emph{outcome} of $\straa$ from~$q_0$ is a play $q_0 q_1 \dots$ such that 
$\straa(q_0 \dots q_i) = q_{i+1}$ for all $i \geq 0$ such that $q_i \in Q_1$. Strategy and outcome for
player~$2$ are defined analogously.
A player-1 strategy is \emph{memoryless} if it is independent of the history and depends only 
on the current state, and hence can be described as a function $\straa: Q_1 \to Q$.
Memoryless strategies for player~2 are defined analogously.
We denote by $\Straa$ and $\Strab$ the set of strategies for player~1 and player~2, respectively.
Given a starting state $q$, a strategy $\straa$ for player~1 and a strategy 
$\strab$ for player~2, we have a unique play $q_0 q_1 q_2\ldots$, 
such that $q_0=q$ and for all $i \geq 0$ we have that
(i)~if $q_i$ is a player-1 state, then $q_{i+1}= \straa(q_0, q_1, \ldots, q_i)$; and 
(ii)~if $q_i$ is a player-2 state, then $q_{i+1}= \strab(q_0, q_1, \ldots, q_i)$.
We denote the unique play as $\rho(\straa,\strab,q)$.

\smallskip\noindent{\bf Objectives.}
In this work we  consider both qualitative and quantitative 
objectives. 
We first introduce qualitative objectives that we use in our work.
A \emph{qualitative objective} for $G$ is a set $\phi \subseteq Q^\omega$ 
of winning plays. 
For a play $\rho$, we denote by $\Inf(\rho)$ the set of states that occur infinitely often 
in $\rho$.
We consider B\"uchi objectives, and its dual coB\"uchi objectives which are defined
as follows.
A B\"uchi objective consists of a set $B$ of B\"uchi states, and requires 
that the set $B$ is visited infinitely often.
Formally, the B\"uchi objective defines the following set of winning
plays: $\Buchi(B)=\set{\rho \mid \Inf(\rho) \cap B \neq \emptyset}$.
Dually the coB\"uchi objective consists of a set $C$ of coB\"uchi states
and requires that states outside $C$ be visited only finitely often, 
and defines the set $\coBuchi(C)=\set{\rho \mid \Inf(\rho) \subseteq C}$
of winning plays.
When we will consider qualitative objectives, the objective of player~1
will be disjunction of two coB\"uchi objectives, and the objective of
player~2 will be the complement (conjunction of two B\"uchi objectives).
The qualitative objectives will be used to model Zeno runs.
We now introduce several quantitative objectives. 

\smallskip\noindent\textbf{Quantitative Objectives.}
A \emph{quantitative objective} for $G$ is a function $f: Q^\omega \to \reals$ 
that maps every play to a real-valued number (in contrast a qualitative 
objective can be interpreted as a function $\phi: Q^\omega \to \set{0,1}$ 
that maps plays to Boolean rewards, with~1 for winning plays).
Let $w:E \to \zed$ be a \emph{weight function} and let us denote by $W$ the largest weight 
(in absolute value) according to $w$.
For a finite prefix  $\rho(n) = q_0 q_1 \dots q_n$ of a play we denote by 
$\EL(w)(\rho(n)) = \sum_{i=0}^{n-1} w(q_i,q_{i+1})$ the sum of the weights
of the prefix.
The \emph{debit} level at the end of the prefix  $\rho(n)$
is defined by 
\[\DL(w)(\rho(n)) =\max(0, -\sum_{i=0}^{n-1} w(q_i,q_{i+1})).\]
Note the negative sign in the definition.
The debit level denotes the amount by which the accumulated sum of the weights
has dipped below 0 at the end of $\rho(n)$ (if the sum of the weights is positive, \emph{i.e.}
there is a credit, 
then the debit level is defined to be 0).
We will consider the following quantitative objective functions.
\begin{compactdesc}


\item[Maximum debit level.]
 For a play $\rho$, the maximum debit level is the maximal debit level that
occurs in it.
Formally, for a play $\rho$ and the weight function $w$ we have 
\[\maxDL(w)(\rho)= \sup_n \DL(w)(\rho(n)) = 
\inf \set{v_0 \mid \forall n \geq 0. v_0 + \EL(w)(\rho(n)) \geq 0}.\]


\item[Eventual maximal debit level.]
 For a play $\rho$, the eventual maximum debit level is the maximal debit level that
occurs after  some point on in the play (\emph{i.e.} it is the maximal debit level that
occurs infinitely often in  the play).
Formally, for a play $\rho$ and the weight function $w$ we have 
\begin{align*}
\LimMaxDL(w)(\rho) & = \limsup_{n \to \infty} \DL(w)(\rho(n)) \\
& = \lim_{M \to \infty} \sup_{n\geq M} \DL(w)(\rho(n))\\
& = \inf \set{v_0 \mid \exists n_0 \geq 0. \forall n \geq n_0. v_0 + \EL(w)(\rho(n)) \geq 0}.
\end{align*}

\item[Average weight.]
The mean-payoff (or limit-average weight) objective function
on a play $\rho= q_0 q_1 \dots$ is the long-run average of the weights
of the play, i.e., $\MP(w)(\rho) = \limsup_{n \to \infty} \frac{1}{n}\cdot\EL(w)(\rho(n))$.

\item[Average debit-sum.]
Along with the previous objective, we introduce a new objective function, which we
call the average debit level that assigns to every play the long-run 
average of the debit levels.
Formally, 
\[\AvDL(w)(\rho)= \limsup_{n \to \infty} 
\frac{\sum_{i=0}^n \DL(w)(\rho(n))}{n} .\]
Note that since the debit level is defined to be~0 if the accumulated sum is positive 
(\emph{i.e.} a positive credit level), a positive credit cannot cancel out a positive
debit-sum in the averaging process in $\AvDL(w)(\rho)$.
Observe that in contrast to mean-payoff objective that is the average of the weights, 
the average debit level has the flavor of the average of the partial sums of the weights.
\end{compactdesc}
In the sequel, when the weight function $w$ is clear from context we will omit it and simply write $\EL(\rho(n))$ 
and $\MP(\rho)$, and so on.
For each of the above quantitative objectives, we will consider a version of 
the quantitative objective that is a disjunction with a coB\"uchi objective.
Formally for a quantitative objective $f$ and coB\"uchi objective $\coBuchi(C)$, 
the quantitative objective that is the disjunction of the two objectives is 
defined as follows for a play $\rho$: if $\rho \in \coBuchi(C)$, then the objective function 
assigns value~$0$  to $\rho$, otherwise it assigns value $f(\rho)$\footnote{We focus on objectives involving debits, and $0$ is the best possible debit value for 
player~1.}.
We will refer to the corresponding version of the quantitative objectives
with disjunction with coB\"uchi objective as $\maxDLC$, $\LimMaxDLC$, $\MPC$, and $\AvDLC$, 
respectively (and when the weight function $w$ and the coB\"uchi set $C$ is
clear from the context we drop them for simplicity).

\smallskip\noindent{\bf Winning Strategies, Optimal Value, and Optimal Strategies.}
A player-$1$ strategy~$\straa$ is \emph{winning} in a state~$q$
(we also say that player~$1$ is winning, or that $q$ is a winning state) 
 for a qualitative objective~$\phi$ if 
$\rho \in \phi$ for all outcomes~$\rho$ of~$\straa$ from~$q$.
The optimal value for a quantitative objective is the minimal value that 
player~1 can guarantee against all strategies of player~2.
Formally, for a quantitative objective $f$ that maps plays to real-valued rewards,
the optimal value $\Opt(f)(q)$ at state $q$ is defined as 
\[
\Opt(f)(q)=\inf_{\straa \in \Straa} \sup_{\strab \in \Strab} f(\rho(\straa,\strab,q)).\]
A strategy for player~1 is optimal if it achieves the optimal value 
against all strategies of player~2, i.e., a strategy $\straa^*$ is optimal 
if we have 
$
\Opt(f)(q)=\sup_{\strab \in \Strab} f(\rho(\straa^*,\strab,q))
$.
Similarly, a player-2 strategy $\pi_2^*$ is optimal if we have
$
\Opt(f)(q)=\inf_{\pi_1\in \Pi_1} f(\rho(\pi_1,\pi_2^*,q))
$.

We now present a theorem that summarizes known results about 
B\"uchi and coB\"uchi games, maximum debit level  (also known as
minimal initial credit for energy games), 
and mean-payoff games.
The results of B\"uchi and coB\"uchi objectives follow from~\cite{EJ91},
the results for maximum debit level  games follows from 
the results on energy games of~\cite{CD10}, and the result for
mean-payoff games follows from~\cite{CHJ05,Bouyer_ATVA11} 
(also note that in~\cite{CD10,CHJ05,Bouyer_ATVA11} player~1 has a
conjunction of energy (or mean-payoff) with parity objectives (parity objectives
generalize $\coBuchi$ objectives), 
whereas in our setting player~1 has the disjunction of energy (or 
mean-payoff) with $\coBuchi$, and thus the roles of player~1 and player~2 
in this work are exchanged as compared to~\cite{CD10,CHJ05,Bouyer_ATVA11}). 

\begin{theorem}\label{thrm_basic}
The following assertions hold for finite-state game graphs.
\begin{compactenum}
\item The set of winning states in games with disjunction of two coB\"uchi 
objectives can be computed in time $O(|Q|\cdot |E|)$, and memoryless winning 
strategies exist for player~1 and winning strategies of player~2 require one-bit 
memory (from their respective winning states).
\item The optimal value for maximum  debit level functions with coB\"uchi disjunctions 
can be computed in time $O(|Q|^2\cdot |E| \cdot W)$, and 
memoryless optimal strategies exist for player~1 and optimal strategies for player~2 
require finite memory.
If the optimal value is not $+\infty$, then the optimal value is at most $|Q| \cdot |W|$.
\item The optimal value for limit-average functions with coB\"uchi disjunctions can be 
computed in time $O(|Q|^2\cdot |E|\cdot W)$, and 
memoryless optimal strategies exist for player~1 and the optimal strategies 
of player~2 may require infinite memory.\qedhere\qed
\end{compactenum}
\end{theorem}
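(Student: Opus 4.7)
My plan is to reduce each of the three statements to existing results on games with parity and quantitative objectives, via a duality argument that swaps the roles of the two players. The key observation is that a disjunctive objective for player~1 is the dual of a conjunctive objective for player~2, and that $\coBuchi$ and $\Buchi$ are dual conditions. Thus statements about player~1's disjunctive objectives with $\coBuchi$ follow from statements about conjunctive objectives with $\Buchi$ (or parity, since $\Buchi$ is parity with three priorities) after exchanging player roles, which are precisely the formulations in the cited literature.

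For part~(1), I would first observe that $\coBuchi(C_1)\cup \coBuchi(C_2)$ is a Rabin condition with two pairs, whose complement is the generalized B\"uchi condition $\Buchi(Q\setminus C_1)\cap \Buchi(Q\setminus C_2)$ for player~2. The player-2 winning region can be computed by a standard nested fixpoint that alternates attractors to the two B\"uchi targets, running in $O(|Q|\cdot |E|)$ time. Memoryless determinacy for player~1 follows from memoryless determinacy of Rabin games~\cite{EJ91}, while player~2 needs one bit of memory to round-robin between the two B\"uchi targets.

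For part~(2), the plan is to swap the roles of the two players and negate the weight function. The $\maxDLC$ objective for player~1 (minimize the maximum debit level, or satisfy $\coBuchi(C)$) then translates, on the swapped graph, into a conjunction of an energy objective with a $\Buchi(Q\setminus C)$ objective for the swapped player~1. This is precisely the energy--parity formulation studied in~\cite{CD10} (with the parity restricted to three priorities). Invoking its pseudo-polynomial $O(|Q|^2\cdot|E|\cdot W)$ algorithm and memoryless-determinacy results, and translating back through the swap, yields the claimed complexity, memoryless optimal strategies for player~1, and finite-memory optimal strategies for player~2. The $|Q|\cdot W$ bound on the optimal value when finite is the standard polynomial bound on the minimum initial credit in energy games.

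For part~(3) the argument is entirely analogous, this time appealing to the mean-payoff parity results of~\cite{CHJ05,Bouyer_ATVA11} after the same swap-and-negate reduction: the swapped player~1 faces a mean-payoff parity objective, which admits a pseudo-polynomial $O(|Q|^2\cdot|E|\cdot W)$ algorithm and memoryless optimal strategies on the mean-payoff side, with infinite memory potentially required for the opponent; these properties transfer back under the swap. The main obstacle I anticipate is bookkeeping: verifying that the ``disjunction-with-$\coBuchi$'' semantics used here (which forces value~$0$ on plays in $\coBuchi(C)$) aligns correctly with the ``conjunction-with-parity'' semantics of the cited works after swapping players and negating weights, so that the optimal values, the direction of memoryless determinacy, and the upper bound on the optimal value all transfer faithfully to the present statements.
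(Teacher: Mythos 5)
Your proposal is essentially the paper's own argument: the theorem is presented there as a summary of known results, justified only by citing \cite{EJ91} for games with a disjunction of two coB\"uchi objectives, the energy-game results of \cite{CD10} for the maximum debit level, and \cite{CHJ05,Bouyer_ATVA11} for the limit-average case, together with exactly your observation that the roles of the two players must be exchanged (in those works player~1 has the \emph{conjunction} of energy/mean-payoff with parity, while here player~1 has the \emph{disjunction} with coB\"uchi). The one inaccuracy is your swap-and-negate translation for parts~(2)--(3): negating the weights does not literally turn the swapped player's objective into an energy (respectively mean-payoff) conjunction with B\"uchi --- the complement of the energy condition is an ``eventually dip below'' co-safety condition rather than an energy condition, and in the mean-payoff case no negation is needed since the maximizing player already faces ``B\"uchi and mean-payoff above threshold'' --- but this is precisely the alignment you flag as bookkeeping, and the paper is equally terse there, relying on the role-exchange remark alone.
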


\subsection{New Results and Algorithms -- Eventual Maximal Debit  Level Objectives}

In this section we present a solution for games with eventual maximal debit  level objectives
(\emph{i.e.} for minimal initial credit for eventual survival).
%
We first present an  example which  illustrates the difference between maximum debit  level and eventual 
maximal debit 
level objectives.
\begin{example}[Maximum debit  level  vs eventual maximal debit  level]
Consider the game graph $G_0$  in Figure~\ref{figure:example-DL-LimDL}.
\begin{figure}[h]
    \vspace{-1em}
  \strut\centerline{\setlength{\unitlength}{0.00049869in}
\begingroup\makeatletter\ifx\SetFigFont\undefined%
\gdef\SetFigFont#1#2#3#4#5{%
  \reset@font\fontsize{#1}{#2pt}%
  \fontfamily{#3}\fontseries{#4}\fontshape{#5}%
  \selectfont}%
\fi\endgroup%
{\renewcommand{\dashlinestretch}{30}
\begin{picture}(7273,1249)(0,-10)
\put(4670,636){\ellipse{1034}{584}}
\put(525,642){\ellipse{1034}{584}}
\put(2631,623){\ellipse{1034}{584}}
\put(6748,634){\ellipse{1034}{584}}
\path(1048,613)(2128,613)
\path(1948.000,553.000)(2128.000,613.000)(1948.000,673.000)
\path(3163,613)(4153,613)
\path(3973.000,553.000)(4153.000,613.000)(3973.000,673.000)
\path(5143,523)(5145,521)(5151,516)
	(5160,508)(5173,496)(5190,482)
	(5209,466)(5230,448)(5251,431)
	(5273,415)(5293,400)(5313,387)
	(5332,375)(5351,365)(5370,356)
	(5389,348)(5408,342)(5428,335)
	(5447,331)(5467,326)(5488,322)
	(5510,319)(5533,316)(5557,313)
	(5583,311)(5608,310)(5635,309)
	(5662,309)(5688,310)(5715,311)
	(5741,313)(5767,316)(5791,319)
	(5815,323)(5838,327)(5860,332)
	(5881,337)(5901,343)(5922,350)
	(5943,358)(5963,368)(5983,379)
	(6004,391)(6026,405)(6048,421)
	(6072,439)(6096,458)(6121,478)
	(6145,499)(6168,519)(6188,536)(6223,568)
\path(6130.641,402.260)(6223.000,568.000)(6049.669,490.823)
\path(6223,658)(6221,661)(6215,666)
	(6206,676)(6192,691)(6174,709)
	(6152,731)(6129,755)(6103,780)
	(6078,805)(6053,828)(6029,850)
	(6006,869)(5984,887)(5963,903)
	(5943,917)(5923,929)(5903,940)
	(5883,949)(5863,958)(5842,966)
	(5820,973)(5798,979)(5774,985)
	(5750,990)(5725,994)(5699,997)
	(5673,999)(5646,1001)(5620,1001)
	(5593,1001)(5567,999)(5542,997)
	(5518,994)(5495,990)(5473,985)
	(5453,979)(5433,973)(5415,966)
	(5398,958)(5380,948)(5364,937)
	(5348,925)(5333,911)(5318,894)
	(5304,875)(5289,854)(5275,830)
	(5260,805)(5245,777)(5231,750)
	(5218,724)(5207,701)(5188,658)
\path(5205.868,846.893)(5188.000,658.000)(5315.631,798.394)
\put(1273,703){\makebox(0,0)[lb]{\smash{{\SetFigFont{9}{10.8}{\familydefault}{\mddefault}{\updefault}$-10$}}}}
\put(418,568){\makebox(0,0)[lb]{\smash{{\SetFigFont{9}{10.8}{\familydefault}{\mddefault}{\updefault}$q_0$}}}}
\put(2533,568){\makebox(0,0)[lb]{\smash{{\SetFigFont{9}{10.8}{\familydefault}{\mddefault}{\updefault}$q_1$}}}}
\put(4558,568){\makebox(0,0)[lb]{\smash{{\SetFigFont{9}{10.8}{\familydefault}{\mddefault}{\updefault}$q_2$}}}}
\put(3388,703){\makebox(0,0)[lb]{\smash{{\SetFigFont{9}{10.8}{\familydefault}{\mddefault}{\updefault}$10$}}}}
\put(6628,568){\makebox(0,0)[lb]{\smash{{\SetFigFont{9}{10.8}{\familydefault}{\mddefault}{\updefault}$q_3$}}}}
\put(5413,1063){\makebox(0,0)[lb]{\smash{{\SetFigFont{9}{10.8}{\familydefault}{\mddefault}{\updefault}$-2$}}}}
\put(5593,73){\makebox(0,0)[lb]{\smash{{\SetFigFont{9}{10.8}{\familydefault}{\mddefault}{\updefault}$2$}}}}
\end{picture}
}}
    \vspace{-2em}
  \caption{Game Graph $G_0$}
  \label{figure:example-DL-LimDL}
    \vspace{-0.5em}
\end{figure}
The game $G_0$ has only one play from $q_0$, namely, $q_0 \rightarrow q_1 \rightarrow \left(q_2 \rightarrow q_3\rightarrow\right)^\omega$.
It can be seen that $\Opt(\maxDL)(q_0)$ is $10$ as a debit level of $10$ is seen on the
transition from $q_0$ to $q_1$.
However, $\Opt(\LimMaxDL)(q_0)$ is only $2$, as the debit level $10$ occurs only once
in the play. 
The debit level $2$ however occurs infinitely often in the play.
\qed
\end{example}

We obtain a solution for games with eventual maximal debit  level objectives
by a reduction to  coB\"uchi games.
We start with a lemma that is required for the reduction.

\begin{lemma}\label{lemm1}
For all game graphs with a weight function $w$, the following assertions hold:
\begin{enumerate}
\item The optimal value of the eventual maximal debit  level objective
is atmost the optimal value of the maximum debit  level
objective
\emph{i.e.}, for 
all states $q$ we have
$
\Opt(\LimMaxDL)(q) \leq \Opt(\maxDL)(q)
$.
\item The optimal value   of the maximum debit level 
objective is
$+\infty$ iff  the optimal value of the eventual maximal debit  level 
is $+\infty$.
\end{enumerate}
\end{lemma}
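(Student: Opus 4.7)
The plan for part (1) is elementary. For any play $\rho$, the sequence of debit levels $\bigl(\DL(w)(\rho(n))\bigr)_{n \geq 0}$ consists of non-negative reals, and the $\limsup$ of any real sequence is at most its $\sup$. This immediately gives the pointwise inequality $\LimMaxDL(w)(\rho) \leq \maxDL(w)(\rho)$ on every play $\rho$. Since this inequality holds for every outcome $\rho(\straa,\strab,q)$, it propagates through the $\inf_{\straa} \sup_{\strab}$ operator defining the optimal value, yielding $\Opt(\LimMaxDL)(q) \leq \Opt(\maxDL)(q)$.

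For part (2), the direction $\Opt(\LimMaxDL)(q) = +\infty \Rightarrow \Opt(\maxDL)(q) = +\infty$ is an immediate consequence of part (1). The forward direction needs a sharper pointwise observation: for any sequence $(a_n)$ of non-negative reals, $\sup_n a_n = +\infty$ if and only if $\limsup_n a_n = +\infty$. The nontrivial direction holds because if only finitely many $a_n$ exceeded a threshold $K$, the sup would be bounded by the maximum of these finitely many terms together with $K$. Applying this to the debit sequence of any play yields the pointwise equivalence $\maxDL(w)(\rho) = +\infty \iff \LimMaxDL(w)(\rho) = +\infty$.

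To complete the forward direction, I would invoke Theorem~\ref{thrm_basic}(2): when $\Opt(\maxDL)(q) = +\infty$, player~2 has a finite-memory optimal strategy $\strab^{*}$ that forces $\maxDL(w)(\rho(\straa,\strab^{*},q)) = +\infty$ against every player~1 strategy $\straa$. This may be justified by taking the product of the game graph with the memory of $\strab^{*}$: in this finite product game, player~1's best response is memoryless and yields an eventually periodic outcome; if the weight sum on the cycle were non-negative, player~1 would guarantee a finite $\maxDL$, contradicting $\Opt(\maxDL)(q) = +\infty$. Hence the cycle has strictly negative sum and $\maxDL = +\infty$ on the outcome. By the pointwise equivalence above, $\LimMaxDL(w)(\rho(\straa,\strab^{*},q)) = +\infty$ for every $\straa$, and therefore $\Opt(\LimMaxDL)(q) = +\infty$.

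The main delicacy lies in this last step: extracting a single player-2 strategy $\strab^{*}$ that makes $\maxDL$ actually infinite against every $\straa$, rather than merely arbitrarily large via strategies depending on $\straa$. This matters because the pointwise equivalence is at the level of $+\infty$ only — a family of plays with $\maxDL$ growing to infinity while each individual $\LimMaxDL$ stays zero would not suffice. Theorem~\ref{thrm_basic}(2) supplies exactly this uniform witness, and everything else in the proof is routine manipulation of $\sup$ and $\limsup$.
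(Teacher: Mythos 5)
Your proposal is correct, and its core is the same as the paper's: part~(1) is the pointwise inequality $\limsup \leq \sup$ pushed through $\inf_{\straa}\sup_{\strab}$, and part~(2) rests on the elementary fact that for a sequence of reals (integers, in the paper) the supremum is $+\infty$ iff the limit supremum is $+\infty$, applied to the debit sequence of a play. Where you differ is in the last step: the paper's proof stops at the pointwise equivalence (``we obtain the result for all plays, hence the result follows''), silently exchanging it with the strategy quantifiers, whereas you explicitly observe that $\sup_{\strab}\maxDL(\rho(\straa,\strab,q))=+\infty$ could in principle be witnessed only by a family of plays with unbounded but individually finite $\maxDL$, and you close this by invoking the optimal player-2 strategy of Theorem~\ref{thrm_basic}(2): since the paper defines a player-2 strategy $\strab^{*}$ to be optimal when $\Opt(f)(q)=\inf_{\straa}f(\rho(\straa,\strab^{*},q))$, the case $\Opt(\maxDL)(q)=+\infty$ gives $\maxDL(\rho(\straa,\strab^{*},q))=+\infty$ for \emph{every} $\straa$, and the pointwise equivalence then yields $\LimMaxDL=+\infty$ on every such play, hence $\Opt(\LimMaxDL)(q)=+\infty$. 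This is a legitimate and more careful route; the only superfluous part is your product-graph/negative-cycle digression, which re-derives what the definition of optimality of $\strab^{*}$ already gives you directly. In short: same key lemma as the paper, plus an explicit (and welcome) justification of the quantifier exchange that the paper leaves implicit.
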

\begin{proof}
The first item follows from definition.
The proof of the second item is as follows: 
if we have a sequence $\set{x_n}_{n \geq 0}$ of integers, then 
$\sup x_n =\infty$ iff $\lim\sup x_n=\infty$. 
Considering $\set{x_n}_{n \geq 0}$ to be the sequence $\set{\EL(\rho(n))}_{n \geq 0}$, 
we obtain the result for all plays.
Hence the result follows. 
\end{proof}

\smallskip\noindent\textbf{Reduction of $\LimMaxDL$ objective 
games to coB\"uchi Games.}

The solution for the optimal value for the $\LimMaxDL$ objective is obtained as follows.
We first compute $\Opt(\maxDL)(q)$ using algorithms of Theorem~\ref{thrm_basic}.
\begin{compactenum}
\item 
If $\Opt(\maxDL)(q)$ is infinite, then $\Opt(\LimMaxDL)(q)$ is infinite 
(by Lemma~\ref{lemm1}).
\item 
If $\Opt(\maxDL)(q)$ is finite, then by Lemma~\ref{lemm1} and by Theorem~\ref{thrm_basic}
we have $\Opt(\LimMaxDL)(q)$ 
to be  finite 
  and $\Opt(\LimMaxDL)(q) \leq |Q|\cdot W$.
\end{compactenum}
If $\Opt(\LimMaxDL)(q)$ is finite, 
the procedure
to check whether $\Opt(\LimMaxDL)(q) \leq D$ for $0 \leq D \leq |Q|\cdot W$  
is as follows: we construct a coB\"uchi 
game where we keep track of the current sum of weights; the \tcoBuchi
states are those where the tracked sum is not below $-D$;
thus ensuring that eventually the debit levels are never more than $D$.

We restrict the set, that the
tracked sums belong to, as follows.
First, we divide $Q$ into two disjoint subsets: 
$Q = Q_{\infty}  \uplus Q_{\bowtie}$ based on the value of
$\Opt(\maxDL)$. 
 States in $ Q_{\infty}$ have $\Opt(\maxDL)$ to be $+\infty$;
 states in $Q_{\bowtie}$ have $\Opt(\maxDL)$ to be finite (observe that,
by Theorem~\ref{thrm_basic}, all states in $Q_{\bowtie}$ have $\Opt(\maxDL)$
to be at most $|Q|\cdot W$).
Consider an optimal strategy $\pi_1$ for player~1 for the $\LimMaxDL$ objective
starting from a state $q$ in $Q_{\bowtie}$.
This strategy must ensure that a state in $ Q_{\infty}$ (with any tracked sum)
is never visited, as from these
states $\Opt(\maxDL)$ (and hence $\Opt(\LimMaxDL)$) is $+\infty$, thus
if $\pi_1$ were to visit  a state in $ Q_{\infty}$ (with any tracked sum)  starting from $q$, then player~2 could
 make
$\LimMaxDL$ to be  $+\infty$ from $q$ for that player-1 strategy.
Thus, we define $Q_{\infty}$ (with any tracked sum) 
to be losing sink states in the  coB\"uchi  game
 (\emph{i.e.}  these sink states
are defined to not belong to the \tcoBuchi set).

\smallskip\noindent\textbf{Bounds on Tracked Weight-Sums.}
Starting from the initial state $q$, an optimal player-1 strategy for the 
$\LimMaxDL$
objective must 
ensure that the game always stays in $Q_{\bowtie}$ states by the above argument.
Observe that if player~1
cannot avoid staying inside a negative weight-sum cycle, then all states in that cycle have $\Opt(\maxDL) = +\infty$ (and hence $\Opt(\LimMaxDL) =+\infty$).
 Thus all states in that cycle
will be outside $Q_{\bowtie}$.
Moreover, it is in the interest of player~1 to never complete a negative weight-sum
cycle. 
Thus, we have that optimal player-1 strategies  for the $\LimMaxDL$
objective
ensure that the game always stays in $Q_{\bowtie}$ states, and that a negative
weight-sum cycle is never formed.
%
Since the sum of the negative
 weights in a cycle is at most $-|Q|\cdot W$, we thus only need to keep track of weight-sums that are at least $-|Q|\cdot W$.
 If the tracked sum of the weights ever falls below $-|Q|\cdot W$,
we transition to a losing sink state in the \tcoBuchi game (\emph{i.e.} a new sink state
which is not in the \tcoBuchi set).

We now show that we only need to track weight sums that are below $|Q|\cdot W$
as follows.
Consider the states $Q_{\bowtie}$. 
From these states, starting with an initial weight sum of $0$, 
player~1 has a strategy to ensure that the sum of weights never
goes below $-|Q|\cdot W$ by definition.
This means that from  these states, starting with an initial weight sum of $|Q|\cdot W$,
player~1 has a strategy to ensure that the sum of weights never
goes below $0$.
%
%
Thus, in the game for $\LimMaxDL$,
 if the tracked sum of the weights ever exceeds $|Q|\cdot W$ at a $Q_{\bowtie}$ state,
player~1 can ensure that from that point on, non-zero debit levels will not occur.
This means that   if the tracked sum of the weights ever exceeds $|Q|\cdot W$ at 
a $Q_{\bowtie}$ state,
we can transition to a
winning sink state  in the \tcoBuchi game (\emph{i.e.} a sink state
which is defined to belong to the \tcoBuchi set).
%
%

From the above two cases
it follows that  we only need to keep track of the sum of weights that lie between 
$-|Q|\cdot W$ and $|Q|\cdot W$. 
To check whether $\Opt(\LimMaxDL)(q) \leq D$ for $0 \leq D \leq |Q|\cdot W$, we
proceed as follows:
if the sum of the weights is greater than or equal to $-D$ (and it is  a 
$Q_{\bowtie}$ state), then we call the state a coB\"uchi 
state, otherwise it is a bad state for the coB\"uchi objective.
The goal of player~1 is the coB\"uchi objective, which is equivalently
the objective to ensure that from some point on the sum of the weights
is always greater than or equal to  $-D$.
Using a binary search for $D$ for values between $0$ and $|Q|\cdot W$ 
we obtain the optimal value.
The games we construct for the binary searches 
have at most $O(|Q|^2 \cdot W)$ states and $O(|E|\cdot |Q|\cdot W)$
edges. 

For disjunction with a coB\"uchi objective, we have the same 
reduction as above, but in the end we obtain a game with disjunction 
of two coB\"uchi objectives.

\begin{example}
\label{example:EvMaxDebNew}
We illustrate the procedure of solving  eventual maximal debit level
 coB\"uchi games
with an example.
Consider the game graph $G_1$  in Figure~\ref{figure:example-AvDL-red}.
\begin{figure}[h]
  \vspace{-1em}
  \strut\centerline{\setlength{\unitlength}{0.00043745in}
\begingroup\makeatletter\ifx\SetFigFont\undefined%
\gdef\SetFigFont#1#2#3#4#5{%
  \reset@font\fontsize{#1}{#2pt}%
  \fontfamily{#3}\fontseries{#4}\fontshape{#5}%
  \selectfont}%
\fi\endgroup%
{\renewcommand{\dashlinestretch}{30}
\begin{picture}(6770,3528)(0,-10)
\put(5937,1880){\ellipse{1080}{674}}
\put(4916,480){\ellipse{1080}{674}}
\put(1032,1880){\ellipse{1080}{674}}
\put(3079,3145){\ellipse{1080}{674}}
\path(3102,2217)(4137,2217)(4137,1542)
	(3102,1542)(3102,2217)
\path(1572,1902)(3102,1902)
\path(2922.000,1842.000)(3102.000,1902.000)(2922.000,1962.000)
\path(4129,1910)(5404,1910)
\path(5224.000,1850.000)(5404.000,1910.000)(5224.000,1970.000)
\path(12,1910)(499,1910)
\path(319.000,1850.000)(499.000,1910.000)(319.000,1970.000)
\path(5667,3477)(6702,3477)(6702,2802)
	(5667,2802)(5667,3477)
\path(3642,2217)(3147,2802)
\path(3309.073,2703.347)(3147.000,2802.000)(3217.466,2625.834)
\path(6342,2802)(5892,2217)
\path(5954.190,2396.255)(5892.000,2217.000)(6049.305,2323.090)
\path(5667,3027)(5307,2757)(3957,2757)(3597,3027)
\path(3777.000,2967.000)(3597.000,3027.000)(3705.000,2871.000)
\path(6477,1850)(6657,1850)(6657,12)
	(987,12)(987,1542)
\path(1047.000,1362.000)(987.000,1542.000)(927.000,1362.000)
\path(3642,3117)(3822,3252)(5352,3252)(5667,3027)
\path(5485.654,3082.799)(5667.000,3027.000)(5555.402,3180.447)
\path(5464,1707)(5461,1706)(5455,1703)
	(5444,1699)(5428,1692)(5407,1684)
	(5382,1673)(5353,1661)(5321,1647)
	(5290,1633)(5258,1619)(5228,1605)
	(5199,1591)(5173,1578)(5148,1566)
	(5126,1553)(5106,1542)(5088,1530)
	(5071,1518)(5055,1507)(5040,1495)
	(5027,1482)(5011,1467)(4995,1450)
	(4981,1433)(4967,1415)(4954,1395)
	(4942,1375)(4930,1354)(4920,1332)
	(4910,1310)(4902,1288)(4895,1266)
	(4889,1244)(4884,1222)(4881,1201)
	(4878,1180)(4877,1160)(4877,1141)
	(4878,1122)(4880,1104)(4883,1085)
	(4887,1066)(4893,1046)(4900,1026)
	(4908,1004)(4919,980)(4930,955)
	(4944,928)(4958,900)(4972,873)
	(4986,848)(4999,825)(5022,785)
\path(4880.261,911.135)(5022.000,785.000)(4984.290,970.951)
\path(4384,470)(4381,471)(4375,474)
	(4365,480)(4349,488)(4328,499)
	(4303,513)(4275,528)(4245,545)
	(4214,562)(4183,580)(4154,598)
	(4126,615)(4101,632)(4077,648)
	(4056,663)(4037,678)(4019,693)
	(4003,708)(3989,723)(3976,739)
	(3963,755)(3950,773)(3938,792)
	(3927,812)(3916,833)(3906,855)
	(3897,878)(3888,902)(3880,927)
	(3872,952)(3866,977)(3860,1002)
	(3855,1028)(3851,1053)(3847,1077)
	(3845,1100)(3843,1123)(3841,1145)
	(3841,1166)(3840,1186)(3841,1205)
	(3842,1225)(3843,1246)(3845,1265)
	(3848,1286)(3852,1306)(3857,1328)
	(3862,1350)(3869,1374)(3876,1398)
	(3884,1424)(3892,1448)(3899,1471)
	(3906,1491)(3919,1527)
\path(3914.297,1337.322)(3919.000,1527.000)(3801.431,1378.079)
\path(5449,545)(5452,546)(5458,548)
	(5469,551)(5485,555)(5506,561)
	(5531,569)(5558,577)(5587,587)
	(5615,596)(5643,606)(5669,616)
	(5693,626)(5716,636)(5737,646)
	(5757,657)(5775,668)(5793,679)
	(5810,692)(5827,705)(5842,718)
	(5857,732)(5873,746)(5888,762)
	(5904,779)(5919,796)(5935,815)
	(5950,834)(5965,855)(5979,876)
	(5993,897)(6007,919)(6019,940)
	(6031,962)(6041,984)(6051,1005)
	(6059,1026)(6066,1047)(6073,1067)
	(6078,1087)(6081,1106)(6084,1125)
	(6086,1146)(6086,1166)(6085,1187)
	(6082,1208)(6077,1230)(6071,1253)
	(6062,1278)(6052,1304)(6041,1332)
	(6027,1362)(6013,1392)(5998,1421)
	(5983,1450)(5970,1476)(5958,1497)(5937,1535)
\path(6076.578,1406.478)(5937.000,1535.000)(5971.549,1348.435)
\put(5592,972){\makebox(0,0)[lb]{\smash{{\SetFigFont{8}{9.6}{\familydefault}{\mddefault}{\updefault}$-1$}}}}
\put(5825,1789){\makebox(0,0)[lb]{\smash{{\SetFigFont{8}{9.6}{\familydefault}{\mddefault}{\updefault}$l_2$}}}}
\put(3516,1804){\makebox(0,0)[lb]{\smash{{\SetFigFont{8}{9.6}{\familydefault}{\mddefault}{\updefault}$l_1$}}}}
\put(4849,402){\makebox(0,0)[lb]{\smash{{\SetFigFont{8}{9.6}{\familydefault}{\mddefault}{\updefault}$l_5$}}}}
\put(927,1804){\makebox(0,0)[lb]{\smash{{\SetFigFont{8}{9.6}{\familydefault}{\mddefault}{\updefault}$l_0$}}}}
\put(2097,2067){\makebox(0,0)[lb]{\smash{{\SetFigFont{8}{9.6}{\familydefault}{\mddefault}{\updefault}$-1$}}}}
\put(4670,2022){\makebox(0,0)[lb]{\smash{{\SetFigFont{8}{9.6}{\familydefault}{\mddefault}{\updefault}$2$}}}}
\put(6755,762){\makebox(0,0)[lb]{\smash{{\SetFigFont{8}{9.6}{\familydefault}{\mddefault}{\updefault}$-1$}}}}
\put(4992,1235){\makebox(0,0)[lb]{\smash{{\SetFigFont{8}{9.6}{\familydefault}{\mddefault}{\updefault}$-1$}}}}
\put(3995,890){\makebox(0,0)[lb]{\smash{{\SetFigFont{8}{9.6}{\familydefault}{\mddefault}{\updefault}$-2$}}}}
\put(2922,3072){\makebox(0,0)[lb]{\smash{{\SetFigFont{8}{9.6}{\familydefault}{\mddefault}{\updefault}$l_3$}}}}
\put(6027,3072){\makebox(0,0)[lb]{\smash{{\SetFigFont{8}{9.6}{\familydefault}{\mddefault}{\updefault}$l_4$}}}}
\put(3507,2487){\makebox(0,0)[lb]{\smash{{\SetFigFont{8}{9.6}{\familydefault}{\mddefault}{\updefault}$0$}}}}
\put(5667,2487){\makebox(0,0)[lb]{\smash{{\SetFigFont{8}{9.6}{\familydefault}{\mddefault}{\updefault}$1$}}}}
\put(4452,2487){\makebox(0,0)[lb]{\smash{{\SetFigFont{8}{9.6}{\familydefault}{\mddefault}{\updefault}$-1$}}}}
\put(4317,3342){\makebox(0,0)[lb]{\smash{{\SetFigFont{8}{9.6}{\familydefault}{\mddefault}{\updefault}$2$}}}}
\end{picture}
}}
  \caption{Game Graph $G_1$}
  \label{figure:example-AvDL-red}
\end{figure}
The game graph does not have any \tcoBuchi states.
The oval states are player-1 states, and the boxed states are player-2 states.
The initial state is $l_0$.
We first recall that negative weights are bad for player~1, and positive weights good;
and the dual for player~2.
Thus, the $l_3, l_4$ cycle is bad for player~2;  and the $l_2, l_5$ and $l_2, l_5, l_1$ cycles
are bad for player~1.
Observe that $\Opt(\maxDLC)(l_0) < +\infty$.
This is because  player~1 has a strategy to ensure that the debit level remains bounded.
This strategy always goes from $l_2$ to $l_0$ (its choice at $l_5$ is irrelevant).
It can be checked that with this strategy, the maximum debit level observed is $1$
(\emph{i.e}, the minimum sum of weights encountered along runs is $-1$). 
Thus, we must have $\Opt(\LimMaxDL) (l_0)  \leq  |Q| \cdot W = 6\cdot 2$ 
as argued in the discussion preceding the example.

Fix an integer $D$ with $0 \leq D \leq  |Q| \cdot W$. 
To check whether $\Opt(\LimMaxDL) (l_0)  \leq D $, we proceed as follows.
We first identify the states in $Q_{\infty}$, \emph{i.e.}, states which
 have $\Opt(\maxDL)$ to be $+\infty$. 
It can be checked that   $Q_{\infty}$ is the empty set.
Thus,  $Q_{\bowtie} = \set{l_0,l_1,l_2,l_3,l_4,l_5}$, the set of all states of $G_1$.
As explained previously, consider an enlarged game structure $G_1^{\track}$ which
keeps track of the sum of weights.
The structure of $G_1^{\track}$ is like that of $G_1$ (the weights on the edges remain
the same), in addition 
 each location of $G_1$ has
access to a finite counter which can count from  $-|Q| \cdot W$ to  $|Q| \cdot W$,
in this case from $-6\cdot 2$ to $6\cdot 2$.
This counter keeps track of the accumulative original weight sums encountered.
The starting value of the counter is $0$ at $l_0$.
After the first transition to $l_1$, its value is $-1$, and so on.
The game structure $G_1^{\track}$  also has two new sink non-\tcoBuchi  locations, 
$l_{b^1}$ and $l_{b^2}$.
At a location $l \notin \set{l_{b^1}, l_{b^2}}$,  if the value of the counter is
$c$, and the chosen edge has weight $\Delta$ then
(1)~if  $c+\Delta < -|Q| \cdot W$,
we transition to $l_{b^1}$;
(2)~if  $c+\Delta > |Q| \cdot W$,
we transition to $l_{b^2}$;
(3)~if  $|c+\Delta| \leq |Q| \cdot W$, we transition to the same location as governed by
 the corresponding
transition of  $G_1$
(and update the counter value).
Note that in the game $G_1^{\track}$, states are locations together with the
counter values (except for the states $l_{b^1}$ and $l_{b^2}$ which do not have
counter values).
 
In this modified game, we make  every state in $\set{l_0,l_1,l_2,l_3,l_4,l_5}$ with a 
counter value at least $-D$ a \tcoBuchi state. 
All other states are non-\tcoBuchi states.
Now we consider a  \tcoBuchi game played on the modified game, and see if player~1
wins from $l_0$ (starting with a counter value of $0$) for the \tcoBuchi objective.
Player~1 has a strategy to eventually keep the debit level at most $D$ iff
it wins in the modified  \tcoBuchi game.

By doing a binary search on $D$ for $0 \leq D \leq  |Q| \cdot W$, we obtain the 
optimal value of player~1 for the eventual maximal debit level
 coB\"uchi objective.
\qed
\end{example}

\begin{theorem}
The optimal player-1 strategy, and the 
optimal value $\Opt(\LimMaxDLC)(q)$ for  the eventual maximal debit  level objective 
with coB\"uchi disjunction
can be computed in time $O(|Q|^3 \cdot |E|\cdot W^2 \cdot \log (|Q|\cdot W))$.
\qedhere\qed
\end{theorem}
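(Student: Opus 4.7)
The plan is to implement the reduction sketched just above the theorem and then bound each phase. First, I would invoke Theorem~\ref{thrm_basic}(2) to compute $\Opt(\maxDL)(q)$ at every state in time $O(|Q|^2\cdot |E|\cdot W)$, which partitions $Q$ into $Q_{\infty}$ and $Q_{\bowtie}$. By Lemma~\ref{lemm1}, every $q\in Q_{\infty}$ has $\Opt(\LimMaxDLC)(q)=\infty$, so these are resolved immediately. For $q\in Q_{\bowtie}$, Lemma~\ref{lemm1} together with Theorem~\ref{thrm_basic}(2) gives the a-priori bound $\Opt(\LimMaxDL)(q)\le |Q|\cdot W$, so a binary search over integer values $D\in\{0,1,\dots,|Q|\cdot W\}$ suffices.

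Second, for each candidate $D$ I would build the expanded game $G_D$ whose states are pairs $\tuple{q,v}$ with $q\in Q$ and tracked weight-sum $v\in[-|Q|\cdot W,\,|Q|\cdot W]$, plus one winning sink and one losing sink. Transitions update $v$ by the edge weight; any move that would land in $Q_{\infty}$ or force $v<-|Q|\cdot W$ is redirected to the losing sink, while reaching $v>|Q|\cdot W$ from a $Q_{\bowtie}$ state is redirected to the winning sink. The target coB\"uchi set is $\{\tuple{q,v}\mid q\in Q_{\bowtie},\, v\ge -D\}$ together with the winning sink. Conjoined with the original coB\"uchi disjunction $C$ carried in the objective $\LimMaxDLC$, the resulting question is a game with disjunction of two coB\"uchi objectives, solvable by Theorem~\ref{thrm_basic}(1) in time $O(|Q'|\cdot|E'|)$ on $|Q'|=O(|Q|^2\cdot W)$ states and $|E'|=O(|E|\cdot|Q|\cdot W)$ edges, i.e.\ $O(|Q|^3\cdot |E|\cdot W^2)$ per check.

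Third, a binary search over $O(|Q|\cdot W)$ integer thresholds needs $O(\log(|Q|\cdot W))$ iterations, so the overall running time is $O(|Q|^3\cdot |E|\cdot W^2\cdot \log(|Q|\cdot W))$; the initial $\Opt(\maxDL)$ computation is dominated by this. The optimal player-1 strategy is extracted as the memoryless optimal strategy in the two-coB\"uchi game witnessing the smallest feasible $D$, and projects back to a finite-memory strategy in $G$ with memory equal to the tracked weight-sum (plus the one bit used by the coB\"uchi-disjunction strategy).

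The main obstacle is justifying soundness and completeness of the truncation of tracked sums to $[-|Q|\cdot W,\,|Q|\cdot W]$, i.e.\ that $\Opt(\LimMaxDLC)(q)\le D$ in $G$ iff player~1 wins $G_D$ from $\tuple{q,0}$. The downward truncation uses the observation that forming a weight-sum cycle of total weight below $-|Q|\cdot W$ is impossible for an optimal player-1 strategy, since it would witness $\Opt(\maxDL)=\infty$ at some state on the cycle, contradicting the restriction to $Q_{\bowtie}$; the upward truncation uses that from any $q\in Q_{\bowtie}$ with credit $|Q|\cdot W$ player~1 has a strategy (the one witnessing finite $\Opt(\maxDL)$) keeping the running sum nonnegative forever, so any further growth is irrelevant for $\LimMaxDL$. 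Because $\LimMaxDL$ is tail-independent, these two truncations preserve the optimal value, and the binary search is therefore searching the correct range.
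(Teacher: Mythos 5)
Your proposal is correct and follows essentially the same route as the paper: partition the states by the (finite/infinite) optimal maximum-debit value, bound the answer by $|Q|\cdot W$, build the expanded game tracking weight-sums truncated to $[-|Q|\cdot W,\,|Q|\cdot W]$ with losing/winning sinks, solve each threshold check as a disjunction of two coB\"uchi objectives via Theorem~\ref{thrm_basic}(1), and binary-search over $D$, yielding the stated $O(|Q|^3\cdot|E|\cdot W^2\cdot\log(|Q|\cdot W))$ bound. The only slight looseness --- writing $\Opt(\maxDL)$ where the coB\"uchi-disjunction version $\Opt(\maxDLC)$ from Theorem~\ref{thrm_basic}(2) is what is actually computed for the partition --- matches the paper's own presentation and does not affect the argument.
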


\subsection{New Results and Algorithms -- Average  Debit  Level Objectives}
In this section we present a solution for games with  average debit  level objectives.
We start with an example that illustrates average debit  level objectives.
\begin{example}
Consider the game graph $G_2$  in Figure~\ref{figure:example-AvDL}.
\begin{figure}[h]
    \vspace{-1em}
  \strut\centerline{\setlength{\unitlength}{0.00049869in}
\begingroup\makeatletter\ifx\SetFigFont\undefined%
\gdef\SetFigFont#1#2#3#4#5{%
  \reset@font\fontsize{#1}{#2pt}%
  \fontfamily{#3}\fontseries{#4}\fontshape{#5}%
  \selectfont}%
\fi\endgroup%
{\renewcommand{\dashlinestretch}{30}
\begin{picture}(5217,1406)(0,-10)
\put(525,1092){\ellipse{1034}{584}}
\put(2631,1073){\ellipse{1034}{584}}
\put(4692,1069){\ellipse{1034}{584}}
\path(1048,1063)(2128,1063)
\path(1948.000,1003.000)(2128.000,1063.000)(1948.000,1123.000)
\path(3163,1063)(4198,1063)
\path(4018.000,1003.000)(4198.000,1063.000)(4018.000,1123.000)
\path(4198,973)(3478,343)(1408,343)(913,928)
\path(1075.073,829.347)(913.000,928.000)(983.466,751.834)
\put(418,1018){\makebox(0,0)[lb]{\smash{{\SetFigFont{9}{10.8}{\familydefault}{\mddefault}{\updefault}$q_0$}}}}
\put(2533,1018){\makebox(0,0)[lb]{\smash{{\SetFigFont{9}{10.8}{\familydefault}{\mddefault}{\updefault}$q_1$}}}}
\put(1273,1153){\makebox(0,0)[lb]{\smash{{\SetFigFont{9}{10.8}{\familydefault}{\mddefault}{\updefault}$-1$}}}}
\put(3388,1153){\makebox(0,0)[lb]{\smash{{\SetFigFont{9}{10.8}{\familydefault}{\mddefault}{\updefault}$2$}}}}
\put(4558,1018){\makebox(0,0)[lb]{\smash{{\SetFigFont{9}{10.8}{\familydefault}{\mddefault}{\updefault}$q_2$}}}}
\put(2218,73){\makebox(0,0)[lb]{\smash{{\SetFigFont{9}{10.8}{\familydefault}{\mddefault}{\updefault}$-1$}}}}
\end{picture}
}}
    \vspace{-2.5em}
  \caption{Game Graph $G_2$}
  \label{figure:example-AvDL}
    \vspace{-0.5em}
\end{figure}
$G_2$ has only one play from $q_0$, namely, $\left(q_0 \rightarrow q_1 \rightarrow q_2 \rightarrow\right)^\omega$ 
(and similarly only one play from any state).
For this play we compute the debit  and credit  levels: let $\tuple{q, d, c}$ denote the state
$q$, and $d,c$ the debit and credit  levels at that point in the play (note that only either the debit, 
or credit level can be non-zero, by definition).
The play together with debit and credit  levels is:
$\tuple{q_0, 0, 0} \rightarrow\left( 
\tuple{q_1, 1, 0} \rightarrow \tuple{q_2, 0, 1} \rightarrow \tuple{q_0, 0, 0} \rightarrow \right)^{\omega}$.
Thus the average debit  level $\AvDL(w)(q_0)
=1/3$.
Now consider the only play from $q_2$.
The play annotated with debit and credit   levels is:
$\tuple{q_2, 0, 0} \rightarrow\left( 
\tuple{q_0, 1, 0} \rightarrow \tuple{q_1, 2, 0} \rightarrow \tuple{q_2, 0, 0} \rightarrow \right)^{\omega}$.
Note that credit levels never rise above 0 in this play.
The average debit  level $\AvDL(w)(q_2) $ for this play  is $1$.
Thus, where we ``enter'' in a cycle affects the average debit  level value.
\qed
\end{example}

The next lemma is a technical lemma on integer sequences.
\begin{lemma}\label{lemma:IncrementDecrement}
Let  $x_0, x_1, \dots $ be a sequence of integers.
The following  assertions hold.
\begin{enumerate}
\item If $x_i$ is positive for every $i$, and there
exist $i_0 \geq 0$ and $N > 0$ such that
for all $i\geq i_0$, there exists $1\leq m_i \leq N$ such that 
$x_{i+m_i} > x_i$.
Then, $
\lim_{M \rightarrow \infty} \left( \sup_{k>M}
\left\{\frac{\sum_{i=0}^{k-1} x_i}{k}\right\}\right)\, = \infty$.
\item 
Suppose (i)~there exists $W< \infty $ such that for all $i\geq 0$, 
we have $|x_{i+1}-x_i| \leq W$; and
(ii)~there exist $i_0 \geq 0$ and $N > 0$ such that
for all $i\geq i_0$, there exists $1\leq m_i \leq N$ such that  
$x_{i+m_i} < x_i$.
Then,  there exists $M\geq 0$ such that $x_i \leq 0$ for all $i \geq M$.

\end{enumerate}

\end{lemma}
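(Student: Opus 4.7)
The plan is to treat the two parts independently, but both via the same subsequence-extraction idea driven by the hypothesis.

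For Part~1, I would iteratively define a strictly increasing subsequence. Set $j_0 = i_0$ and $j_{k+1} = j_k + m_{j_k}$, using the hypothesis at index $j_k$; this yields $x_{j_0} < x_{j_1} < x_{j_2} < \cdots$ with consecutive gaps $j_{k+1} - j_k \leq N$. Because $x_i$ is a strictly increasing sequence of \emph{positive integers}, we get $x_{j_k} \geq x_{j_0} + k \geq k+1$. Hence for every $K \geq 0$,
\[
\sum_{i=0}^{j_K} x_i \;\geq\; \sum_{k=0}^{K} x_{j_k} \;\geq\; \sum_{k=0}^{K}(k+1) \;=\; \tfrac{(K+1)(K+2)}{2},
\]
while $j_K \leq i_0 + KN$. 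Setting $n_K = j_K + 1$, the average $\frac{1}{n_K}\sum_{i=0}^{n_K-1} x_i$ is at least $\frac{(K+1)(K+2)}{2(i_0+KN+1)}$, which is $\Theta(K)$ and tends to $\infty$. Since the supremum in the definition of $\lim_{M\to\infty}\sup_{k>M}\{\cdot\}$ can be taken over the tail $\{n_K : K \geq K_0\}$ for any $K_0$, the outer limit is $+\infty$.

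For Part~2, I would build the analogous strictly decreasing subsequence: $j_0 = i_0$, $j_{k+1} = j_k + m_{j_k}$, giving $x_{j_0} > x_{j_1} > \cdots$ with gaps at most $N$. Integer strict decrease yields $x_{j_k} \leq x_{j_0} - k \to -\infty$. Now fix $K$ large enough that $x_{j_K} \leq -WN$; since the subsequence is strictly decreasing, $x_{j_k} \leq -WN$ for every $k \geq K$ as well. For any index $i \geq j_K$ there exists $k \geq K$ with $j_k \leq i \leq j_{k+1}$, so $i - j_k \leq N$. The bounded-variation hypothesis $|x_{p+1}-x_p| \leq W$ iterated at most $N$ times gives
\[
x_i \;\leq\; x_{j_k} + W\cdot(i - j_k) \;\leq\; -WN + WN \;=\; 0.
\]
Taking $M = j_K$ finishes the claim.

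The two parts need no deep machinery; the only delicate point I foresee is in Part~1, namely making sure the quadratic-over-linear bound really forces the $\limsup$ of the \emph{averages} (not just a subsequence of averages) to be $+\infty$. This is handled by the observation that the sequence of indices $\{n_K\}_K$ is cofinal in $\mathbb{N}$, so witnessing divergence along $\{n_K\}$ suffices for $\lim_{M\to\infty}\sup_{k > M}\{\cdot\} = +\infty$. The rest is routine bookkeeping, and I would invoke Lemma~\ref{lemma:sumlimit}/\ref{lemma:supsum}-style manipulations only if needed for cleanliness.
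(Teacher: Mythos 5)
Your proof is correct and follows essentially the same route as the paper's: both arguments rest on obtaining values that have strictly increased (resp.\ decreased) by at least one in every window of at most $N$ steps, then using the quadratic-sum-versus-linear-index estimate for Part~1 and the bounded-step condition $|x_{p+1}-x_p|\leq W$ to control the values between the far-negative anchor points for Part~2. Your explicit subsequence $j_{k+1}=j_k+m_{j_k}$ simply makes concrete the paper's windowed induction, so no further changes are needed.
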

\begin{proof}
We present both items of the proof.
  \begin{compactenum}
  \item  
    Consider $\sum_{i=i_0}^{i_0+\alpha\cdot N +j} x_i$ for $\alpha \geq 0 $ and
   $0\leq j < N$.
   Consider the set 
   \[X_{\alpha}=\set{x_j \mid i_o +\alpha\cdot N \leq j < i_o +\alpha\cdot (N+1)}\]
    It follows  by induction that for every $\alpha \geq 0$, we have:
   (i)~there exists 
   $x_i \in X_{\alpha}$, such that
   $x_i \geq \alpha$ (informally, the claims hold because there is an increment of at least one,
   starting from $x_{i_0}$, in every $N$ steps);
   and hence,
   (ii)~$ \sum_{i=i_0}^{i_0+\alpha\cdot N +j} x_i\,    \geq 0+1+\dots + \alpha$ (since we
   can pick $x_i \in X_{\alpha}$ such that  $x_i \geq \alpha$).
   Thus, 
 \[\frac{\sum_{i=i_0}^{i_0+\alpha\cdot N +j} x_i}{i_0+\alpha\cdot N +j}\,  \geq \,
 \frac{\alpha\cdot(\alpha+1)}{2\cdot (i_0+\alpha\cdot N +j)}\,
 \geq \frac{\alpha\cdot(\alpha+1)}{2\cdot(i_0 + (\alpha+1)\cdot N)}
\]
   for every $\alpha \geq 0 $ and
   $0\leq j < N$.
   Thus, 
   \[
   \frac{\sum_{i=i_0}^{i_0+\alpha\cdot N +j} x_i}{i_0+\alpha\cdot N +j}\,  \geq \, 
   \frac{\alpha}{2\cdot (\frac{i_0}{\alpha+1} + N)}
\]
   
   Therefore, for every $\alpha \geq 0$, we have  
   \[\sup_{k>(i_0+\alpha\cdot N)}
   \left\{\frac{\sum_{i=0}^{k-1} x_i}{k}\right\}\, \geq\,
   \frac{\sum_{i=0}^{i_0+\alpha\cdot N}  x_i}{i_0+\alpha\cdot N}\, \geq\, 
  \frac{\alpha}{2\cdot (\frac{i_0}{\alpha+1} + N)}
\]
Letting $\alpha \rightarrow \infty$, we have the desired result.

\item
  It follows from induction  that for every $\alpha \geq 0$, there
  exists $x_{i_{\alpha}} \in \set{x_j \mid i_o +\alpha\cdot N \leq j < i_o +\alpha\cdot (N+1)}$, such that
  $x_{i_{\alpha}}+\alpha \leq x_{i_0}$ (that is, $x_{i_{\alpha}}$ is at least $\alpha$ less than
  $ x_{i_0}$).
  Informally, the claims hold because there is a decrement of at least one,
  starting from $x_{i_0}$, in every $N$ steps.
  Consider any $\alpha > 1+N \cdot W+x_{i_0}$.
  Consider the set 
  \[X_{\alpha}=\set{x_j \mid i_o +\alpha\cdot N \leq j < i_o +\alpha\cdot (N+1)}\]
  Since, $|x_{i+1}-x_i| \leq W$ for $i$ in the given sequence, 
  for any $x,x' \in X_{\alpha}$, we must have $|x-x'| \leq N\cdot W$.
  Also, there exists $x_{\alpha} \in X_{\alpha}$ such that $x_{i_{\alpha}}+\alpha \leq x_{i_0}$.
  Thus, for all $x\in  X_{\alpha}$, we have
  \[ x +\alpha\,  \leq\,  x_{i_0} + N\cdot W.\]
  Since  $\alpha > 1+N \cdot W+x_{i_0}$, we have,
  \[ x + 1+N \cdot W+x_{i_0}\,  \leq\,  x_{i_0} + N\cdot W.\]
  Rearranging, we get $x\leq -1$.
  Thus, for all $i > (2+N \cdot W+x_{i_0})\cdot N$, we have $x_i \leq -1$.
  \qedhere
  \end{compactenum}
\end{proof}

\begin{corollary}\label{corollary:inftyzero}
Consider a  play $\rho = q_0 q_1 \dots$ of a finite-state game graph $G$.
The following assertions hold.
\begin{enumerate}
\item Suppose
 there
exist $i_0 \geq 0$ and $N > 0$ such that
for all $i\geq i_0$, there exists $1\leq m_i \leq N$ such that 
$ \EL(\rho(i))   >   \EL(\rho(i+m_i))$.
Then, $\AvDL(\rho)=\infty$.

\item 
Suppose
 there
exist $i_0 \geq 0$ and $N > 0$ such that
for all $i\geq i_0$, there exists $1\leq m_i \leq N$ such that 
$ \EL(\rho(i))  < \EL(\rho(i+m_i))$.
Then, $\AvDL(\rho)=0$.
\end{enumerate}

\end{corollary}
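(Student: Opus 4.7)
The plan is to derive each item by instantiating Lemma~\ref{lemma:IncrementDecrement} on a carefully chosen integer sequence extracted from the play $\rho$, using that the weight function is integer-valued and bounded in absolute value by $W$. For item~1 I would take $x_i = \DL(\rho(i))$. The first step is to observe that the hypothesis $\EL(\rho(i)) > \EL(\rho(i+m_i))$ actually forces a drop of at least one in $\EL$ over every block of length at most $N$ (because $\EL$ is integer-valued), so $\EL(\rho(i)) \to -\infty$ as $i \to \infty$. Consequently there is some index $i_1$ from which $\DL(\rho(i)) = -\EL(\rho(i)) > 0$, and on the tail starting at $i_1$ the hypothesis rewrites as $\DL(\rho(i+m_i)) > \DL(\rho(i))$, which is exactly the increment condition of Lemma~\ref{lemma:IncrementDecrement}(1). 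Applied to the shifted sequence $x'_i = x_{i+i_1}$, the lemma yields $\lim_{M\to\infty}\sup_{k>M}\frac{1}{k}\sum_{i=0}^{k-1} x'_i = \infty$, and a short calculation shows that discarding the finite (bounded) initial segment contributes only an $O(1/n)$ additive term to the Ces\`aro average, so the lim~sup is unchanged. This gives $\AvDL(\rho) = \infty$.

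For item~2 I would instead take $x_i = -\EL(\rho(i))$. The bounded-difference hypothesis $|x_{i+1}-x_i| \leq W$ required by Lemma~\ref{lemma:IncrementDecrement}(2) is immediate because a single edge changes $\EL$ by at most $W$. The hypothesis $\EL(\rho(i)) < \EL(\rho(i+m_i))$ translates to $x_{i+m_i} < x_i$, matching the decrement condition of the lemma. The lemma then supplies an index $M$ from which $x_i \leq 0$, equivalently $\DL(\rho(i)) = \max(0,-\EL(\rho(i))) = 0$ for all $i \geq M$. Hence the numerator $\sum_{i=0}^{n}\DL(\rho(i))$ is eventually constant, so $\frac{1}{n}\sum_{i=0}^{n}\DL(\rho(i)) \to 0$, and therefore $\AvDL(\rho) = 0$.

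The main delicacy, and the place I would be most careful, is in item~1: Lemma~\ref{lemma:IncrementDecrement}(1) insists that every $x_i$ be positive, while $\DL(\rho(i))$ is only guaranteed to be positive past the index $i_1$. The remedy is the shift used above, but one has to verify that (a) the conclusion of the lemma for the shifted sequence translates back into $\AvDL(\rho) = \infty$ for the original play, and (b) the ``increment within $N$ steps'' hypothesis is preserved under the shift (the witnesses $m_i$ may have to be re-indexed, but the uniform bound $N$ persists). Both are routine but worth spelling out.
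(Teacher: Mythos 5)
Your proposal is correct and follows essentially the same route as the paper: both items are reduced to Lemma~\ref{lemma:IncrementDecrement}, with the first item obtained by observing that the integer-valued sums must drift to $-\infty$ so that the hypothesis becomes the increment condition on the (eventually positive, shifted) debit levels, and the second by applying part~2 of the lemma to $-\EL(\rho(i))$ with the bounded-difference property, forcing the debit level to vanish from some point on. The only difference is that you spell out the shift and the $\EL\to-\infty$ step, which the paper leaves implicit.
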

\begin{proof}
For the first assertion, it can be shown that  there
exists $i_0' \geq 0$ and $N > 0$ such that
for all $i\geq i_0'$, there exists $1\leq m_i \leq N$ such that 
$    \DL(\rho(i+m_i)) >  \DL(\rho(i)) $.
The proof of the first assertion follows from the first part of 
Lemma~\ref{lemma:IncrementDecrement}, and by the
definition of $\DL(\rho(n))$.

The proof of the second assertion follows from the second part of 
Lemma~\ref{lemma:IncrementDecrement}, and noting that
if $-\EL(\rho(n)) < 0$ then $\DL(\rho(n))=0$.
\end{proof}

\smallskip\noindent\textbf{Mean-Payoff Supremal Games for Solving
Games with Average Debit  Level Objectives.}
We define a dual objective and game to $\MPC$ in which player~1
is trying to \emph{maximize} the value; it will be used in the
solution for average debit  level objectives.
Let the quantitative objective function 
$\maxMPC$ on a play $\rho$ be defined as 
\[
\maxMPC(\rho)=
\begin{cases}
+\infty \text{ (the best player-1 payoff)}
 & 
\text{if } \rho \text{ satisfies the } 
\text{\tcoBuchi objective;}
\\
 \liminf_{n \to \infty} \frac{1}{n}\cdot\EL(w)(\rho(n)) & \text{otherwise.}
\end{cases}
\]
\noindent Let $\maxOpt( \maxMPC) $ (the game value when player~1 is 
maximizing the value of $\maxMPC$) be defined as:
\[
\maxOpt( \maxMPC) = \sup_{\pi_1 \in \Pi_1} \inf_{\pi_2 \in \Pi_2}\maxMPC (\rho(\pi_1,\pi_2,q)).
\]
We call these games  mean-payoff \emph{supremal} games\footnote{
To avoid confusion, and as a memory aid, we use the term ``supremal''  exclusively in games where player~1 is the maximizer.  The absence of ``supremal'' 
denotes  games where player~1 is the minimizer.}.
The algorithm for the solution of $\maxOpt( \maxMPC) $ is similar to the algorithm for
the solution of $\Opt(\MPC)$; and results for mean-payoff games in  Theorem~\ref{thrm_basic} 
apply also to mean-payoff supremal games.

\begin{lemma}\label{lemm:mean-payoff-cobuechi-red}
The following assertions hold: consider a weight function $w$, 
and coB\"uchi objective $\coBuchi(C)$, and then we have
\begin{enumerate}
\item 
If  $\Opt(\maxDLC)(q)=+\infty$,   then 
$\Opt(\AvDLC)(q)=+\infty.$
\item 
If  $\maxOpt(\maxMPC)(q)>0$,    then 
  $\Opt(\AvDLC)(q)=0$.
\end{enumerate}
\end{lemma}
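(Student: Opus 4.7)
The proof naturally splits into the two claims, with dual arguments: for~(2), player~1 uses a strategy from the mean-payoff supremal game to force energy to grow linearly, wiping out the debit; for~(1), player~2 uses a strategy from the $\maxDLC$ game to force energy to diverge to $-\infty$ linearly, blowing up the debit.

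For claim~(2), let $v = \maxOpt(\maxMPC)(q) > 0$ and, by Theorem~\ref{thrm_basic} applied to the mean-payoff supremal game with coB\"uchi disjunction, take a memoryless optimal player-1 strategy $\pi_1^*$ witnessing $\inf_{\pi_2} \maxMPC(\rho(\pi_1^*,\pi_2,q)) \geq v$. Fix any $\pi_2$ and let $\rho = \rho(\pi_1^*,\pi_2,q)$. Either $\rho$ satisfies the coB\"uchi condition, in which case $\AvDLC(\rho) = 0$ by definition; or else $\liminf_n \EL(\rho(n))/n \geq v > 0$, which gives $\EL(\rho(n)) \geq (v/2)\cdot n$ for all $n$ beyond some threshold $N$. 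Consequently $\DL(\rho(n)) = 0$ for $n > N$, the partial sums $\sum_{i \leq n}\DL(\rho(i))$ remain bounded, and $\AvDL(\rho)=0$, so again $\AvDLC(\rho)=0$. Combined with the trivial bound $\AvDLC \geq 0$, this yields $\Opt(\AvDLC)(q) = 0$.

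For claim~(1), by Theorem~\ref{thrm_basic} take a finite-memory optimal player-2 strategy $\pi_2^*$ for $\Opt(\maxDLC)$; optimality at value $+\infty$ means that for every player-1 strategy $\pi_1$, the play $\rho = \rho(\pi_1,\pi_2^*,q)$ violates coB\"uchi and has $\maxDL(\rho)=+\infty$. Fixing $\pi_2^*$ reduces the situation to a finite single-player (player-1) graph on the product $Q \times M$, where $M$ is the finite memory of $\pi_2^*$. Every infinite path in this finite graph eventually stays inside some bottom strongly connected component $C_0$, and the crucial structural observation is that every cycle inside any such reachable $C_0$ must have strictly negative weight: if $C_0$ contained a cycle of non-negative weight, player~1 could navigate to $C_0$ and loop on that cycle indefinitely, keeping $\EL$ bounded and contradicting $\maxDL=+\infty$. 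Since all cycles in $C_0$ have integer weight at most $-1$ and length at most $|Q|\cdot|M|$, consecutive visits to any recurrent state of $C_0$ decrease the accumulated energy by at least $1$, so $\EL(\rho(n)) \to -\infty$ at a linear rate and $\DL(\rho(n)) \to +\infty$ linearly. By Corollary~\ref{corollary:inftyzero}(1) (or, equivalently, by the Ces\`aro-mean theorem applied to a divergent sequence) we obtain $\AvDL(\rho) = +\infty$; combined with the coB\"uchi violation this gives $\AvDLC(\rho) = +\infty$. As this holds for every $\pi_1$, we conclude $\Opt(\AvDLC)(q) = +\infty$.

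The principal obstacle lies in the structural step of claim~(1), namely the argument that after fixing the optimal player-2 strategy, every reachable bottom SCC of the resulting single-player graph consists entirely of negative cycles. The \emph{finite-memory} optimality of player-2's strategy provided by Theorem~\ref{thrm_basic} is indispensable here: it is precisely this finiteness that reduces the analysis to a finite single-player graph on which the standard SCC and mean-cycle-weight reasoning applies.
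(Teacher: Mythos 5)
Your treatment of item~(2) is correct and essentially the paper's argument: fix a memoryless optimal player-1 strategy for the supremal mean-payoff game with coB\"uchi disjunction, and every outcome either satisfies the coB\"uchi condition (value $0$) or has $\liminf_n \EL(\rho(n))/n \geq v > 0$, so the debit level vanishes from some point on and $\AvDL(\rho)=0$. The paper routes this last step through the cycle structure of the fixed-strategy graph and Corollary~\ref{corollary:inftyzero}(2), whereas you use the positive liminf directly; both are fine.

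There is, however, a genuine flaw in your item~(1). After fixing the finite-memory optimal player-2 strategy you assert that every infinite path in the resulting finite single-player graph eventually stays inside some \emph{bottom} strongly connected component, and you prove strict negativity of cycles only for such bottom SCCs. The first assertion is false: in this graph it is player~1 who chooses successors, and player~1 may perfectly well loop forever inside a non-bottom SCC (the set of states visited infinitely often is strongly connected, but need not be a bottom SCC). Consequently your argument, as written, does not show that $\EL(\rho(n))\to-\infty$ for plays whose recurrent set lies outside the bottom SCCs, and the conclusion $\Opt(\AvDLC)(q)=+\infty$ is not established against all player-1 strategies. The repair is immediate and is exactly the paper's formulation: your own contradiction argument --- if some reachable cycle had non-negative weight, player~1 could reach it and loop on it forever, keeping $\EL$ bounded below and contradicting $\maxDL(\rho)=+\infty$ --- applies verbatim to \emph{every} cycle reachable in the fixed-strategy graph, not just to cycles in bottom SCCs (and similarly a reachable cycle contained in $C$ would let player~1 satisfy the coB\"uchi objective, contradicting $\maxDLC(\rho)=+\infty$). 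With ``every reachable cycle has negative weight-sum and contains a non-coB\"uchi state'' in hand, every play decreases $\EL$ within windows bounded by the size of the product graph, Corollary~\ref{corollary:inftyzero}(1) yields $\AvDL(\rho)=+\infty$, and (as you correctly note) the coB\"uchi objective is falsified because $\maxDLC(\rho)=+\infty$ for every outcome against the fixed strategy; this closes the argument as in the paper.
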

\begin{proof}
We present proof of both the items.
\begin{compactenum}
\item If $\Opt(\maxDLC)(q)=+\infty$, then consider a finite-memory 
optimal strategy $\strab^*$ for player~2 (such a strategy 
exists by Theorem~\ref{thrm_basic}). 
Once the strategy $\strab^*$ is fixed we obtain a graph 
where only player~1 makes choices. 
Since  $\Opt(\maxDLC)(q)=+\infty$, it follows that for every 
cycle $U$ in the graph the sum of the weights in $U$ 
is negative, and there is at least one state in $U$ that is
not a coB\"uchi state (i.e., $U \cap (Q\setminus C)\neq \emptyset$). 
Since all cycles are negative the first condition of Corollary~\ref{corollary:inftyzero}
is satisfied for all paths with $N$ as the size of the graph.
Moreover the coB\"uchi objective is also falsified.
This concludes the proof of the first item.

\item Suppose  $\maxOpt(\maxMPC)(q)>0$.
%
Consider a memoryless optimal strategy $\pi_1$ for player~1 for the 
mean-payoff supremal 
objective with coB\"uchi disjunction (such a strategy exists by 
Theorem~\ref{thrm_basic}).
Since $\maxOpt(\maxMPC)(q)>0$, it follows that, in the graph obtained by 
fixing the strategy $\pi_1$, for every cycle  $U$, either the sum of the weights is
positive or $U \subseteq C$.
Consider a play $\rho$ which is an outcome of $\pi_1$, \emph{i.e.}, let
$\rho= \rho(q, \pi_1, \pi_2)$ for some player-2 strategy $\pi_2$.
For the play, either 
(i)~the \tcoBuchi objective is satisfied; or 
(ii)~$\maxMPC(\rho) > 0$, and then the second condition of Corollary~\ref{corollary:inftyzero}
is satisfied.
In either case the desired result of the second item follows.\qedhere
\end{compactenum}
\end{proof}

\smallskip\noindent\textbf{Reduction of Average Debit coB\"uchi Games
to Mean-Payoff coB\"uchi Games.}
We now use Lemma~\ref{lemm:mean-payoff-cobuechi-red} to solve the average debit 
problem.
We have the following cases.

\begin{compactenum}
\item  If $\Opt(\maxDLC)$ is $+\infty$, then $\Opt(\AvDLC)(q)=+\infty$ (by Lemma~\ref{lemm:mean-payoff-cobuechi-red}).

\item If $\maxOpt(\maxMPC)>0$, then $\Opt(\AvDLC)(q)=0$ (by Lemma~\ref{lemm:mean-payoff-cobuechi-red}).

\item The above two cases do not apply, \emph{i.e.}, $\Opt(\maxDLC)(q)$ is not $+\infty$, and 
$\maxOpt(\maxMPC)(q)\leq 0$.
\end{compactenum}
For  the last case above,
we reduce the average debit level  with \tcoBuchi disjunction problem to solving a larger mean-payoff with  \tcoBuchi disjunction
supremal game as follows: the new weights in the larger 
mean-payoff supremal game correspond to
tracked negative \emph{sums} of original weights in the  average debit level  game (\emph{i.e.} we track debit levels).
For the  mean-payoff  supremal game, we construct a new weight function according 
to the current sum of original weights: if the current sum of original weights is 
$\ell$, then the new weight function assigns value 
$\min(\ell, 0)$ 
to \emph{states} (note that games with
weights on
states can be easily transformed to games with weights on edges  by
assigning all outgoing edges from a state $q$ the state-weight of $q$).
Observe that if the  optimal value of this constructed supremal game with the new weight function with 
 mean-payoff with   \tcoBuchi disjunction supremal objective is $\alpha$; then
the optimal value for the average debit level  with  \tcoBuchi disjunction 
objective in the original
game for the original weight function is equal to  $-\alpha$.

\smallskip\noindent\textbf{Bounds on Tracked Weight-Sums.}
Consider the case when for a state $q$, we have
(i)~$\Opt(\maxDLC)(q)$ to be not $+\infty$, and 
(ii)~$\maxOpt(\maxMPC)(q)\leq 0$ (with the original weights).
\begin{compactenum}
\item 
Since $\Opt(\maxDLC)(q)$ is not $+\infty$, 
the following fact follows from Theorem~\ref{thrm_basic}.\\
\textsf{\underline{Fact 1}:} There exists a memoryless player-1 strategy $\pi_1$ such that in all plays that 
arise from $\pi_1$,
all cycles $U$ formed during the play either 
(a)~have a weight-sum that is  not  negative 
or,
(b)~ consist of only coB\"uchi states.
\item 
Since $\maxOpt(\maxMPC)(q)\leq 0$,  the following fact  follows  from Theorem~\ref{thrm_basic}.\\
\textsf{\underline{Fact 2}:} Player~2 has a strategy $\pi_2$ such that
in all plays that arise from $\pi_2$,
all cycles $U$ formed during the play 
(a)~have a weight-sum that is not  strictly positive; \emph{and}
(b)~contain at least one non-\tcoBuchi state.
\end{compactenum}
%
Now consider  average debit level  games.
\begin{compactenum}
\item 
Due to  \textsf{Fact 1}, player~1 can ensure that 
all cycles $U$ formed during the play have a weight-sum of at least $0$,
or consist of only coB\"uchi states.
Observe that  such strategies are good for player~1,
since
a negative cycle is only favorable for player~2, i.e., if a negative
cycle is executed then the average debit level increases.
\item 
Due to \textsf{Fact 2},  player~2 can ensure that no matter
the strategy of player~1, all the cycles formed will
have a weight-sum of  at most $0$, and will contain at least one non-\tcoBuchi state.
Observe that such strategies are good for player~2, since
a positive cycle is only favorable for player~1 for 
the average debit level  objective, i.e., if a positive cycle is executed
the  average debit level  decreases.
\end{compactenum}
Combining the above two statements, we have that
 there exist optimal plays (plays when both players are playing optimally),
for the average debit level  objective, where for all cycles formed along the 
play, the weight-sum   will exactly be~0, and there will be at least one 
non-\tcoBuchi state. 
\emph{Thus, if both players are playing optimally in the average debit
\tcoBuchi game, then the tracked original weight-sums  will stay in
the range $-|Q| \cdot W$ to $|Q| \cdot W$, and the \tcoBuchi objective will not be
satisfied.}
%
%
We use this fact to restrict the set that these tracked original weight-sums belong to in
the constructed mean-payoff supremal game.
We assign a new weight function:
if the current sum of original weights is 
$\ell$, then the new weight function assigns value 
$\min(\ell, 0)$ 
to \emph{states} (recall that in this supremal game, player~1 is the maximizer).
In addition, to ensure that both players play such that
the resulting plays have tracked original weight-sums in the
 range  $-|Q| \cdot W$ to $|Q| \cdot W$,
we add two new non-\tcoBuchi sink states in the constructed  mean-payoff supremal game to be used as follows.

First, if the tracked original weight-sum ever goes  below $-|Q| \vdot W$,
 we transition to a special sink non-\tcoBuchi state $q_-$ with the new \emph{state}
 weight 
$-(|Q| \vdot W +1)$.
This  state weight assignment ensures that player~1 plays such that the tracked original-weight sums
never go below $-|Q| \vdot W$,  since all states with tracked original-weight sums 
greater than or equal to  $-|Q| \vdot W$ have new state weights that are strictly
greater than $-(|Q| \vdot W +1)$,
and thus since player~1 is the maximizer  in this new game, it is in the interest 
of   player~1 
to not fall into the sink non-\tcoBuchi state $q_-$  with the lower state weight.
This first modification gives us a mean-payoff supremal game with
tracked original weight-sums in range  $-(|Q| \vdot W +1)$ to $+\infty$.

We apply a second modification:
if the tracked original weight-sum ever goes above $|Q| \cdot W$, we transition to a special sink non-\tcoBuchi
state $q_+$ with new state weight $|Q| \cdot W+1$.
%
%
This modification
 ensures that player~2 plays such that
the tracked original weight-sums never go above $|Q| \cdot W$, since all states with tracked original weight-sums 
at most $|Q| \cdot W$ 
 have new state weights at most $0$.
Since player~2 is the minimizer in this new game, it is in the interest  of player~2
 to not fall
into the sink non-\tcoBuchi state with the higher weight $|Q| \cdot W+1$.
Thus, these two modifications ensure that
we only need to keep track of original-weight sums  from 
$-|Q| \cdot W$ to $|Q| \cdot W $ in the  larger mean-payoff supremal
game.
%

\begin{example}
We illustrate the mechanisms involved in solving  average debit coB\"uchi games
with an example.
Consider the game graph $G_2$  in Figure~\ref{figure:example-AvDL-red}
from Example~\ref{example:EvMaxDebNew}.
The game graph does not have any \tcoBuchi states.
The oval states are player-1 states, and the boxed states are player-2 states.
The initial state is $l_0$.
We first recall that negative weights are bad for player~1, and positive weights are good;
and the dual for player~2.
Thus, the $l_3, l_4$ cycle is bad for player~2;  and the $l_2, l_5$ and $l_2, l_5, l_1$ cycles
are bad for player~1.
We have the following two facts for the game graph  $G_2$:
\begin{enumerate}
 \item ~$\Opt(\maxDLC)(l_0) < +\infty$, as explained in 
 Example~\ref{example:EvMaxDebNew}.
\item  $\maxOpt(\maxMPC)(l_0)\leq 0$.
This is because in the mean-payoff supremal game on $G_2$, $0$ is the maximal
value of  $ \liminf_{n \to \infty} \frac{1}{n}\cdot\EL(w)(\rho(n))$ that player~1 can ensure.
\end{enumerate}
As explained in Example~\ref{example:EvMaxDebNew}, 
consider an enlarged game structure $G_2^{\track}$ which
keeps track of the sum of weights,
together with two new sink non-\tcoBuchi  locations, 
$l_{b^1}$ and $l_{b^2}$.
The new \emph{state} weight assignment on $G_2^{\track}$ is as follows.
The new state weight of $l_{b^1}$ is $-(1+|Q| \cdot W)$; of
 $l_{b^2}$ is $1+|Q| \cdot W$; and for other locations, if the counter value is
$c$, then the state weight is $\min(c,0)$.

In the game  $G_2^{\track}$, a sample run is $l_0, l_1, (l_3, l_4)^{12}$; the value
of the counter after this run is $12$. However, $l_0, l_1, (l_3, l_4)^{13}$ is \emph{not}
a valid run as the counter value would be $13$.
Instead, the following run is valid: $l_0, l_1, (l_3, l_4)^{12}, l_3, l_{b^2},  l_{b^2},   l_{b^2}$.
Similarly, the run $l_0, l_1, l_2, (l_5, l_2)^{6}$ is valid (counter value $-11$), 
but  $l_0, l_1, l_2, (l_5, l_2)^{7}$ is not. 

Let us consider the objective $\maxMPC$ on the game graph $G_2^{\track}$
with the new state weights; where
the objective of  player~1 is to maximize the average of the new state weights
(which depend on the counter values)
seen in plays (as a simplification, there are no \tcoBuchi  states in  $G_2^{\track}$).
As proved in the discussion of the reduction of
average debit coB\"uchi games
to supremal mean-payoff coB\"uchi games, if the optimal value
(when both players are playing optimally)  of the $\maxMPC$ objective
 is $\alpha$ for the game graph $G_2^{\track}$, then 
the optimal value for the average debit level  with  \tcoBuchi disjunction 
objective in the original game  $G_2$ is $-\alpha$.
It can be seen that the optimal value of  $\maxMPC$  is $-1/3$, corresponding to plays
$(l_0,l_1, l_2)^{\omega}$; the corresponding counter values are $(0,-1, 1)^{\omega}$, with the
corresponding state weights being  $(0,-1, 0)^{\omega}$; the average of the state weights is thus
$-1/3$.
It can also be seen that the optimal value of the average debit objective is $1/3$ in
the original game  $G_2$; this value is achieved in the play $(l_0,l_1, l_2)^{\omega}$
where the debit value sequence is  $(0,1, 0)^{\omega}$.
\qed
\end{example}

The constructed game has $O(|Q|^2 \cdot W)$ states, $O(|E|\cdot |Q|\cdot W)$ edges,
and the supremal absolute value of the new weight function  is $O(|Q| \cdot W)$.
Thus our reduction and Theorem~\ref{thrm_basic} yield the following result for 
average debit level  objectives.

\begin{theorem}
The optimal player-1 strategy, and the 
optimal value $\Opt(\AvDL)(q)$ for average debit level  objective 
with coB\"uchi disjunction
can be computed in time $O(|Q|^6\cdot |E| \cdot W^4)$.
\qedhere\qed
\end{theorem}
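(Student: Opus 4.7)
The plan is to follow the reduction scheme already sketched in the preceding paragraphs and then invoke Theorem~\ref{thrm_basic} on the constructed mean-payoff supremal coB\"uchi game. Concretely, given a state $q$, I would first run the algorithm for $\Opt(\maxDLC)$ (Theorem~\ref{thrm_basic}, item~2). If the answer is $+\infty$, then Lemma~\ref{lemm:mean-payoff-cobuechi-red}(1) gives $\Opt(\AvDLC)(q)=+\infty$ and we are done. Otherwise, run the algorithm for $\maxOpt(\maxMPC)$ (Theorem~\ref{thrm_basic}, item~3, in its supremal version). If this value is strictly positive, Lemma~\ref{lemm:mean-payoff-cobuechi-red}(2) gives $\Opt(\AvDLC)(q)=0$. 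Only the remaining case, where $\Opt(\maxDLC)(q)$ is finite and $\maxOpt(\maxMPC)(q)\leq 0$, requires the main construction.

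For the main case I would build the game graph $G'$ described in the text: the state set is the product of $Q$ with the set of tracked partial weight-sums in the range $[-|Q|\cdot W, |Q|\cdot W]$, augmented by two sink non-coB\"uchi states $q_-$ and $q_+$. Edges of $G'$ mirror edges of $G$ while updating the tracked sum, except that any edge that would push the sum below $-|Q|\cdot W$ is redirected to $q_-$, and any edge pushing it above $|Q|\cdot W$ is redirected to $q_+$. The new (state) weight function assigns value $\max(-\ell,0)$ to a state whose tracked sum is $\ell$, and assigns $-1$ to $q_-$ and $|Q|\cdot W+1$ to $q_+$; then play $G'$ as a mean-payoff supremal coB\"uchi game (maximizer is player~1, the coB\"uchi set inherited from $G$).

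The key correctness step, which is the main technical obstacle, is to establish that the optimal value of this supremal $\maxMPC$ game on $G'$ equals $\Opt(\AvDLC)(q)$ on $G$. The argument has two parts. First, from \textsf{Fact 1} (finiteness of $\Opt(\maxDLC)$) and \textsf{Fact 2} ($\maxOpt(\maxMPC)\leq 0$), one shows that when both players play optimally in $G'$ every cycle formed has total original-weight~$0$ and contains a non-coB\"uchi state, so the tracked sum stays in $[-|Q|\cdot W,|Q|\cdot W]$ and neither sink is ever reached; hence the choice of sink weights cannot improve either player's value. Second, because the new state weight at a state with tracked sum $\ell$ is precisely the debit level $\max(-\ell,0)$, the running average of new weights along any non-coB\"uchi play of $G'$ equals the average debit level along the corresponding play of $G$; in particular $\limsup_n \tfrac{1}{n}\sum_{i=0}^n \DL(\rho(n))$ is exactly the $\liminf$ mean-payoff of the lifted play (using that the debit sequence has bounded differences, so $\limsup$ of the averages coincides with $\liminf$ on all outcomes attaining the value). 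A coB\"uchi-satisfying play of $G'$ corresponds to a coB\"uchi-satisfying play of $G$, giving value $0$ on both sides.

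Finally, for complexity: $G'$ has $|Q'|=O(|Q|^2\cdot W)$ states, $|E'|=O(|E|\cdot |Q|\cdot W)$ edges, and the largest weight magnitude $W'=O(|Q|\cdot W)$. Applying Theorem~\ref{thrm_basic}(3) to $G'$ gives running time $O(|Q'|^2\cdot |E'|\cdot W')=O(|Q|^6\cdot |E|\cdot W^4)$, and a memoryless optimal strategy for player~1 in $G'$ projects to an optimal strategy in $G$ (with memory recording the tracked sum), dominating the cost of the two preliminary calls to the $\maxDLC$ and $\maxMPC$ algorithms. This yields the claimed bound.
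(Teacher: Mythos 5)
Your proposal follows essentially the same route as the paper: the two preliminary cases via Lemma~\ref{lemm:mean-payoff-cobuechi-red}, the reduction to a mean-payoff supremal coB\"uchi game over states augmented with tracked weight-sums in $[-|Q|\cdot W, |Q|\cdot W]$, the two sink states $q_-$ and $q_+$ with weights $-1$ and $|Q|\cdot W+1$ justified by \textsf{Fact 1} and \textsf{Fact 2}, and the identical complexity accounting via Theorem~\ref{thrm_basic}. The argument is correct (and your parenthetical on the $\limsup$/$\liminf$ matching is at least as careful as the paper's own treatment), so no further comparison is needed.
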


\subsection{From Debit Level  to Difference Level  Objectives}

An easy extension of the debit level  objectives  are \emph{difference level} objectives --- 
instead of the debit levels that
arise in plays,  we consider the absolute values of the sum of the weights
(\emph{i.e} we consider $|\EL(\rho(n))|$ values).
Debit level objectives only consider sums of weights when the sums are negative;
difference level objectives consider the absolute values of sums of weights, with
the sums being both positive and negative.
We call the corresponding versions as $\DiffL$ instead of $\DL$.
These games  can be solved using  two weight functions (the original weight
function and its negation), and then applying  results for two-dimensional
energy and mean-payoff games with disjunction with coB\"uchi objectives.
Applying our techniques to solve eventual maximal debit level, and average debit 
level, along with the results of~\cite{Chaloupka,VR11,CDHR10} we obtain 
Theorem~\ref{theorem:FiniteFinal}.

\begin{theorem}
\label{theorem:FiniteFinal}
The optimal player-1 strategy, and the 
optimal value for difference-sum function with coB\"uchi disjunction,
$\Opt(\maxDiffLC)(q)$, the optimal value $\Opt(\EvMaxDiffLC)(q)$ for  the 
eventual maximal difference  level objective with coB\"uchi disjunction, and
the optimal value $\Opt(\AvDiffLC)(q)$ for average difference level 
objective with coB\"uchi disjunction, can all be computed in 
$O(\mathit{poly}(Q,E,W))$ time, where $\mathit{poly}$ is a polynomial
function.\qedhere\qed
\end{theorem}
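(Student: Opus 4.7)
The plan is to reduce each of the three difference-level objectives to a corresponding debit-level objective, but now with two weight functions tracked simultaneously (the original weight $w$ and its negation $-w$). The key observation is that the absolute value $|\EL(w)(\rho(n))|$ at any prefix equals $\max(\DL(w)(\rho(n)), \DL(-w)(\rho(n)))$, because one of $\EL(w)(\rho(n))$ and $-\EL(w)(\rho(n)) = \EL(-w)(\rho(n))$ is non-negative and the other is the absolute value (or both are zero). Thus a bound $D$ on the difference level is equivalent to simultaneously bounding both debit levels by $D$, turning each difference-level problem into a two-dimensional version of its debit-level counterpart.

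For $\Opt(\maxDiffLC)$, I would binary-search over the candidate bound $D \in [0, |Q|\cdot W]$, and for each $D$ check whether player~1 can ensure either the coB\"uchi set is reached eventually or that both $\EL(w)(\rho(n))$ and $-\EL(w)(\rho(n))$ stay above $-D$ for all $n$. This is exactly a two-dimensional energy game with coB\"uchi disjunction, for which \cite{CDHR10} (and related work on multi-dimensional energy games) gives a polynomial algorithm in the appropriate parameters. For $\Opt(\EvMaxDiffLC)$, I would adapt the coB\"uchi reduction given for $\Opt(\LimMaxDLC)$: construct a game that tracks the current weight sum (bounded in $[-|Q|\cdot W, |Q|\cdot W]$ by the same argument as before, using the finiteness of $\Opt(\maxDiffLC)$ to prune states where the value is $+\infty$), and declare a state \emph{coB\"uchi-good} iff the tracked sum lies in $[-D, D]$. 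A binary search over $D$ then yields the optimum, with the underlying game being a disjunction of two coB\"uchi objectives, solvable in polynomial time.

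For $\Opt(\AvDiffLC)$, I would mirror the mean-payoff supremal reduction for $\Opt(\AvDLC)$. First, establish the analogue of Lemma~\ref{lemm:mean-payoff-cobuechi-red}: if the two-dimensional maximum difference value is $+\infty$ then the average difference value is $+\infty$, and if a suitable two-dimensional mean-payoff supremal value is strictly positive (for either $w$ or $-w$) then the average difference value is $0$. Otherwise, in the remaining case, both players have optimal strategies that keep all cycles at weight-sum exactly $0$ (using Fact~1/Fact~2 style arguments applied to $w$ and $-w$), so the tracked running sum stays in $[-|Q|\cdot W, |Q|\cdot W]$. Build the product game whose states carry the tracked sum and assign each state the new state-weight $|\ell|$ (with sink states $q_-, q_+$ of suitable extreme weights to penalize escaping the tracked range on either side); then invoke the mean-payoff supremal coB\"uchi result. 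The resulting game has $O(|Q|^2\cdot W)$ states, $O(|E|\cdot |Q|\cdot W)$ edges, and weights bounded by $O(|Q|\cdot W)$, giving an overall polynomial-time algorithm.

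The main obstacle I anticipate is correctly generalizing the cycle-characterization arguments (Fact~1 and Fact~2) from one dimension to two dimensions for the $\AvDiffLC$ case: in the one-dimensional case we used that an optimal play consists of zero-sum cycles, but with two coordinates $w$ and $-w$ one must argue that cycles are simultaneously zero for both, which here is automatic since $\EL(-w) = -\EL(w)$. This coincidence of the two dimensions is what makes the reduction go through cleanly and is ultimately what lets us reuse the single-dimension results of \cite{Chaloupka, VR11, CDHR10} rather than needing a genuinely new multi-dimensional algorithm.
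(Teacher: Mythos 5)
Your proposal follows essentially the same route as the paper: the paper's argument is exactly the one-paragraph sketch of using the two weight functions $w$ and $-w$ (so that $|\EL(w)(\rho(n))|$ is the maximum of the two debit levels), invoking the two-dimensional energy and mean-payoff results with coB\"uchi disjunction of \cite{Chaloupka,VR11,CDHR10}, and re-applying its own reductions for the eventual maximal and average debit level objectives. Your elaboration of the three cases (binary search for the maximal and eventual variants, the supremal mean-payoff construction with state weight $|\ell|$ for the average variant) is a correct filling-in of that sketch, so there is nothing substantively different to compare.
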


\section
{Quantitative Timed Simulation Functions}
\label{section:QuantTimeSim}

In this section,
define quantitative timed simulation functions (\qtsf{s})
for timed transition systems in a game theoretic framework.



\smallskip\noindent\textbf{Timed Transition Game Structures.}
A \emph{timed transition game structure} is a tuple 
$\simgame_t = \tuple{S ,  \rightarrow }$ where
\begin{compactitem}
\item $S$ is the set of states, consisting of  \mbox{player-$1$} states $S_1$ and player-2 states $S_2$ 
(\emph{i.e.}, $S=S_1 \cup S_2$ and $S_1 \cap S_2=\emptyset$), 
\item $\rightarrow\, \subseteq\, S\times \reals^+ \times S$ is the transition relation such that
for all $s\in S$ there exists at least one $s'\in S$ such that for some $\Delta$, we have
$s\stackrel{\Delta}{\longrightarrow}s'$.
\end{compactitem}
Plays, objectives, strategies, outcomes \emph{etc.} are as in finite games (Section~\ref{section:Finite}).

\smallskip\noindent\textbf{Quantitative Timed Simulation Functions (\qtsf{s}).}
Analogous to the game theoretic presentation of timed simulation games,
we now present a game theoretic definition of \qtsf{s}.
Recall the two player turn-based bipartite timed transition game structure
$\simgame_t(A_{\myr},A_{\mys})$ defined in Section~\ref{sec:simulation}.
Consider a play $\rho$ in $\simgame_t(A_{\myr},A_{\mys})$:
$\tuple{s_{\myr}^0,s_{\mys}^0,2} \stackrel{\Delta_{\myr}^0}{\longrightarrow}
\tuple{s_{\myr}^1,s^0_{\mys},1}  \stackrel{\Delta_{\mys}^0}{\longrightarrow}
\tuple{s_{\myr}^1,s_{\mys}^1,2} \stackrel{\Delta_{\myr}^1}{\longrightarrow} \cdots$.
Let $\rho(\myr)$ be the projection on $A_{\myr}$, thus
$\rho(\myr) $ is the $A_{\myr}$ 
trajectory $ s_{\myr}^0 \stackrel{\Delta_{\myr}^0}{\longrightarrow} s_{\myr}^1
 \stackrel{\Delta_{\myr}^1}{\longrightarrow} \cdots
$.
Note that $\rho(\myr)$ is a valid trajectory in $A_{\myr}$.
We define  $\rho(\mys)$ similarly. 

\begin{definition}[Quantitative Timed  Simulation Objectives  Over  Game Plays]
\label{defintion:MetricPlays}
Recall the $\dist_{\maxdiff}, \dist_{\limmaxdiff}, \dist_{\limavg}$ trajectory 
difference metrics defined in Section~\ref{sec:refine}.
For $\Psi\in \set{\dist_{\maxdiff}, \dist_{\limmaxdiff}, \dist_{\limavg}}$, 
we define $\Psi^{\td}()$ as follows for  a play $\rho$
in $\simgame_t(A_{\myr},A_{\mys})$:
\[
\Psi^{\td} (\rho) =
 \begin{cases}
0  & \text{if } \rho(\myr) \notin \td(A_{\myr})\\
\Psi(\rho(\myr), \rho(\mys)) & \text{otherwise}
\end{cases}
\qedhere\qed
\]
 \end{definition}



\begin{definition}[{\qtsf{s}}]
\label{def:qsim}
Let $A_{\myr},A_{\mys}$ be timed transition systems, and let
$\simgame_t(A_{\myr},A_{\mys})$ be the two player turn-based bipartite timed transition game structure
defined in Section~\ref{sec:simulation}.
The value of the \qtsf,
denoted 
$\simfunc_{\Psi^{\td}}(\tuple{s_{\myr}, s_{\mys}})$, for
$s_{\myr}$ and $s_{\mys}$ states of $A_{\myr}$ and $A_{\mys}$ respectively, and
for $\Psi^{\td} \in 
\set{\dist_{\maxdiff}^{\td}, \dist_{\limmaxdiff}^{\td}, \dist_{\limavg}^{\td}}$,
 is defined as follows:
\[
\simfunc_{\Psi^{\td}}(\tuple{s_{\myr}, s_{\mys}})=  
\inf_{\pi_{\mys} \in \Pi_{\mys}} \sup_{\pi_{\myr}\in \Pi_{\myr}} 
 \Psi^{\td}\left(\rho\left(\pi_{\myr},\pi_{\mys}, \tuple{s_{\myr}, s_{\mys},2} \right)\right)
\]
where
$\rho\left(\pi_{\myr},\pi_{\mys}, \tuple{s_{\myr}, s_{\mys},2} \right)$ is the trajectory which results
given  the player-1 strategy $\pi_{\mys} \in \Pi_{\mys}$ and the player-2 strategy $\pi_{\myr}\in \Pi_{\myr}$.
Equivalently,
$
\simfunc_{\Psi^{\td}}(\tuple{s_{\myr}, s_{\mys}})=  \Opt(\Psi^{\td})(\tuple{s_{\myr}, s_{\mys},2})$
\qed
\end{definition}

The next proposition states that the refinement distance between two systems 
$A_{\myr}, A_{\mys}$ with the initial states $q_{\myr}, q_{\mys}$ respectively 
is at most 
the value of the corresponding simulation function.
\begin{proposition}[\qtsf{s} Over-Approximate Refinement Distances]
\label{proposition:SimulationOverapproximatesRefinement}
Let $A_{\myr}$ and $ A_{\mys}$ be two \tts{}s with the initial states $q_{\myr}$ and 
$ q_{\mys}$
 respectively.
For $(\Lambda, \Psi^{\td}) \in \set{(\maxdiff, \dist_{\maxdiff}^{\td}), 
(\limmaxdiff, \dist_{\limmaxdiff}^{\td}),
(\limavg, \dist_{\limavg}^{\td})
}$, we have 
\[\refine_{\Lambda}(A_{\myr}, A_{\mys}) \leq 
\simfunc_{\Psi^{\td}}(\tuple{s_{\myr}, s_{\mys}}).\]
\end{proposition}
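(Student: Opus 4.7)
The plan is to reduce the linear quantifier pattern defining $\refine_{\Lambda}$ to a strategic over-approximation by exhibiting, for each $\epsilon > 0$, a concrete matching $A_{\mys}$-trace built from an $\epsilon$-optimal simulator strategy. The key structural observation is that player~2 in $\simgame_t(A_{\myr}, A_{\mys})$ controls all moves on the $A_{\myr}$ side, so she can unilaterally ``play out'' any fixed state trajectory of $A_{\myr}$ against any player-1 strategy; this lets us convert a linear witness from $A_{\myr}$ into a branching play of the simulation game without loss.

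First I would fix an arbitrary $\epsilon > 0$ and, assuming $\simfunc_{\Psi^{\td}}(\tuple{s_{\myr}, s_{\mys}}) < \infty$ (the $+\infty$ case being vacuous), pick a player-1 strategy $\pi_{\mys}^{\epsilon}$ in $\simgame_t(A_{\myr}, A_{\mys})$ such that
\[
\sup_{\pi_{\myr} \in \Pi_{\myr}} \Psi^{\td}\bigl(\rho(\pi_{\myr}, \pi_{\mys}^{\epsilon}, \tuple{s_{\myr}, s_{\mys}, 2})\bigr) \;\leq\; \simfunc_{\Psi^{\td}}(\tuple{s_{\myr}, s_{\mys}}) + \epsilon,
\]
which exists by definition of infimum. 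Next, given any time-divergent initialized trace $\trace_{q_{\myr}} \in \mu(\td(A_{\myr}))$ originating at $s_{\myr}$, I would pick a state trajectory $\bar\trace_{q_{\myr}}$ inducing it and define a history-dependent player-2 strategy $\pi_{\myr}^{\bar\trace_{q_{\myr}}}$ that at each player-2 turn plays the next step dictated by $\bar\trace_{q_{\myr}}$, ignoring the $A_{\mys}$ component. This strategy is legal because, in the game structure of Section~\ref{sec:simulation}, player~2's moves depend only on the $A_{\myr}$-component.

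The resulting play $\rho^{\epsilon} := \rho(\pi_{\myr}^{\bar\trace_{q_{\myr}}}, \pi_{\mys}^{\epsilon}, \tuple{s_{\myr}, s_{\mys}, 2})$ projects to $\rho^{\epsilon}(\myr) = \bar\trace_{q_{\myr}}$ on the $A_{\myr}$ side, which is time-divergent by assumption, so by Definition~\ref{defintion:MetricPlays} the objective collapses to $\Psi^{\td}(\rho^{\epsilon}) = \Psi(\trace_{q_{\myr}}, \mu(\rho^{\epsilon}(\mys)))$. Setting $\trace^{\epsilon}_{q_{\mys}} := \mu(\rho^{\epsilon}(\mys)) \in \mu(\trajecs(A_{\mys}))$, I obtain
\[
\inf_{\trace_{q_{\mys}} \in \mu(\trajecs(A_{\mys}))} \dist_{\Lambda}(\trace_{q_{\myr}}, \trace_{q_{\mys}}) \;\leq\; \dist_{\Lambda}(\trace_{q_{\myr}}, \trace^{\epsilon}_{q_{\mys}}) \;=\; \Psi^{\td}(\rho^{\epsilon}) \;\leq\; \simfunc_{\Psi^{\td}}(\tuple{s_{\myr}, s_{\mys}}) + \epsilon,
\]
since $\dist_{\Lambda}$ and $\Psi$ agree on trace pairs when $\Lambda$ and $\Psi$ are matching metrics (the definitions are verbatim the same).

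Finally, taking the supremum over $\trace_{q_{\myr}} \in \mu(\td(A_{\myr}))$ on the left (the right side is independent of $\trace_{q_{\myr}}$) yields $\refine_{\Lambda}(A_{\myr}, A_{\mys}) \leq \simfunc_{\Psi^{\td}}(\tuple{s_{\myr}, s_{\mys}}) + \epsilon$, and letting $\epsilon \to 0$ closes the argument. I do not anticipate a substantive obstacle; the proof is the standard ``linear counterexample replayed by the branching antagonist'' maneuver. The single point of care is the $\Psi^{\td} \leftrightarrow \Psi$ identification, which requires $\rho^{\epsilon}(\myr)$ to be time-divergent --- this is precisely why $\refine_{\Lambda}$ quantifies over $\mu(\td(A_{\myr}))$ rather than all initialized traces, and why the definition of $\Psi^{\td}$ assigns the ``trivial'' value $0$ on Zeno $A_{\myr}$-projections so that player~2's Zeno distractions cannot inflate the simulation value.
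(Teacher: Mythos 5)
Your proposal is correct and follows essentially the same route as the paper's proof: a time-divergent $A_{\myr}$-trajectory is replayed as a player-2 strategy in $\simgame_t(A_{\myr},A_{\mys})$, a near-optimal ($\epsilon$- or $\vartheta$-suboptimal) player-1 strategy then yields a matching $A_{\mys}$-trace within the simulation value plus the slack, and taking the supremum over trajectories and letting the slack vanish gives the inequality. Your version merely spells out the quantifier bookkeeping (fixing one uniform $\epsilon$-optimal simulator strategy and noting the $\Psi^{\td}$-versus-$\Psi$ collapse on time-divergent projections) that the paper leaves implicit.
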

\begin{proof}
Consider any $\vartheta > \simfunc_{\Psi^{\td}}(\tuple{s_{\myr}, s_{\mys}})$,
and any time-divergent $s_{\myr}$-trajectory $\traj_{\myr}$  of $A_{\myr}$.
The trajectory $\traj_{\myr}$ corresponds naturally to a player-2 strategy
$\pi_{\myr}$ in the game $\simgame_t(A_{\myr},A_{\mys})$, where player 2
picks transitions which lead to the $A_{\myr} $-projected trajectory  $\traj_{\myr}$
no matter what player 1 does.
By the definition of the simulation function, there must exist a player-1 strategy
such that the $A_{\mys} $-projected trajectory is at most $\vartheta$ away from
 $\traj_{\myr}$ according to the metric $\dist_{\Lambda}$.
The desired result follows.
\qedhere
\end{proof}

We next show that the simulation functions are actually directed metrics.
\begin{proposition}[\qtsf{s} are Directed Metrics]
For $\Psi^{\td} \in 
\set{\dist_{\maxdiff}^{\td}, \dist_{\limmaxdiff}^{\td}, \dist_{\limavg}^{\td}}$,
the function $\simfunc_{\Psi^{\td}}()$ is a directed metric over states of
\tts{}s.
\end{proposition}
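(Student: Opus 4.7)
The plan is to verify the two defining properties of a directed metric: (i) $\simfunc_{\Psi^{\td}}(\tuple{s,s}) = 0$, and (ii) the triangle inequality $\simfunc_{\Psi^{\td}}(\tuple{s_1, s_3}) \leq \simfunc_{\Psi^{\td}}(\tuple{s_1, s_2}) + \simfunc_{\Psi^{\td}}(\tuple{s_2, s_3})$. Non-negativity is immediate since $\Psi \geq 0$.

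For reflexivity, I would observe that in $\simgame_t(A,A)$ starting from $\tuple{s,s,2}$, player~1 has a \emph{copycat} strategy: whenever player~2 plays $\tuple{s_{\myr},s_{\mys},2} \stackrel{\Delta}{\to} \tuple{s_{\myr}',s_{\mys},1}$, player~1 responds with the identical transition $\tuple{s_{\myr}',s_{\mys},1} \stackrel{\Delta}{\to} \tuple{s_{\myr}',s_{\mys}',2}$ using $s_{\mys}' = s_{\myr}'$. This yields $\rho(\myr) = \rho(\mys)$ for every outcome, so $\Psi(\rho(\myr),\rho(\mys)) = 0$ for every choice of $\Psi$, and hence $\Psi^{\td} = 0$.

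For the triangle inequality, let $K_{1,2} = \simfunc_{\Psi^{\td}}(\tuple{s_1,s_2})$ and $K_{2,3} = \simfunc_{\Psi^{\td}}(\tuple{s_2,s_3})$. If either value is $+\infty$ the inequality is trivial, so assume both are finite. Fix $\epsilon>0$. By the infimum in Definition~\ref{def:qsim} there exist player-1 strategies $\pi^{1,2}_{\mys}$ in $\simgame_t(A_1,A_2)$ and $\pi^{2,3}_{\mys}$ in $\simgame_t(A_2,A_3)$ whose worst-case outcome values are below $K_{1,2}+\epsilon$ and $K_{2,3}+\epsilon$ respectively. I would then build a \emph{composite} player-1 strategy $\pi^{1,3}_{\mys}$ in $\simgame_t(A_1,A_3)$ as follows: on any history, extract the current $A_1$-trajectory prefix; feed it (as a player-2 move sequence) into $\pi^{1,2}_{\mys}$ to obtain the induced $A_2$-trajectory prefix; feed that $A_2$-trajectory into $\pi^{2,3}_{\mys}$ (again as a player-2 move sequence) to obtain the induced $A_3$-trajectory prefix; and have $\pi^{1,3}_{\mys}$ play the next $A_3$-step accordingly.

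Given any player-2 strategy $\pi^{1,3}_{\myr}$ in $\simgame_t(A_1,A_3)$, let $\traj_1,\traj_2,\traj_3$ be the three trajectories obtained. If $\traj_1 \notin \td(A_1)$, then by Definition~\ref{defintion:MetricPlays} the value is $0 \leq K_{1,2}+K_{2,3}$. Otherwise $\traj_1$ is time-divergent, and because $\Psi(\traj_1,\traj_2) < K_{1,2}+\epsilon < \infty$ forces both arguments to have the same time-divergence status (otherwise some mismatch of $\mtime$'s blows up), $\traj_2$ is also time-divergent; by the same argument $\traj_3$ is time-divergent. Then by the triangle inequality for the underlying trace metric (Propositions~\ref{proposition:DistMaxDiffMetric}, \ref{proposition:LimMaxDiffMetric}, \ref{proposition:LimAvgMetric}) I can conclude
\[
\Psi^{\td}(\rho_{1,3}) = \Psi(\traj_1,\traj_3) \leq \Psi(\traj_1,\traj_2) + \Psi(\traj_2,\traj_3) < K_{1,2} + K_{2,3} + 2\epsilon.
\]
Taking the supremum over $\pi^{1,3}_{\myr}$ and then the infimum over player-1 strategies (which is bounded above by the particular $\pi^{1,3}_{\mys}$), and finally letting $\epsilon\to 0$, yields $\simfunc_{\Psi^{\td}}(\tuple{s_1,s_3}) \leq K_{1,2}+K_{2,3}$.

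The main obstacle I expect is the rigorous definition of the composite strategy $\pi^{1,3}_{\mys}$, since player-1 strategies in each sub-game are history-dependent and the composition must stitch histories from two different game arenas together coherently. The key observation that makes this clean is that the $A_1$-trajectory produced by the adversary in $\simgame_t(A_1,A_3)$ can itself be viewed as an adversary strategy in $\simgame_t(A_1,A_2)$, and similarly the $A_2$-trajectory produced by $\pi^{1,2}_{\mys}$ acts as an adversary in $\simgame_t(A_2,A_3)$; so the composition is well-defined inductively on the round number. The other subtlety, handled above, is ensuring that time-divergence propagates through the composition whenever the values are finite, which relies on the fact that all three underlying metrics evaluate to $\infty$ when exactly one trace is Zeno.
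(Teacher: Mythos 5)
Your proposal is correct and follows essentially the same route as the paper's proof: compose $\varepsilon$-optimal player-1 strategies of $\simgame_t(A_1,A_2)$ and $\simgame_t(A_2,A_3)$ by mapping the adversary's $A_1$-trajectory to an intermediate $A_2$-trajectory and replaying it as adversary input to the second strategy, treat the Zeno case of the $A_1$-projection as value $0$, and conclude via the trace-metric triangle inequality before letting $\varepsilon\to 0$. Your explicit argument that finiteness of $\Psi(\traj_1,\traj_2)$ forces $\traj_2$ to be time-divergent is a point the paper leaves implicit, but it is the same proof in substance.
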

\begin{proof}
It is clear that for any \tts and for any state $s$, we have $\simfunc_{\Psi^{\td}}(s,s)=0$.
We now show the triangle inequality.
Let $A_a,A_b, A_c$ be timed transition systems, with initial states $s_a, s_b, s_c$ respectively.
We show $\simfunc_{\Psi^{\td}}(s_a,s_b) + \simfunc_{\Psi^{\td}}(s_b,s_c) \geq
 \simfunc_{\Psi^{\td}}(s_a,s_c) $.

Given two states  $s_{\myr}\in A_{\myr}$ and $s_{\mys}\in  A_{\mys}$ for any two
systems $A_{\myr}$ and $A_{\mys}$, consider the game 
$\simgame_t(A_{\myr},A_{\mys})$.
We recall that player 1 is the player which is trying to simulate the other player.
We say a player-1 strategy $\pi_{\mys}$ is $\varepsilon$-\emph{optimal} for the
objective  $\Psi^{\td}$  from the state $ \tuple{s_{\myr}, s_{\mys},2} $ if
$\simfunc_{\Psi^{\td}}(\tuple{s_{\myr}, s_{\mys}}) + \varepsilon \geq 
\sup_{\pi_{\myr}\in \Pi_{\myr}} 
 \Psi^{\td}\left(\rho\left(\pi_{\mys},\pi_{\myr}, \tuple{s_{\myr}, s_{\mys},2} \right)\right)$.
 Let $K_{a,b}, K_{b,c}, K_{a,c}$ denote  $\simfunc_{\Psi^{\td}}(s_a,s_b) ,
 \simfunc_{\Psi^{\td}}(s_b,s_c) ,
 \simfunc_{\Psi^{\td}}(s_a,s_c) $ respectively.
It suffices to show that for every $\varepsilon >0$ there exists a
 player-1 strategy $\pi_{a,c,1}^*$ such that 
$K_{a,b}+ K_{b,c} + \varepsilon \geq 
\sup_{\pi_{a,c,2}\in \Pi_{a,c,2}} 
 \Psi^{\td}\left(\rho\left(\pi_{a,c,1}^*,\pi_{a,c,2}, \tuple{s_{a}, s_{c},2} \right)\right)$
for the game
$\simgame_t(A_a,A_c)$ from the state $ \tuple{s_a, s_c,2} $ for the objective
$\Psi^{\td}$ .
We construct such a player-1 strategy as follows.

Consider the case when $K_{a,b}< \infty$ and $ K_{b,c} < \infty$ (otherwise the claim is trivially proved). 
Consider an $
\varepsilon/2$-optimal player-1 strategy $\pi_{a,b,1}^{\varepsilon/2}$ for the game
$\simgame_t(A_a,A_b)$ from the state $ \tuple{s_a, s_b,2} $ for the objective
$\Psi^{\td}$.
The player-1 strategy $\pi_{a,b,1}^{\varepsilon/2}$  can be used to  map any finite or infinite 
$s_a$-trajectory $\traj_a$  of  $A_a$ to a
unique $s_b$-trajectory $\traj_b$ of  $A_b$:
consider a player-2 strategy
which ``looks'' only at the $A_a$ component and blindly generates $\traj_a$;
the trajectory $\traj_b$ is the corresponding $A_b$ trajectory as a result
of $\pi_{a,b,1}^{\varepsilon/2}$ playing against this player-2 strategy.
Let this map from trajectories of $A_a$  to those of $A_b$ be denoted as
$\mymap_{a,b, \varepsilon/2}$.
Observe  that for an infinite time-divergent trajectory $\traj_a$, we have
$\Psi(\traj_a, \mymap_{a,b, \varepsilon/2}(\traj_a)) < K_{a,b}+\varepsilon/2$ by the definition
of the  player-1 strategy $\pi_{a, b,1}^{\varepsilon/2}$  being $
\varepsilon/2$ optimal.
Similarly, there exists  an $
\varepsilon/2$-optimal player-1 strategy $\pi_{b,c,1}^{\varepsilon/2}$ for the game
$\simgame_t(A_b,A_c)$ from the state $ \tuple{s_b, s_c,2} $ for the objective
$\Psi^{\td}$.


We now define a player-1 strategy $\pi_{a,c,1}^*$ for the game
$\simgame_t(A_a,A_c)$ from the state $ \tuple{s_a, s_c,2} $ as follows (it
essentially picks the same moves as $\pi_{b,c,1}^{\varepsilon/2}$).
Consider a $\simgame_t(A_a,A_c)$  play 
$\rho_{a,c}^f = \tuple{s_{a}^0,s_{c}^0,2} \stackrel{\Delta_{a}^0}{\longrightarrow}
\tuple{s_{a}^1,s^0_{c},1}  \stackrel{\Delta_{c}^0}{\longrightarrow}
\tuple{s_{a}^1,s_{c}^1,2} \stackrel{\Delta_{a}^1}{\longrightarrow} \cdots
\tuple{s_{a}^n,s^{n-1}_{c},1}
$.
Let $\mymap_{a,b,\varepsilon/2}\left(s_{a}^0  \stackrel{\Delta_{a}^0}{\longrightarrow}
s_{a}^1 \cdots
 \stackrel{\Delta_{a}^{n-1}}{\longrightarrow}
s_{a}^n\right) = 
s_{b}^0  \stackrel{\Delta_{b}^0}{\longrightarrow}
s_{b}^1 \cdots
 \stackrel{\Delta_{b}^{n-1}}{\longrightarrow}
s_{b}^n
$.
Using $\mymap_{a,b,\varepsilon/2}()$,
we map a play $\rho_{a,c}^f $ of $\simgame_t(A_a,A_c)$
to a play $\myhap_{b,c, \varepsilon/2}(\rho_{a,c}^f) $  of $\simgame_t(A_b,A_c)$ using 
$\mymap_{a,b, \varepsilon/2}()$ on the $A_a$ components, leaving the $A_c$ components unchanged, as
follows.
The play $\myhap_{b,c, \varepsilon/2}(\rho_{a,c}^f) $ of $\simgame_t(A_b,A_c)$ 
is  defined to be 
$ \tuple{s_{b}^0,s_{c}^0,2} \stackrel{\Delta_{b}^0}{\longrightarrow}
\tuple{s_{b}^1,s^0_{c},1}  \stackrel{\Delta_{c}^0}{\longrightarrow}
\tuple{s_{b}^1,s_{c}^1,2} \stackrel{\Delta_{b}^1}{\longrightarrow} \cdots
\tuple{s_{b}^n,s^{n-1}_{c},1}
$ (note that the components with the subscript $c$ remain unchanged).
We note that $\myhap_{b,c, \varepsilon/2}(\rho_{a,c}^f) $ is a valid play of  $\simgame_t(A_b,A_c)$.
Finally,  $\pi_{a,c,1}^*(\rho_{a,c}^f)$ is defined to be 
$\pi_{b, c,1}^{\varepsilon/2} \left( \myhap_{b,c, \varepsilon/2}\left(\rho_{a,c}^f\right) \right)$;
that is,  $\pi_{a,c,1}^*(\rho_{a,c}^f)$ is defined the the $A_c$-state $s_c^*$ where
 $s_c^*$ is the state prescribed by the player-1 strategy $\pi_{b, c,1}^{\varepsilon/2}$
on the $\simgame_t(A_b,A_c)$-run obtained from $\rho_{a,c}^f$ by changing all the
$A_a$ components to those in 
$ \mymap_{a,b, \varepsilon/2}(\rho_{a,c}^f(a))$.
Intuitively, given a finite play $\rho_{a,c}^f$ of $\simgame_t(A_a,A_c)$, player~1
(i)~
first obtains a finite trajectory  $\xi_b$ in $A_b$ 
by
mapping $\rho_{a,c}^f(a)$ (the  $\rho_{a,c}^f$ trajectory projected onto $A_a$) to $\xi_b$ 
using the game  $\simgame_t(A_a,A_b)$
and the player-1 strategy $\pi_{a, b,1}^{\varepsilon/2}$; 
(ii)~utilizes the fact that 
the finite trajectories  $\rho_{a,c}^f(c)$ and $\xi_b$ correspond to a play
$\rho_{b,c}^f$ of $\simgame_t(A_b,A_c)$, and
(iii)~uses  $\pi_{b,c,1}^{\varepsilon/2}(\rho_{b,c}^f)$ to prescribe the next $A_c$-state
 in the game
$\simgame_t(A_a,A_c)$ from $\rho_{a,c}^f$.

We claim the player-1 strategy  $\pi_{a,c,1}^*$ is   such that 
$K_{a,b}+ K_{b,c} + \varepsilon \geq 
\sup_{\pi_{a,c,2}\in \Pi_{a,c,2}} 
 \Psi^{\td}\left(\rho\left(\pi_{a,c,1}^*,\pi_{a,c,2}, \tuple{s_{a}, s_{c},2} \right)\right)$.
Consider any player-2 strategy $\pi_{a,c,2}$ in the game $\simgame_t(A_a,A_c)$, and
the resultant play $\rho_{a,c}^*$ when   player~1 plays with the strategy $\pi_{a,c,1}^*$.
If $\rho_{a,c}^*(a)$ is time-convergent, we are done.
Assume $\rho_{a,c}^*(a)$ is time-divergent.
The function $\myhap_{b,c, \varepsilon/2}()$ can be seen as also mapping \emph{infinite} plays 
$\rho_{a,c}$ of $\simgame_t(A_a,A_c)$ to  plays  $\rho_{b,c}$ 
of $\simgame_t(A_b,A_c)$ (the infinite mapping being
the limit of the finite mappings, and noting that our games have infinite plays).
We have 
the following facts (denoting $\myhap_{b,c, \varepsilon/2}(\rho_{a,c}^*) = \rho_{b,c}^*$):
\begin{compactenum}
\item
 $\Psi\left(\rho_{a,c}^*(a), \rho_{b,c}^*(b)\right) < K_{a,b}+\varepsilon/2$
(due to the properties of $\mymap_{a,b, \varepsilon/2}()$).
\item Recall that 
$\rho_{b,c}^*$ is \emph{defined} to the play obtained from $\rho_{a,c}^*$
by  substituting the  $A_a$ projected run with 
$\mymap_{a,b, \varepsilon/2}\left(\rho_{a,c}^*\left(a\right)\right)$.
We claim the following fact:
$\rho_{b,c}^*$ \emph{is also equal to the  play which arises in the game $\simgame_t(A_b,A_c)$ when
player 1 plays with the strategy $\pi_{b,c,1}^{\varepsilon/2}$ against a player-2 strategy
which simply picks $A_b$-transitions leading to the projected  $A_b$ trajectory 
$\mymap_{a,b, \varepsilon/2}\left(\rho_{a,c}^*\left(a\right)\right)$}.
That is, the $A_c$ run which arises due to the player-1 strategy $\pi_{a,c,1}^*$  in the play
$\rho_{a,c}^*$ is the same
as the  $A_c$ run which arises due to the player-1 strategy $\pi_{b,c,1}^{\varepsilon/2}$ in the
play $\rho_{b,c}^*$.
Thus $\rho_{b,c}^*(c) = \rho_{a,c}^*(c)$.

The reason for this fact is that $\rho_{a,c}^*$
is the play  when   player~1 plays with the strategy $\pi_{a,c,1}^*$.
The strategy $\pi_{a,c,1}^*$ is such that,  \emph{by definition},  the $A_c$-state proposed  by
$\pi_{b,c,1}^{\varepsilon/2}$ on $\rho_{b,c}^*[0..n]$, and the $A_c$-state proposed by
$\pi_{a,c,1}^*$ on $\rho_{a,c}^*[0..n]$ are exactly the same.

\end{compactenum}
From the second item above, we have that 
$\Psi\left(\rho_{b,c}^*(b), \rho_{b,c}^*(c)\right) < K_{b,c}+\varepsilon/2$ as
$\pi_{b,c,1}^{\varepsilon/2}$  is an $\varepsilon/2$ optimal player-1
strategy in the game  $\simgame_t(A_b,A_c)$.
Combining with the first item in the enumeration above,
we get by Propositions~\ref{proposition:DistMaxDiffMetric}, 
\ref{proposition:LimMaxDiffMetric} and \ref{proposition:LimAvgMetric} that 
$\Psi\left(\rho_{a,c}^*(a),  \rho_{a,c}^*(c)\right) 
<K_{a,b}+ K_{b,c}+\varepsilon$.
Noting that $\rho_{a,c}^*(a)$ is time-divergent, we get that
$\Psi^{\td}(\rho_{a,c}^*) <K_{a,b}+ K_{b,c}+\varepsilon$.
The triangle inequalities for the simulation functions follow.
\qedhere
\end{proof}




\section{Computation of Quantitative Simulation Functions on Timed Automata}
\label{section:ComputationTimedAutomata}

In this section we obtain algorithms for computing  \qtsf{s} on 
\emph{timed automata} ~\cite{AlurD94} (which  suggest a finite syntax for specifying
infinite-state timed  structures), by reducing the problem to games on finite-state graphs.
The solution involves the following steps.
We first enlarge the  \tts corresponding to the given timed
automaton
$\A$, in Subsection~\ref{subsection:Enlarge}, in order
to measure elapsed time, and to measure the integer ``time-ticks'' 
(or integer time boundaries) 
crossed during executions of  $\A$ 
(if the current real-valued time is $\Delta$, then $\floor{\Delta}$ integer time-ticks
have elapsed).
In 
Subsection~\ref{subsection:IntegerSim},
we define \emph{integer-time} \qtsf{s}
which depend only on the elapsed integer time-ticks  and show that
these integer-time simulation functions are close to the original
(real-valued) simulation functions.
Next, we show that these integer-time \qtsf{s}
can be computed by a reduction to  finite state game graphs in
Subsection~\ref{subsection:ReductionFinite}.
Finally, we present the algorithm which ties all the steps together, and
show that we can compute the quantitative simulation functions
to within any desired degree of accuracy on timed automata.

\subsection{Timed Automata}
\label{subsection:TimedAutomata}


In this section, we briefly recall the model (for a detailed treatment, see~\cite{AlurD94}).
A \emph{timed automaton} $\A$ is a tuple $\tuple{L, \Sigma, C, \mu, \rightarrow, \gamma, S_0}, $ where
\begin{compactitem}
\item $L$ is the set of locations; and  $\Sigma$ is the set of atomic propositions.
\item $C$ is a finite set of clocks.
A \emph{clock valuation} $v : C\mapsto \reals^+$ for a set of clocks 
$C$ assigns a real value to each clock in $C$.
\item $ \mu : L \mapsto 2^{\Sigma}$ is the observation map (it does not depend on clock values).
\item $\rightarrow\, \subseteq\, L\times L\times 2^C\times\Phi(C)$ gives the set of transitions, where $\Phi(C) $ is the set of clock constraints generated by $\psi \ :=\ x\leq d\mid d\leq x \mid \neg\psi\mid \psi_1\wedge \psi_2$.
\item 
$\inv: L\mapsto\clkcond(C)$ is a function that assigns to 
every location an invariant on clock valuations.
All clocks increase uniformly at the same rate.
When at location~$l$, a valid execution
 must  move out of $l$ 
before the invariant $\inv(l)$ expires.
Thus, the timed automaton  can stay at a location only as long as the invariant is 
satisfied by the clock values.

\item $S_0 \subseteq L\times{\reals^+}^{|C|}$ is the set of initial states.

\end{compactitem}
Each clock increases at rate $1$ inside a location.
A \emph{clock valuation} is a function  $\kappa : C\mapsto\reals_{\geq 0}$
that maps every clock to a nonnegative real. 
The set of all clock valuations for $C$ is denoted by $K(C)$.
Given a clock valuation $\kappa\in K(C)$ and a time delay 
$\Delta\in\reals_{\geq 0}$, we write 
$\kappa +\Delta$ for the clock valuation in $K(C)$ defined by 
$(\kappa +\Delta)(x) =\kappa(x) +\Delta$ for all clocks $x\in C$.
For a subset $\lambda\subseteq C$ of the clocks, we write 
$\kappa[\lambda:=0]$ for the clock valuation in $K(C)$ defined by 
$(\kappa[\lambda:=0])(x) = 0$ if $x\in\lambda$, 
and $(\kappa[\lambda:=0])(x)=\kappa(x)$ if $x\not\in\lambda$.
A clock valuation $\kappa\in K(C)$ \emph{satisfies} the clock constraint 
$\theta$, written $\kappa\models \theta$, if the condition 
$\theta$ holds when all clocks in $C$ take on the values specified 
by~$\kappa$.
A \emph{state} $s=\tuple{l,\kappa}$ of the timed automaton  $\A$ is a 
location $l\in L$ together with a clock valuation $\kappa\in K(C)$ such 
that the invariant at the location is  satisfied, that is,
$\kappa\models\inv(l)$.
We let $S$ be the set of all states of~$\A$.
 An edge $\tuple{l,l',\lambda, g}$ represents a transition from location $l$ 
to location $l'$ when the clock values at $l$ satisfy the constraint $g$.
 The set $\lambda\subseteq C$ gives the clocks to be reset to $0$ with this transition.
The semantics of timed automata are given as timed transition systems.
 This is standard (see \emph{e.g.} \cite{AlurD94}), and omitted here.


\smallskip\noindent{\bf Clock Region Equivalence.}
Clock region equivalence, denoted as  $\cong$ is an equivalence relation  on 
states of timed automata.
The equivalence classes of the relation are called \emph{regions}, and induce
a time abstract bisimulation on the corresponding timed transition system.
There are finitely many clock regions;
more precisely, the number of clock regions is bounded by 
$|L|\cdot\prod_{x\in C}(c_x+1)\cdot |C|!\cdot 4^{|C|}$.
For a real $t\ge 0$, let  $\fractional(t)=t-\floor{t}$ denote the 
fractional part of~$t$.
Given a timed automaton game $\A$, for each clock $x\in C$, let $c_x$
denote the largest integer constant that appears in any clock
constraint involving $x$ in~$\A$ (let $c_x=1$ if there is no clock
constraint involving~$x$).
Two states $\tuple{l_1,\kappa_1}$ and $\tuple{l_1,\kappa_1}$ are said to be
\emph{region equivalent} if all the following conditions are satisfied:
(a)~ $l_1 = l_2$,
(b)~ for all clocks $x$, we have $\kappa_1(x) \leq c_x $ iff $\kappa_2(x) \leq c_x $,
(c)~for all  clocks $x$ with $\kappa_1(x) \leq c_x $, we have 
$\floor{\kappa_1(x)}=\floor{\kappa_2(x)}$,
(d)~for all  clocks $x,y$ with $\kappa_1(x) \leq c_x $ and $\kappa_1(y) \leq c_y $,
$\fractional(\kappa_1(x)) \leq \fractional(\kappa_1(y))$ iff
$\fractional(\kappa_2(x)) \leq \fractional(\kappa_2(y))$, and
(e)~for all  clocks $x$ with $\kappa_1(x) \leq c_x $, we have
$\fractional(\kappa_1(x))=0$ iff $\fractional(\kappa_2(x))=0$.
Given a state $\tuple{l,\kappa}$ of $\A$, we denote the region containing $\tuple{l,\kappa}$ 
as $\reg(\tuple{l,\kappa})$.

\smallskip\noindent\textbf{Region Graph.}
The region graph $\reg(\A)$ corresponding to $\A$ is the time-abstract
bisimulation quotient graph induced by the region equivalence relation (see  \cite{AlurD94} for
details).
The states of $\reg(\A)$ are the regions of $\A$.
There is a transition $R\rightarrow R'$ in the region graph iff there exists $s\in R$ and
$s'\in R'$ such that there is a transition from $s$ to $s'$ according to the
semantics of $\A$.

\subsection{Enlarging the Timed Automaton \tts}
\label{subsection:Enlarge}
For ease of presentation  we assume that 
 all clocks are bounded, \emph{i.e.}, that
the invariants of each location can be conjuncted with the clause
$\wedge_{x\in C}\left( x\leq c_{\max}\right)$ for some constant $c_{\max}$.
The  general case where clocks may be
unbounded can be solved using similar algorithms, with some additional bookkeeping.

Given a timed automata $\A$ where all the clocks are bounded
by $c_{\max}$, let $\ensys$ denote the timed transition system
obtained by adding to $\A$ an extra clock $z$, which cycles between~$0$ and~$1$, for
measuring elapsed time, and an integer valued variable $\ticks$ which takes on
values in $\nat_{\leq c_{\max} }$, where $ \nat_{\leq c_{\max}}$ denotes
the set $\set{0,1,\dots, c_{\max}}$.
Formally, the set of states of $\ensys$ is 
$S^{\ensys}= S\times \reals_{[0,1)}\times \nat_{\leq c_{\max}}$, where 
$S$ is the set of states
of $\A$.
The state $\tuple{s,\z,\myd}$ of $\ensys$ has the following
components:
\begin{compactenum}[$-$]
\item $s$ is the state
of the original timed automaton $ \A$;
\item $\z$ is the value of the added clock $z$ which gets reset to 0
every time it crosses 1 (i.e., if $\kappa'$ is the clock valuation resulting
from letting time $\Delta$ elapse from an initial clock valuation $\kappa$, 
then, $\z=\kappa'(z) = (\kappa(z)+\Delta)\mod 1$); and
\item $\myd$ denotes the value of the integer variable $\ticks$, and is equal to
the number of integer boundaries crossed by the added clock $z$
since the last transition: if the clock valuation in the previous state was $\kappa$, and the
transition time duration is $\Delta$, then $\myd=\floor{\kappa(z)+\Delta}$ in the
current state, where
$\floor{}$ denotes the integer floor function.
Note that since all the clocks in $\A$ are bounded by $c_{\max}$, we have
$\myd \leq c_{\max}$, as the maximum duration of a transition  is $c_{\max}$, and
$\kappa(z) < 1$ in the previous state.
Note that $\myd$ can have a value of $1$ as the result of a transition of duration $\Delta< 1$,
\emph{e.g.}, if the clock $z$ had a value of $0.9$ in the previous state, and $\Delta =0.2$,
then $\myd=1$.
\end{compactenum}
The region equivalence relation can be expanded to $\ensys$ states.
Two states $\tuple{\tuple{l_1,\kappa_1},\z_1,\myd_1}$ and 
$\tuple{\tuple{l_2,\kappa_2}\z_2,\myd_2}$ of   $\ensys$ are defined to be
region equivalent if $\tuple{l_1, \myd_1} = \tuple{l_2, \myd_2}$, and
$\kappa_1^{z=\z_1} \cong \kappa_2^{z=\z_2} $, where
$\kappa_i^{z=\z_i}$ denotes the clock valuation $\kappa_i$ on $C$ expanded
to a clock valuation to $C\cup\set{z}$ by mapping $z $ to $\z_i$ (we denote the
enlarged clock valuation be denoted as $\w{\kappa}$).
Similar to the region graph $\reg(\A)$, we define an untimed 
 finite state bisimulation quotient
graph $\reg(\en{\A})$ for $\en{\A}$.

Given a state $s$ of $\A$, we denote by $\en{s}$ the state $\tuple{s,0,0}$ of
$\ensys$.
For a state trajectory 
$\traj = s_0 \stackrel{t_0}{\rightarrow} s_1 \stackrel{t_1}{\rightarrow} \dots$,
we let $\traj[i]$ denote the state $s_i$.
Given a state trajectory $\traj$ of the timed automaton $\A$, we denote by
$\en{\traj}$ the $\ensys$ trajectory
$\en{\traj[0]} \stackrel{t_0}{\rightarrow} \w{s}_1 \stackrel{t_1}{\rightarrow} \w{s}_2\dots$, where $\w{s}_i= \tuple{s_i, \z_i, \myd_i}$, and $ \z_i, \myd_i$ values are
according to the times of the transitions (letting $ \en{\traj[0]}= \w{s}_0$).
That is, $\en{\traj}$ denotes the trajectory obtained by adding the clock $z$, and the 
integer variable $\ticks$, where the values for both the new variables are
set to $0$ in the starting state $\en{\traj[0]}$.
The new variables just observe the time, and the integer boundaries crossed for each
transition according to the semantics for $\ensys$ described previously.
The  first component of $\en{\traj[i]}$ is the same as the state $\traj[i]$  for all $i$.

The next lemma shows that a trajectory is time-divergent iff it satisfies a B\"uchi
constraint.
\begin{lemma}
\label{lemma:TimeDivEncode}
Let $\traj$ be a  trajectory  of a timed automaton $\A$ in which
all clocks are bounded by $c_{\max}$.
The trajectory $\traj$ is time-divergent iff
$\en{\traj}$ satisfies the  B\"uchi condition $\Buchi\left( \bigvee_{i=1}^{c_{\max}}\ticks=i\right)$
\end{lemma}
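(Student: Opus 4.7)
The plan is to relate the value of the added integer variable $\ticks$ directly to the total elapsed time of the trajectory, and then observe that time-divergence is equivalent to the partial sums of $\ticks$ going to infinity.

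First, I would unpack the semantics of $\ensys$ to express the partial sums of $\ticks$ in terms of the cumulative transition times. If $\en{\traj}=\w{s}_0\stackrel{t_0}{\rightarrow}\w{s}_1\stackrel{t_1}{\rightarrow}\cdots$ with $\w{s}_n=\tuple{s_n,\z_n,\myd_n}$, then by definition $\z_0=0$, and $\myd_n=\floor{\z_{n-1}+t_{n-1}}$ counts the integer boundaries crossed by $z$ during the $(n-1)$-th transition. Because $\z$ is reset at each integer crossing, a straightforward induction on $N$ establishes the identity
\[
\sum_{n=1}^{N}\myd_n \;=\; \Bigl\lfloor\sum_{j=0}^{N-1}t_j\Bigr\rfloor,
\]
together with $\z_N=\sum_{j=0}^{N-1}t_j-\sum_{n=1}^{N}\myd_n\in[0,1)$.

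Next, I would rephrase the Büchi condition. The condition $\Buchi(\bigvee_{i=1}^{c_{\max}}\ticks=i)$ asserts exactly that $\myd_n\in\{1,\dots,c_{\max}\}$ for infinitely many $n$, i.e., $\myd_n\geq 1$ infinitely often. Since each $\myd_n$ is a non-negative integer bounded by $c_{\max}$, this is equivalent to $\sum_{n=1}^{\infty}\myd_n=\infty$.

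Finally, I would close both directions using the identity. If $\traj$ is time-divergent then $\sum_{j=0}^{N-1}t_j\to\infty$, so $\lfloor\sum t_j\rfloor=\sum_{n=1}^{N}\myd_n\to\infty$, giving infinitely many $n$ with $\myd_n\geq 1$, hence the Büchi condition. Conversely, if the Büchi condition holds then $\sum_{n=1}^{N}\myd_n\to\infty$, and from the identity $\sum_{j=0}^{N-1}t_j=\sum_{n=1}^{N}\myd_n+\z_N\geq \sum_{n=1}^{N}\myd_n$, so $\sum_j t_j\to\infty$, i.e., $\traj$ is time-divergent.

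There is no real obstacle here; the only point requiring any care is making the bookkeeping identity between $\sum \myd_n$ and $\lfloor \sum t_j\rfloor$ precise, which is why I would state and prove it explicitly at the start before invoking it in both directions.
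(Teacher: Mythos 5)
Your proof is correct and takes essentially the same route as the paper's: both arguments reduce time-divergence to the statement that the global clock crosses integer boundaries infinitely often, which is exactly what the $\ticks$ variable records, so the B\"uchi condition $\bigvee_{i=1}^{c_{\max}}\ticks=i$ holding infinitely often is equivalent to divergence. The explicit identity $\sum_{n=1}^{N}\myd_n=\bigl\lfloor\sum_{j=0}^{N-1}t_j\bigr\rfloor$ you prove is just a sharpened, formal version of the bookkeeping the paper invokes informally here (and which also underlies Lemma~\ref{lemma:IntTimeViaTicks}).
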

\begin{proof}
The proof follows from the fact that  trajectory $\traj$ is \emph{not} time-divergent
iff global time does not progress beyond some integer $U$.
This happens iff time crosses only finitely many integer boundaries.
Now, global time crosses an integer boundary at step $n$ iff 
$ \left(\bigvee_{i=1}^{c_{\max}}\ticks=i\right)$ is true at step $n$.
Thus  trajectory $\traj$ is \emph{not} time-divergent iff
$\left(\bigvee_{i=1}^{c_{\max}}\ticks=i\right)$ is true only finitely often.
Equivalently,   trajectory $\traj$ is  time-divergent iff
$\left(\bigvee_{i=1}^{c_{\max}}\ticks=i\right)$ is true infinitely often.
\qedhere
\end{proof}

\subsection{Integer-Time  Quantitative Timed Simulation Functions}
\label{subsection:IntegerSim}

In this subsection we define quantitative simulation functions where only the
integer ``time-ticks'' encountered are of relevance (as opposed to the exact
real-valued times for the original \qtsf{s}).
The utility of these integer-time simulation functions is that they can be computed
over timed automata by reductions to finite state games.
These simulation functions are also close in value 
 to the real-valued  \qtsf{s}.
First, we define a notion of \emph{integer time} which measures the number of integer time-ticks crossed upto the current time point.


\begin{definition}[Integer Time]
For the trajectory $\en{\traj}$, let $\inttime_{\en{\traj}}[i]$ denote the number
of integer boundaries crossed upto the $i$-th transition:
$\inttime_{\en{\traj}}[i] = \floor{\mtime_{\en{\traj}}[i] }$.
\end{definition}

We have the following lemma which expresses $\inttime_{\en{\traj}}[i]$ in terms
of the of the values of the $\ticks$ variable in trajectories.
Note that the value of the $\ticks$ variable is zero in the first state
of a valid trajectory $\en{\traj}$ of $\en{\A}$.
\begin{lemma}
\label{lemma:IntTimeViaTicks}
Let $\traj$  be a trajectory  of a timed automaton $\A$ in which
all clocks are bounded. We have: 
$\inttime_{\en{\traj}}[i] =
\sum_{j=0}^i \myd_j$,  
where $\myd_i$ is the value of the $\ticks$ variable at the state  $\en{\traj}[i]$. 
\end{lemma}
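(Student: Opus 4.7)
The plan is to prove this by induction on $i$, with the key ingredient being an auxiliary invariant that ties the value of the added clock $\z_i$ to the fractional part of the elapsed global time $\mtime_{\en{\traj}}[i]$. Intuitively, the variable $\myd_j$ records exactly how many unit boundaries the clock $z$ crosses during the $j$-th transition, so summing these counts across all transitions should yield the total number of unit boundaries crossed up to time $\mtime_{\en{\traj}}[i]$, which is by definition $\floor{\mtime_{\en{\traj}}[i]} = \inttime_{\en{\traj}}[i]$.

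The first step will be to establish the invariant
\[
\z_i \;=\; \mtime_{\en{\traj}}[i] - \floor{\mtime_{\en{\traj}}[i]},
\]
that is, the added clock $z$ stores the fractional part of the elapsed global time. This follows immediately from the semantics of $z$ described in Subsection~\ref{subsection:Enlarge}: $z$ starts at $0$ (matching $\mtime_{\en{\traj}}[-1]=0$) and evolves under $\z_{i+1} = (\z_i + t_{i+1}) \bmod 1$, while $\mtime_{\en{\traj}}[i+1] = \mtime_{\en{\traj}}[i] + t_{i+1}$; a routine induction on $i$ yields the invariant.

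Next, using this invariant together with the defining equation $\myd_{i+1} = \floor{\z_i + t_{i+1}}$ for the $\ticks$ variable, I can compute
\[
\myd_{i+1} \;=\; \floor{\z_i + t_{i+1}} \;=\; \floor{\mtime_{\en{\traj}}[i+1] - \floor{\mtime_{\en{\traj}}[i]}} \;=\; \floor{\mtime_{\en{\traj}}[i+1]} - \floor{\mtime_{\en{\traj}}[i]},
\]
where the last equality uses the fact that $\floor{\mtime_{\en{\traj}}[i]}$ is an integer and can be pulled out of the floor. This gives the incremental identity $\myd_{i+1} = \inttime_{\en{\traj}}[i+1] - \inttime_{\en{\traj}}[i]$.

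The final step is to sum telescopically: with $\myd_0 = 0$ and $\inttime_{\en{\traj}}[0] = 0$ (matching the convention that $\en{\traj}[0]$ has the fresh clock $z$ and variable $\ticks$ both initialized to zero), induction on $i$ immediately yields $\inttime_{\en{\traj}}[i] = \sum_{j=0}^i \myd_j$. I do not anticipate any real obstacle here: the argument is essentially a bookkeeping exercise, and the only subtle point is getting the clock-to-time invariant right, which is what makes the floor of the sum equal the sum of the per-transition integer-boundary counts.
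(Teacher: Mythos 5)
Your proof is correct and is essentially the paper's argument made explicit: the paper's one-line proof simply asserts that the $\ticks$ updates count the integer boundaries crossed by the elapsed-time clock $z$, and your fractional-part invariant $\z_i=\mtime_{\en{\traj}}[i]-\floor{\mtime_{\en{\traj}}[i]}$ together with the incremental identity and the telescoping sum is exactly that bookkeeping written out. The only caveat is indexing: under the paper's conventions the step from $\w{s}_i$ to $\w{s}_{i+1}$ has duration $t_i$ (not $t_{i+1}$) and $\mtime_{\en{\traj}}[0]=t_0$, so your displayed equations should be re-indexed to match (the lemma as printed is itself arguably off by one transition relative to these definitions); this is pure bookkeeping and does not affect the substance of the argument.
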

\begin{proof}
The proof follows from the definition of the $\ticks$ variable updates: the updates
count the integer boundaries crossed by the clock  $z$ which measures elapsed time
\qedhere
\end{proof}
%

Using the notion of integer-time, we next define \emph{integer-time 
trajectory  difference metrics.}

\begin{definition}
[The Integer-Time Trajectory  Difference Metrics
$ \intdist_{\maxdiff}$, $\intdist_{\limmaxdiff}$,  and $\intdist_{\limavg}$]
Corresponding to the trajectory difference metric $\dist_{\varphi}()$, 
for $\varphi=\maxdiff, \limmaxdiff, \limavg$,
we define
the integer-time trace difference metric  $\intdist_{\varphi}()$, 
by substituting $\inttime\mspace{15mu}()$ for $\mtime()$ in the definition of
$\dist_{\varphi}()$. 
\emph{E.g.}, letting $\en{\traj}[n]=\tuple{\tuple{l_n,\kappa_n},\z_n,\myd_n}$ and
$\en{\traj'}[n]=\tuple{\tuple{l_n', \kappa_n'},\z_n',\myd_n'}$,  we have:
\[ \intdist_{\maxdiff}(\en{\traj},\en{\traj'})  =
 \left\{
\begin{array}{ll}
\infty & \text{if } \mu(l_n)\neq \mu(l_n') 
   \text{ for some } n\\
 \sup_n\set{|\inttime_{\en{\traj}}(n) - \inttime_{\en{\traj'}}(n)|} & \text{otherwise} 
\end{array}
\right.
\qedhere\qed
\]
%
\end{definition}

\begin{proposition}
The functions 
$ \intdist_{\maxdiff}$, $\intdist_{\limmaxdiff}$,  and $\intdist_{\limavg}$ are metrics over
timed trajectories.
\end{proposition}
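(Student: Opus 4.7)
My plan is to prove each of the three assertions by observing that the integer-time metrics are obtained from the corresponding real-time metrics by substituting $\inttime_{\en{\traj}}[n]$ for $\mtime_{\trace}[n]$ throughout, and that the metric proofs given earlier in Propositions~\ref{proposition:DistMaxDiffMetric}, \ref{proposition:LimMaxDiffMetric}, and \ref{proposition:LimAvgMetric} only rely on arithmetic manipulations of absolute differences of real numbers. By Lemma~\ref{lemma:IntTimeViaTicks}, each $\inttime_{\en{\traj}}[n] = \sum_{j=0}^n \myd_j$ is a non-negative integer, hence a real number, so every step of the original arguments remains valid after the substitution.

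The first step is to dispense with the easy metric axioms: $\intdist_{\varphi}(\en{\traj},\en{\traj'}) \ge 0$ and $\intdist_{\varphi}(\en{\traj},\en{\traj'}) = \intdist_{\varphi}(\en{\traj'},\en{\traj})$ follow immediately from properties of the absolute value, and $\intdist_{\varphi}(\en{\traj},\en{\traj}) = 0$ is immediate from the definition since all summed quantities are zero. The remaining task, and the only one requiring argument, is the triangle inequality for each of the three functions.

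For the triangle inequality, I would handle each case by literally replaying the corresponding proof from Section~\ref{sec:refine}. For $\intdist_{\maxdiff}$, if any observation sequence differs or if one summand is $\infty$, the inequality is trivial; otherwise I take suprema over $n$ of the pointwise inequality $|\inttime_{\en{\traj_1}}[n] - \inttime_{\en{\traj_3}}[n]| \leq |\inttime_{\en{\traj_1}}[n] - \inttime_{\en{\traj_2}}[n]| + |\inttime_{\en{\traj_2}}[n] - \inttime_{\en{\traj_3}}[n]|$. For $\intdist_{\limmaxdiff}$ I invoke Lemma~\ref{lemma:sumlimit} (note that $\sup_{n\geq M}\set{|\inttime_{\en{\traj_a}}[n] - \inttime_{\en{\traj_b}}[n]|}$ is non-increasing in $M$, so the limit exists) and Lemma~\ref{lemma:supsum}, exactly mimicking the proof of Proposition~\ref{proposition:LimMaxDiffMetric}. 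For $\intdist_{\limavg}$, I use the same two lemmas applied to the running averages, mimicking the proof of Proposition~\ref{proposition:LimAvgMetric}.

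I do not anticipate any genuine obstacle: because Lemmas~\ref{lemma:sumlimit} and \ref{lemma:supsum} are stated for arbitrary real-valued sequences, and because the integer-time values are themselves real, the three proofs transfer verbatim. The most one needs to check carefully is that the monotonicity hypothesis of Lemma~\ref{lemma:sumlimit} is satisfied (as above, the $M$-indexed suprema are monotone non-increasing in $M$, so the relevant limits exist on the extended real line). Given this, the proposition reduces to routine substitution in previously established arguments, and a short proof can simply say ``the proofs are analogous to those of Propositions~\ref{proposition:DistMaxDiffMetric}, \ref{proposition:LimMaxDiffMetric}, and \ref{proposition:LimAvgMetric}, replacing $\mtime$ by $\inttime$ throughout.''
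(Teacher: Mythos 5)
Your proposal is correct and takes essentially the same route as the paper, whose own proof simply states that the arguments are along the same lines as those for $\dist_{\maxdiff}$, $\dist_{\limmaxdiff}$, and $\dist_{\limavg}$. Your additional checks (the easy axioms, and that the $M$-indexed suprema are non-increasing so Lemma~\ref{lemma:sumlimit} applies) are exactly the routine verifications implicit in that remark, since the earlier proofs use only arithmetic of real-valued sequences and apply verbatim with $\inttime$ in place of $\mtime$.
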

\begin{proof}
The  proofs are along similar lines to the corresponding claims for
$ \dist_{\maxdiff}$, $\dist_{\limmaxdiff}$,  and $\dist_{\limavg}$. \qedhere
\end{proof}

The following lemma shows that $\intdist_{\varphi}()$ closely approximates
$\dist_{\varphi}()$.

\begin{lemma}
\label{lemma:IntTraj}
Let $\traj_1$ and $\traj_2$ be two trajectories  of a timed automaton $\A$.
The following assertions are   true for $\varphi\in\set{\maxdiff, \limmaxdiff, \limavg}$.
\begin{compactenum}
\item $\dist_{\varphi}(\en{\traj_1},\en{\traj_2})  = \infty$
iff $\intdist_{\varphi}(\en{\traj_1},\en{\traj_2})  = \infty$.
\item If both $\dist_{\varphi}(\en{\traj_1},\en{\traj_2})$ and
$\intdist_{\varphi}(\en{\traj_1},\en{\traj_2}) $ are less than $\infty$,
then
 \[\mspace{-60mu}
\dist_{\varphi}(\en{\traj_1},\en{\traj_2}) +1 \geq   
\intdist_{\varphi}(\en{\traj_1},\en{\traj_2}) \geq \dist_{\varphi}(\en{\traj_1},\en{\traj_2}) -1.\]
\end{compactenum}
\end{lemma}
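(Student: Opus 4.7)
The plan is to reduce everything to a single pointwise bound relating integer-valued to real-valued times, and then propagate this bound through the three aggregation operators that distinguish $\dist_\varphi$ from $\intdist_\varphi$. By Lemma~\ref{lemma:IntTimeViaTicks} together with the definition of $\inttime$, we have $\inttime_{\en{\traj}}[n] = \floor{\mtime_{\en{\traj}}[n]}$, so $0 \leq \mtime_{\en{\traj}}[n] - \inttime_{\en{\traj}}[n] < 1$. First I would establish the pointwise inequality
\[
 \bigl|\, |\inttime_{\en{\traj_1}}[n] - \inttime_{\en{\traj_2}}[n]| \;-\; |\mtime_{\en{\traj_1}}[n] - \mtime_{\en{\traj_2}}[n]| \,\bigr| \;\leq\; 1
\]
for every $n$, using the identity $\floor{a} - \floor{b} = (a - b) - (\fractional(a) - \fractional(b))$ with $|\fractional(a) - \fractional(b)| < 1$. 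I would also note that the observation-mismatch condition $\mu(l_n) \neq \mu(l_n')$ appearing in both definitions depends only on the underlying locations, not on timing, so both metrics output $\infty$ under exactly the same observation-mismatch condition.

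Next I would lift the pointwise bound through each aggregation operator in turn. For $\sup_n$: from $|a_n - b_n| \leq 1$ one obtains $b_n \leq a_n + 1$ and $a_n \leq b_n + 1$, and taking $\sup_n$ of both inequalities yields $|\sup_n a_n - \sup_n b_n| \leq 1$ when at least one side is finite, and forces both sides to be $\infty$ simultaneously otherwise. The operator $\lim_{M \to \infty}\sup_{n \geq M}$ inherits this bound by applying the $\sup$ argument within each tail $n \geq M$ and then taking the outer limit. For the limit-average, $\tfrac{1}{n}\sum_{i=0}^{n}|b_i - a_i| \leq 1$, so $\tfrac{1}{n}\sum_{i=0}^{n} b_i$ and $\tfrac{1}{n}\sum_{i=0}^{n} a_i$ differ by at most $1$, and both the inner $\sup_{n \geq M}$ and the outer $\lim_{M \to \infty}$ preserve this. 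Specializing $a_n = |\mtime_{\en{\traj_1}}[n] - \mtime_{\en{\traj_2}}[n]|$ and $b_n = |\inttime_{\en{\traj_1}}[n] - \inttime_{\en{\traj_2}}[n]|$ in each of the three cases then delivers item~(2) of the lemma.

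Item~(1) then follows at once: when the observation sequences differ, both metrics return $\infty$ by the same condition; when they agree, the pointwise-within-$1$ bound combined with the $\sup$/$\limsup$/limit-average lifting forces one aggregate to be finite exactly when the other is. I do not foresee a genuine obstacle here, as the lemma is essentially bookkeeping once the floor-identity pointwise bound is isolated. The only mild subtlety is confirming that the per-step slack of $1$ is not amplified by the averaging, which it is not because a constant added to each summand contributes only a constant to the Ces\`aro mean.
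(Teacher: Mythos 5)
Your proof is correct and follows essentially the same route as the paper's: a pointwise within-one bound between $|\inttime_{\en{\traj_1}}[n]-\inttime_{\en{\traj_2}}[n]|$ and $|\mtime_{\en{\traj_1}}[n]-\mtime_{\en{\traj_2}}[n]|$ (the paper gets it from $x(n)-1 < \floor{x(n)} \leq x(n)$ for each trajectory), which is then lifted through the $\sup$, the $\lim_{M}\sup_{n\geq M}$, and the Ces\`aro-average aggregations, with the $\infty$/observation-mismatch case handled identically. The only cosmetic point is that $\frac{1}{n}\sum_{i=0}^{n}|b_i-a_i|$ is bounded by $\frac{n+1}{n}$ rather than by $1$, but this vanishes in the outer limit and does not affect the stated $\pm 1$ bound.
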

\begin{proof}
The first claim is obvious.
We prove the second claim.
Let us denote the sequence $\mtime_{\en{\traj}}(n)$ as $x(n)$, the sequence
 $\mtime_{\en{\traj'}}(n)$ as $x'(n)$, the sequence $\inttime_{\en{\traj}}(n)$ as
$y(n)$ and the sequence $\inttime_{\en{\traj'}}(n)$ as
$y'(n)$.
We have for all $n\geq 0$,
\begin{align*}
x(n) -1 & \, <\,  y(n) \, \leq\,  x(n)\\ 
x'(n) -1 & \,<\,   y'(n) \, \leq\,  x'(n).
\end{align*}
Thus, we have
\[x(n) -x'(n) -1 \, < \, y(n)-y'(n) \, < \, x(n) -x'(n) +1\]
Hence
\[|x(n) -x'(n)| -1 \, < \, |y(n)-y'(n)| \, < \, |x(n) -x'(n)| +1\]
It follows that
\[
\sup_n|x(n) -x'(n)| -1 \, < \, \sup_n|y(n)-y'(n)| \, < \, \sup_n|x(n) -x'(n)| +1
\]
Thus, we have the results for $\varphi=\maxdiff$.

We also have the following two relationships:
\begin{align*}
\lim_{U\rightarrow \infty}\sup_{n> U}|x(n) -x'(n)| -1&  \, < \, 
\lim_{U\rightarrow \infty}\sup_{n> U}|y(n)-y'(n)|\\ 
\lim_{U\rightarrow \infty}\sup_{n> U}|y(n)-y'(n)|&  \, < \, 
\lim_{U\rightarrow \infty}\sup_{n>U} |x(n) -x'(n)| +1.
\end{align*}
This gives the results for $\varphi=\limmaxdiff$.

Next, we note that for every $n$, the following two relationships hold.
\begin{align*}
  \frac{\sum_{j=0}^nx(j) -n}{n+1} & \, <\,  \frac{\sum_{j=0}^ny(j)}{n+1} \, \leq\,    \frac{\sum_{j=0}^nx'(j)}{n+1}\\ 
 \frac{\sum_{j=0}^nx'(j) -n}{n+1} & \, <\,  \frac{\sum_{j=0}^ny'(j)}{n+1} \, \leq\,    \frac{\sum_{j=0}^nx'(j)}{n+1}
 \end{align*}
And thus,

\begin{align*}
  \frac{\sum_{j=0}^nx(j) }{n+1}-1 & \, <\,  \frac{\sum_{j=0}^ny(j)}{n+1} \, \leq\,    \frac{\sum_{j=0}^nx'(j)}{n+1}\\ 
 \frac{\sum_{j=0}^nx'(j) }{n+1}-1 & \, <\,  \frac{\sum_{j=0}^ny'(j)}{n+1} \, \leq\,    \frac{\sum_{j=0}^nx'(j)}{n+1}
\end{align*}

Then, applying similar reasoning as in $\varphi=\limmaxdiff$,
we get the results for $\varphi=\limavg$.
\qedhere
\end{proof}

Using $ \intdist_{\maxdiff}$, $\intdist_{\limmaxdiff}$,  and $\intdist_{\limavg}$,
we can define integer-time  quantitative simulation functions
 $\simfunc_{\intPsi^{\td}}$ 
 which
approximate $\simfunc_{\Psi^{\td}}$ for
$\Psi^{\td} \in 
\set{\dist_{\maxdiff}^{\td}, \dist_{\limmaxdiff}^{\td}, \dist_{\limavg}^{\td}}$.
The definitions follow along similar lines to the definitions for $\simfunc_{\Psi^{\td}}$.
We present them formally below.
First, we  present integer-time quantitative objectives which
map simulation game plays to integer valued numbers.

\begin{definition}[Integer-Time Quantitative Objectives  for  Timed Simulation Games]
For the trajectory difference metrics 
$\intPsi \in \set{ \intdist_{\maxdiff}, \intdist_{\limmaxdiff},  \intdist_{\limavg}}$,
we define the integer valued quantitative objective $\intPsi^{\td}()$
as follows for  a play $\rho$
in the timed simulation game $\simgame_t(\en{\A_{\myr}},\en{\A_{\mys}})$:
\[
\intPsi^{\td} (\rho) =
 \begin{cases}
0  & \text{if } \rho(\myr) \notin \td(\en{\A_{\myr}})\\
\intPsi(\rho(\myr), \rho(\mys)) & \text{otherwise}
\end{cases}
\qed
\]
\end{definition}
The integer-time  quantitative simulation functions $\simfunc_{\intPsi^{\td}}(\tuple{s_{\myr}, s_{\mys}})$,
can now  be defined exactly as in Definition~\ref{def:qsim}, using
$\intPsi^{\td}$ instead of $\Psi^{\td}$. 
The formal definition is given below in Definition~\ref{def:intqsim}.

\begin{definition}[Integer-Time  \qtsf{s}]
\label{def:intqsim}
Let $\A_{\myr},\A_{\mys}$ be timed automata, with the corresponding enlarged
timed transition systems $\en{\A_{\myr}}, \en{\A_{\mys}}$ respectively, and let
$\simgame_t(\en{A_{\myr}},\en{A_{\mys}})$ be the two player turn-based bipartite timed 
simulation  game structure.
The value of the integer-time 
\qtsf
$\simfunc_{\intPsi^{\td}}(\tuple{\en{s_{\myr}}, \en{s_{\mys}}})$, for
$\en{s_{\myr}}$ and $\en{s_{\mys}}$ states of $\en{\A_{\myr}}$ and $\en{\A_{\mys}}$ respectively, and
for $\intPsi \in 
\set{\intdist_{\maxdiff}, \intdist_{\limmaxdiff}, \intdist_{\limavg}}$,
 is defined as follows.
\[
\simfunc_{\intPsi^{\td}}(\tuple{\en{s_{\myr}}, \en{s_{\mys}}})=  
\inf_{\pi_{\mys} \in \Pi_{\mys}} \sup_{\pi_{\myr}\in \Pi_{\myr}} 
 \intPsi^{\td}\left(\rho\left(\pi_{\myr},\pi_{\mys}, \tuple{\en{s_{\myr}}, \en{s_{\mys}},2} \right)\right)
\]
where
$\rho\left(\pi_{\myr},\pi_{\mys}, \tuple{\en{s_{\myr}}, \en{s_{\mys}},2} \right)$ is the trajectory which results
given  the player-1 strategy $\pi_{\mys} \in \Pi_{\mys}$ and the player-2 strategy $\pi_{\myr}\in \Pi_{\myr}$.
\qed
\end{definition}

Let $\rho$ be a play of the simulation game $\simgame_t(\en{\A_{\myr}},\en{\A_{\mys}})$.
The next lemma states that closeness of the trajectories
$\rho(\myr), \rho(\mys)$ according to integer trajectory distances is approximately
the same as according to the normal (real-valued) trajectory distances,
\begin{lemma}
\label{lemma:IntTrajTimeDiv}
Let $\A_{\myr},\A_{\mys}$ be timed automata, with the corresponding enlarged
timed transition systems $\en{\A_{\myr}}, \en{\A_{\mys}}$ respectively, and let
$\simgame_t(\en{\A_{\myr}},\en{\A_{\mys}})$ be the two player turn-based bipartite timed 
simulation  game structure.
The following assertions are   true for
$\tuple{\intPsi, \Psi} \in 
\set{\tuple{\intdist_{\maxdiff},\dist_{\maxdiff}}, 
\tuple{\intdist_{\limmaxdiff}, \dist_{\limmaxdiff}}, 
\tuple{   \intdist_{\limavg},  \dist_{\limavg}}}$, for any
play
$\rho $ of $\simgame_t(\en{A_{\myr}},\en{A_{\mys}})$.
\begin{compactenum}
\item $\intPsi^{\td}(\rho) = \infty $
iff
$\Psi^{\td}(\rho) = \infty $.
\item If both $\intPsi^{\td}(\rho) $ and
$\Psi^{\td}(\rho) $ are less than $\infty$,  then, 
$
\left \arrowvert \intPsi^{\td}(\rho) - \Psi^{\td}(\rho) \right\arrowvert
\, \leq \, 1
$
\end{compactenum}
\end{lemma}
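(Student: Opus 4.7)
The plan is a direct reduction to the per-trajectory estimate Lemma~\ref{lemma:IntTraj}, via a case split on whether the ``challenger'' projection $\rho(\myr)$ is time-divergent. First I would note the following structural fact: since the simulation game $\simgame_t(\en{\A_{\myr}},\en{\A_{\mys}})$ is initialized at $\tuple{\en{s_{\myr}},\en{s_{\mys}},2}=\tuple{\tuple{s_{\myr},0,0},\tuple{s_{\mys},0,0},2}$, every play $\rho$ has projected trajectories of the form $\rho(\myr)=\en{\traj_{\myr}}$ and $\rho(\mys)=\en{\traj_{\mys}}$ for some trajectories $\traj_{\myr}$ of $\A_{\myr}$ and $\traj_{\mys}$ of $\A_{\mys}$ starting at $s_{\myr}$ and $s_{\mys}$ respectively. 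This is precisely what is needed to invoke Lemma~\ref{lemma:IntTraj} below.

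Next I would dispose of the Zeno case: if $\rho(\myr)\notin\td(\en{\A_{\myr}})$, then directly from Definition~\ref{defintion:MetricPlays} and its integer-time analogue, both $\Psi^{\td}(\rho)=0$ and $\intPsi^{\td}(\rho)=0$, so both items of the lemma hold trivially. In the remaining case $\rho(\myr)\in\td(\en{\A_{\myr}})$, the game-level objectives collapse to the underlying (integer-time or real-valued) trajectory distances, giving $\Psi^{\td}(\rho)=\Psi(\en{\traj_{\myr}},\en{\traj_{\mys}})$ and $\intPsi^{\td}(\rho)=\intPsi(\en{\traj_{\myr}},\en{\traj_{\mys}})$. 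Applying Lemma~\ref{lemma:IntTraj} to the pair $(\en{\traj_{\myr}},\en{\traj_{\mys}})$ then delivers exactly the two claims: item~1 of Lemma~\ref{lemma:IntTraj} gives the equi-infiniteness of $\intPsi^{\td}(\rho)$ and $\Psi^{\td}(\rho)$, and item~2 yields $|\intPsi^{\td}(\rho)-\Psi^{\td}(\rho)|\le 1$ whenever both are finite.

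The one minor wrinkle I anticipate is that Lemma~\ref{lemma:IntTraj} is literally stated for two trajectories of a \emph{single} automaton, whereas here $\traj_{\myr}$ and $\traj_{\mys}$ live in different automata $\A_{\myr}$ and $\A_{\mys}$. However, inspection of the proof of Lemma~\ref{lemma:IntTraj} reveals that it uses only the per-trajectory inequalities $\mtime_{\en{\traj}}[n]-1 < \inttime_{\en{\traj}}[n] \le \mtime_{\en{\traj}}[n]$, which are intrinsic to the enlargement construction and hold independently of which timed automaton the underlying trajectory comes from. Consequently the identical arithmetic manipulation applies in the two-automaton setting, making this a purely notational obstacle rather than a substantive one. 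I expect the case split plus this observation to constitute essentially the entire proof, with no new mathematical content beyond what is already in Lemma~\ref{lemma:IntTraj}.
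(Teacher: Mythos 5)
Your proposal is correct and follows essentially the same route as the paper, whose proof is exactly this: a case split via the definition of $\Psi^{\td}$ and $\intPsi^{\td}$ (the Zeno case giving $0$ for both) followed by an appeal to Lemma~\ref{lemma:IntTraj} on the projected trajectories. Your remark that Lemma~\ref{lemma:IntTraj} is stated for one automaton but only uses the inequalities $\mtime_{\en{\traj}}[n]-1 < \inttime_{\en{\traj}}[n] \le \mtime_{\en{\traj}}[n]$, hence transfers to trajectories of two different automata, is a small point the paper leaves implicit and is correctly resolved.
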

\begin{proof}
The results follow from Lemma~\ref{lemma:IntTraj}
and by the definitions of $\intPsi^{\td}(\rho) $ and
$\Psi^{\td}(\rho) $.
\qedhere
\end{proof}

We next present a result concerning number sets. 
This result will be used to show that the integer simulation functions are
close in value to the real-valued simulation functions.

\begin{lemma}
\label{lemma:InfSupEqualNum}
Let $\set{\tuple{x_{r, s}, y_{r,s}} \mid r\in R,\,  s\in S 
}$
be a set of tuples of numbers for some give sets $R,S$
-such that 
$x_{r, s}\in \reals^+_{\infty}$ and 
$ y_{r, s}\in \reals^+_{\infty}
$
where
$  \reals^+_{\infty}= \reals^+\cup \set{\infty}$.
Let both the following conditions hold:
\begin{compactenum}
\item For all $r,s$ we have $x_{r, s}= \infty$ iff
$y_{r, s}= \infty$.
\item There exists some $\alpha \in \reals^+$ such that
for all $r, s$, if   $x_{r, s}$ and $y_{r, s}$ are both finite, then
$|x_{r, s}- y_{r, s} | \leq \alpha$.
\end{compactenum}
Then, the following assertion are true.
\begin{compactenum}
\item 
$\inf_{s\in S}\sup_{r\in R} x_{r, s} \, = \, \infty$ 
iff
$\inf_{s\in S}\sup_{r\in R} y_{r, s} \, = \, \infty$.
\item 
If $\inf_{s\in S}\sup_{r\in R} x_{r, s} \, < \, \infty$ and 
$\inf_{s\in S}\sup_{r\in R} y_{r, s} \, < \, \infty$  then
\[
\left\arrowvert
\inf_{s\in S}\sup_{r\in R}  x_{r, s}\ -\ 
 \inf_{s\in S}\sup_{r\in R}  y_{r, s}
\right\arrowvert
\leq \alpha
\]
\end{compactenum}
\end{lemma}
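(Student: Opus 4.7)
The plan is to establish both assertions by combining the two hypotheses carefully: hypothesis~1 handles the $\infty$ values, and hypothesis~2 controls the finite deviations by a uniform additive constant~$\alpha$.

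For the first assertion I would argue directly. Suppose $\inf_{s\in S}\sup_{r\in R} x_{r,s} = \infty$, so that $\sup_{r\in R} x_{r,s} = \infty$ for every $s \in S$. Fix such an $s$. Either there exists $r_0$ with $x_{r_0,s} = \infty$, in which case hypothesis~1 yields $y_{r_0,s} = \infty$ and hence $\sup_r y_{r,s} = \infty$; or every $x_{r,s}$ is finite but the values are unbounded in~$r$, in which case hypothesis~1 makes every $y_{r,s}$ finite and hypothesis~2 gives $y_{r,s} \geq x_{r,s} - \alpha$, so the $y_{r,s}$ are unbounded as well. In both sub-cases $\sup_r y_{r,s} = \infty$, and since $s$ was arbitrary we conclude $\inf_s \sup_r y_{r,s} = \infty$. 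The reverse implication is symmetric.

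For the second assertion, let $X = \inf_s \sup_r x_{r,s}$ and $Y = \inf_s \sup_r y_{r,s}$, both assumed finite. Given $\varepsilon > 0$, choose $s^* \in S$ with $\sup_{r} x_{r,s^*} < X + \varepsilon$. Because this supremum is finite, every $x_{r,s^*}$ is finite; hypothesis~1 then forces every $y_{r,s^*}$ to be finite as well, and hypothesis~2 gives the pointwise bound $y_{r,s^*} \leq x_{r,s^*} + \alpha$ for all $r \in R$. Taking the supremum over $r$ yields $\sup_r y_{r,s^*} \leq \sup_r x_{r,s^*} + \alpha < X + \alpha + \varepsilon$, and hence $Y \leq X + \alpha + \varepsilon$. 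Letting $\varepsilon \to 0$ gives $Y \leq X + \alpha$; by the symmetric argument $X \leq Y + \alpha$, so $|X - Y| \leq \alpha$.

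The argument is essentially routine; the only point to watch is that one cannot first pick an $r^*$ nearly attaining the inner supremum (it need not be attained, and might be $\infty$). Choosing $s^*$ first---on the outer infimum side---bypasses this issue, because once $\sup_r x_{r,s^*}$ is finite, hypotheses~1 and~2 automatically bound every $y_{r,s^*}$ uniformly in~$r$.
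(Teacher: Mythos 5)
Your proof is correct and takes essentially the same approach as the paper: the first assertion is handled by the identical two-subcase argument for each fixed $s$ (either some $x_{r,s}=\infty$, forcing $y_{r,s}=\infty$, or all are finite and the $\alpha$-bound transfers unboundedness), and the second assertion reduces to the fact that the inner suprema agree within $\alpha$ whenever finite. The only cosmetic difference is that you pass to the outer infimum via an $\varepsilon$-near-optimal $s^*$, whereas the paper compares $\sup_r x_{r,s}$ and $\sup_r y_{r,s}$ pointwise in $s$ and then discards the indices with infinite suprema before taking the infimum.
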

\begin{proof}
  We prove both the assertions.
  \begin{compactenum}
  \item Suppose $\inf_{s\in S}\sup_{r\in R} x_{r, s} =  \infty$ (the other
    direction is symmetric).
    We must have that for every $s\in S$ the
    entity $\sup_{r\in R} x_{r, s}  = \infty$ (otherwise the $\inf$ would have been smaller).
    We show that:\\
    \underline{\textsf{Fact-1:}} For every $s\in S$,
    if $\sup_{r\in R} x_{r, s}  = \infty$.
    the
    entity $\sup_{r\in R} y_{r, s}  = \infty$.

    Fix some $s\in S$.
    \begin{compactitem}
    \item 
      If there exists some $r\in R$ such that $x_{r, s}  = \infty$, then
      by the conditions of the lemma, $y_{r, s}  = \infty$.
      Thus  $\sup_{r\in R} y_{r, s}  = \infty$.
    \item Suppose for all $r\in R$ we have  $x_{r, s}  < \infty$.
      By the conditions of the lemma,  for all $r\in R$ 
      we have $|x_{r, s}- y_{r, s} | \leq \alpha$.
      Thus, if  $\sup_{r\in R} x_{r, s}  = \infty$, then $\sup_{r\in R} y_{r, s}  = \infty$.
    \end{compactitem}
    Thus \underline{\textsf{Fact-1}} is true.
    Hence, $\sup_{r\in R} y_{r, s}  = \infty$ for every $s\in S$.
    Thus, $\inf_{s\in S}\sup_{r\in R} y_{r, s} =  \infty$.
    
  \item Suppose we have both $\inf_{s\in S}\sup_{r\in R} x_{r, s}  < \infty$ and 
    $\inf_{s\in S}\sup_{r\in R} y_{r, s}  < \infty$.
        
    Fix some $s\in S$.
    \begin{compactitem}
    \item 
      Suppose $\sup_{r\in R}  x_{r, s} = \infty$.
      We have that $\sup_{r\in R}  y_{r, s} = \infty$
       by  \underline{\textsf{Fact-1}} above. 
    \item 
    Suppose $\sup_{r\in R}  x_{r, s} < \infty$
    (note that there must exist at least one such  $s$
    otherwise $\inf_{s\in S}\sup_{r\in R} x_{r, s}  = \infty$).
    Thus, for this $s$, we have that for all $r\in R$,
    the quantity $x_{r, s} < \infty$.
    By the conditions of the lemma, we have that 
    for this $s$, for all $r\in R$,
    the quantity $y_{r, s} < \infty$, and that
    $|x_{r, s}- y_{r, s} | \leq \alpha$.
    This implies that $\sup_{r\in R}  y_{r, s} < \infty$, and that
    \[|\sup_{r\in R}  x_{r, s}\ -\ 
    \sup_{r\in R}  y_{r, s}|
    \leq \alpha\]

      \end{compactitem}
   Let  $p_{\mys}= \sup_{r\in R}  x_{r, s}$, and $q_{\mys} = \sup_{r\in R}  y_{r, s}$.
   From above, we have that for all $s$, it holds that either
   \begin{compactitem}
   \item $p_{\mys}=q_{\mys} = \infty$, or
   \item 
   $|p_{\mys}-q_{\mys}| \leq \alpha$.
 \end{compactitem}
   Also, it holds that for at least one $s$, we have $p_{\mys} < \infty$.
   Thus, can throw away the $p_{\mys}$ numbers such that
   $p_{\mys} =  \infty$ in the computation of $\inf_{\mys} p_{\mys}$.
   For the rest, since $|p_{\mys}-q_{\mys}| \leq \alpha$, and since
  $p_{\mys} \geq 0 $ and $q_{\mys}\geq 0$,
   we have that $|\inf_{\mys}p_{\mys}-\inf_{\mys}q_{\mys}| \leq \alpha$.
   Thus, the second part of the assertion is true.
\qedhere
\end{compactenum}

\end{proof}

The following proposition states that the integer simulation functions closely
approximate the original \qtsf{s}.
\begin{proposition}[Integer-Time \qtsf{s} Approximate
Exact \qtsf{s}]
\label{proposition:IntSimToSim}
Let $\A_{\myr},\A_{\mys}$ be timed automata,  with the corresponding enlarged
timed transition systems $\en{\A_{\myr}}, \en{\A_{\mys}}$ respectively, and let
$\simgame_t(\en{A_{\myr}},\en{A_{\mys}})$ be the two player turn-based bipartite timed simulation game structure.
For $\tuple{\intPsi, \Psi} $ in
$\set{\tuple{\intdist_{\maxdiff},\dist_{\maxdiff}}, 
\tuple{\intdist_{\limmaxdiff}, \dist_{\limmaxdiff}}, 
\tuple{   \intdist_{\limavg},  \dist_{\limavg}}}$,
we have the following assertions to be true.
\begin{compactenum}
\item $\simfunc_{\intPsi^{\td}}(\tuple{\en{s_{\myr}}, \en{s_{\mys}}}) = \infty$
iff 
$\simfunc_{\Psi^{\td}}(\tuple{\en{s_{\myr}}, \en{s_{\mys}}})= \infty$.
\item If $\simfunc_{\intPsi^{\td}}(\tuple{\en{s_{\myr}}, \en{s_{\mys}}}) < \infty$ and
 $\simfunc_{\Psi^{\td}}(\tuple{\en{s_{\myr}}, \en{s_{\mys}}})< \infty$, then
 \[
\left\arrowvert
\simfunc_{\intPsi^{\td}}(\tuple{\en{s_{\myr}}, \en{s_{\mys}}})-
\simfunc_{\Psi^{\td}}(\tuple{\en{s_{\myr}}, \en{s_{\mys}}})
\right\arrowvert \leq 1
\]
\end{compactenum}
\end{proposition}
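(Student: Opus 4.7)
The proposition is a direct consequence of combining the pointwise (per-play) comparison of Lemma~\ref{lemma:IntTrajTimeDiv} with the $\inf\sup$-preservation result of Lemma~\ref{lemma:InfSupEqualNum}. The plan is to set up an indexed family of number pairs, one pair per strategy profile, whose two coordinates are exactly the values of the two simulation functions on the unique resulting play, and then to apply Lemma~\ref{lemma:InfSupEqualNum} with $\alpha = 1$.

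The first step is to fix the index sets for Lemma~\ref{lemma:InfSupEqualNum}: take $S = \Pi_{\mys}$ (player-1 strategies) and $R = \Pi_{\myr}$ (player-2 strategies). For each pair $(\pi_{\myr}, \pi_{\mys}) \in R \times S$, let $\rho_{\pi_{\myr}, \pi_{\mys}} = \rho(\pi_{\myr}, \pi_{\mys}, \tuple{\en{s_{\myr}}, \en{s_{\mys}}, 2})$ be the unique resulting play of $\simgame_t(\en{\A_{\myr}}, \en{\A_{\mys}})$, and define
\[
x_{\pi_{\myr}, \pi_{\mys}} = \Psi^{\td}(\rho_{\pi_{\myr}, \pi_{\mys}}), \qquad
y_{\pi_{\myr}, \pi_{\mys}} = \intPsi^{\td}(\rho_{\pi_{\myr}, \pi_{\mys}}).
\]
Both quantities lie in $\reals^+ \cup \set{\infty}$ since the trace distances and their integer-time variants are nonnegative extended reals.

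The second step is to verify the two hypotheses of Lemma~\ref{lemma:InfSupEqualNum}. By part~1 of Lemma~\ref{lemma:IntTrajTimeDiv}, for every play $\rho_{\pi_{\myr}, \pi_{\mys}}$ we have $x_{\pi_{\myr}, \pi_{\mys}} = \infty$ iff $y_{\pi_{\myr}, \pi_{\mys}} = \infty$; this is precisely the first hypothesis. By part~2 of the same lemma, whenever both are finite we have $|x_{\pi_{\myr}, \pi_{\mys}} - y_{\pi_{\myr}, \pi_{\mys}}| \leq 1$; this is the second hypothesis with $\alpha = 1$.

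Finally, unfolding the definitions, by Definitions~\ref{def:qsim} and~\ref{def:intqsim},
\[
\simfunc_{\Psi^{\td}}(\tuple{\en{s_{\myr}}, \en{s_{\mys}}}) = \inf_{\pi_{\mys} \in S} \sup_{\pi_{\myr} \in R} x_{\pi_{\myr}, \pi_{\mys}}, \qquad
\simfunc_{\intPsi^{\td}}(\tuple{\en{s_{\myr}}, \en{s_{\mys}}}) = \inf_{\pi_{\mys} \in S} \sup_{\pi_{\myr} \in R} y_{\pi_{\myr}, \pi_{\mys}}.
\]
Applying Lemma~\ref{lemma:InfSupEqualNum} then yields assertion~1 (the two $\inf\sup$'s are simultaneously infinite) and, in the finite case, the bound $|\simfunc_{\intPsi^{\td}} - \simfunc_{\Psi^{\td}}| \leq 1$, which is assertion~2. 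There is no substantive obstacle: the entire proof is a clean invocation of the two preceding lemmas, with the only bookkeeping being the identification of the index sets and the constant $\alpha = 1$.
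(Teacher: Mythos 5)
Your proposal is correct and follows essentially the same route as the paper: the paper's proof is exactly the combination of Lemma~\ref{lemma:IntTrajTimeDiv} (per-play comparison, giving the hypotheses with $\alpha=1$) and Lemma~\ref{lemma:InfSupEqualNum} (transfer through $\inf\sup$ over the strategy index sets). Your write-up merely makes the indexing by strategy pairs explicit, which the paper leaves implicit.
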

\begin{proof}
The proof follows from Lemma~\ref{lemma:InfSupEqualNum} and
Lemma~\ref{lemma:IntTrajTimeDiv}.
\qedhere
\end{proof}

\subsection{Reduction to Games on Finite Weighted Game Graphs}
\label{subsection:ReductionFinite}
In this section we show how to compute the values of the integer-time
\qtsf{s}
by reducing the problem  to finite state games.
First, we show that the values of the integer-time \qtsf{s}
are exactly the same on \emph{discrete time} region graphs as on timed automata.

\smallskip\noindent\textbf{The Integer Trace Difference Metrics
and Simulation Functions  on  Region
Graphs.}
We first lift the integer trace difference metrics $\intPsi^{\td}$
for
$\intPsi \in \set{ \intdist_{\maxdiff}, \intdist_{\limmaxdiff},  \intdist_{\limavg}}$
to region graphs.
Let $\reg(\en{\A})$ be the region graph corresponding to
the enlarged timed automaton  structure $\en{\A}$ as defined in
Subsection~\ref{subsection:Enlarge}.
We note that the $\ticks$ variable in $\en{\A}$  counts the elapsed integer time
boundaries crossed by the global clock $z$ since the last transition in $\A$..
Thus $\reg(\en{\A})$ can be viewed as a \emph{discrete time} transition
structure.
For a trajectory $\en{\traj}$ of $\reg(\en{\A})$ ,  we use
 Lemma~ \ref{lemma:IntTimeViaTicks} as defining
$\inttime_{\reg(\en{\traj})}[i]$ in terms of the $\ticks$ variable.
For \tts corresponding to  region graph  $\reg(\en{\A})$ we define
$\td(\reg(\en{\A})$ as the set of runs satisfying
the B\"uchi condition $\Buchi\left( \bigvee_{i=1}^{c_{\max}}\ticks=i\right)$.
By Lemma~\ref{lemma:TimeDivEncode}, this has the intended
meaning of encoding time divergence.
Let the 
Consider the (discrete) timed simulation game
$\simgame_t\left(\reg(\en{\A_{\myr}}), \reg(\en{\A_{\mys}})\right)$,
and let us use the observation function $\mu_{\simgame}$  defined as
$\mu_{\simgame}\left(\tuple{l,\kappa,\z, \myd} \right) = \mu(l)$.
 We define $\intPsi^{\td}$ for
$\intPsi \in \set{ \intdist_{\maxdiff}, \intdist_{\limmaxdiff},  \intdist_{\limavg}}$
on plays of $\simgame_t\left(\reg(\en{\A_{\myr}}), \reg(\en{\A_{\mys}})\right)$ as usual
using $\inttime$.
The next lemma states that the values of the integer simulation functions
on the region graphs are the same as that on timed automata.

%

\begin{lemma}
Let $\A_{\myr}, \A_{\mys}$ be timed automata, and let $\reg(\en{\A_{\myr}}), \reg(\en{\A_{\mys}})$
be region graphs of the corresponding enlarged timed game
structures $\en{\A_{\myr}}, \en{\A_{\mys}}$ respectively.
For any states $\en{s_{\myr}}$ of $\en{\A_{\myr}}$ and
$\en{s_{\mys}}$ of $\en{\A_{\mys}}$, we have
\[
\simfunc_{\intPsi^{\td}}^{\simgame_t(\en{\A_{\myr}}, \en{\A_{\mys}})}
\Bigl(\tuple{\en{s_{\myr}}, \en{s_{\mys}}}\Bigr)  
\ =\ 
 \simfunc_{\intPsi^{\td}}^{\simgame_t(\reg(\en{\A_{\myr}}), \reg(\en{\A_{\mys}}))}
\Bigl(\tuple{\reg(\en{s_{\myr}}), \reg(\en{s_{\mys}})}\Bigr) 
\]
where $\intPsi
\in \set{ \intdist_{\maxdiff}, \intdist_{\limmaxdiff},  \intdist_{\limavg}}$.
\end{lemma}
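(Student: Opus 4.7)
The plan is to establish both inequalities by bidirectionally transferring strategies between the two games while preserving the integer-time objective. The central observation, via Lemma~\ref{lemma:IntTimeViaTicks}, is that $\intPsi$ depends only on the sequence of $\ticks$ values and of location observations along each projected trajectory. Both components are invariant under region equivalence on $\en{\A}$: by definition region-equivalent states share the same location (hence the same $\mu_{\simgame}$) and the same $\ticks$-component. Moreover, time-divergence on $\en{\A}$ and on $\reg(\en{\A})$ is encoded by the B\"uchi condition $\Buchi\bigl(\bigvee_{i=1}^{c_{\max}}\ticks=i\bigr)$ by Lemma~\ref{lemma:TimeDivEncode}, which is again region-invariant.

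For the direction $\simfunc_{\intPsi^{\td}}^{\simgame_t(\reg(\en{\A_{\myr}}),\reg(\en{\A_{\mys}}))} \leq \simfunc_{\intPsi^{\td}}^{\simgame_t(\en{\A_{\myr}},\en{\A_{\mys}})}$, I would start from any $\varepsilon$-optimal player-1 strategy $\pi_{\mys}$ in the concrete simulation game and build a region-game strategy $\widehat{\pi}_{\mys}$. Given a region history, pick any concrete history realizing it (possible because regions form a time-abstract bisimulation quotient), apply $\pi_{\mys}$, and return the region of the resulting successor. Every player-2 region strategy can likewise be instantiated to a concrete one, so each play under $(\widehat{\pi}_{\mys},\pi_{\myr})$ corresponds to a concrete play with identical $\ticks$ and observation sequences, hence identical $\intPsi^{\td}$ value.

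For the reverse inequality I would lift an $\varepsilon$-optimal region-game strategy $\widehat{\pi}_{\mys}$ to a concrete strategy by tracking a corresponding region history: when player~2 plays a concrete transition, project it into the region graph, consult $\widehat{\pi}_{\mys}$ for the region response, and realize it by any concrete transition inside the prescribed target region. The bisimulation property guarantees such realizations exist, and the outcomes again match in $\ticks$ increments and observations, hence in $\intPsi^{\td}$ value; a standard $\varepsilon \to 0$ argument then yields equality.

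The main technical point to verify is the correspondence between concrete player-2 moves of arbitrary real-valued duration and discrete region moves. However, every concrete transition of $\en{\A}$ carries a well-defined $\ticks$ increment recorded in its successor, and this increment is preserved by region equivalence; conversely each region transition can be realized by concrete transitions that all share the same $\ticks$ increment and observations. Since $\intPsi^{\td}$ depends only on $\ticks$-increments, observations, and the B\"uchi time-divergence condition, all of which are region-invariant, the strategy transfers preserve the objective value, so the two simulation-function values coincide.
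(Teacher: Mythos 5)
Your proposal is correct and rests on the same core idea as the paper's proof: the enlarged region graph is a (time-abstract) bisimulation quotient of $\en{\A}$ that preserves exactly the data the objective depends on ($\ticks$ increments, location observations, and the B\"uchi encoding of time divergence), so plays in the two games correspond with identical $\intPsi^{\td}$ values. The paper states this only at the level of a play-for-play correspondence, whereas you make the strategy transfer via $\varepsilon$-optimal strategies explicit (just be sure the concrete realizations of region histories are chosen incrementally, so each region outcome corresponds to a single concrete outcome of the given strategy); this is a finer presentation of the same argument, not a different route.
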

\begin{proof}
For any timed automata $\A$, we have that
$\reg(\en{\A})$ is a bisimulation quotient of $\en{\A}$
for the enlarged region equivalence relation (as defined in
Subsection~\ref{subsection:Enlarge}).
Thus, given any play $\rho$ of $\simgame_t(\en{\A_{\myr}}, \en{\A_{\mys}})$,
there exists a play $\rho_{\reg}$ of 
 $\simgame_t(\reg(\en{\A_{\myr}}), \reg(\en{\A_{\mys}}))$ such that
$\rho_{\reg}(\myr)$ and  $\rho_{\reg}(\mys)$ have the same
 integer time observation trace sequences  as $\rho(\myr)$ and  $\rho(\mys)$
(note that the enlarged region equivalence relation ensures that the values
of the $\ticks$ variables match at each step).
The dual fact for any play $\rho_{\reg}$ of 
 $\simgame_t(\reg(\en{\A_{\myr}}), \reg(\en{\A_{\mys}}))$  also holds due to the 
bisimulation.
Since $\intPsi$ depends only on the integer time plays of the game
structures, we have the desired result.
\qedhere
\end{proof}

\smallskip\noindent\textbf{The weighted finite untimed game graph 
$\simreggame\bigl(\reg(\ensysr),\reg(\ensyss)\bigr)$.}
Now we construct a finite weighted game graph $\simreggame(\reg(\ensysr),(\ensyss))$, on which
we can use the algorithms of Section~\ref{section:Finite}, to  compute the values of the
integer-time \qtsf{s}
for the game
$\simgame_t(\reg(\en{\A_{\myr}}), \reg(\en{\A_{\mys}}))$.
The game structure $\simreggame$ is essentially the simulation game 
 $\simgame_t$ over the region graphs,  where weights 
are assigned to transitions based on the $\tick$ values of the region states.
Formally,
$\simreggame\bigl(\reg(\ensysr),\reg(\ensyss)\bigr)$ 
(denoted $ \simreggame$ in short)
is the 
tuple $\tuple{S^{\simreggame},    \rightarrow ^{\simreggame}, 
w^{\simreggame}}$, where
\begin{compactitem}
\item $S^{\simreggame} = S^{\simreggame}_1 \cup S^{\simreggame}_2$, and
  \begin{compactitem}[$\star$]
  \item 
    The set of player-$2$ states   is  
    $S^{\simreggame}_2 = S^{\reg(\ensysr)} \times S^{\reg(\ensyss)} \times\set{2}$, 
    where
    $ S^{\reg(\ensysr)}$ is the set of states of $\reg(\ensysr)$, and $ S^{\reg(\ensyss)}$ 
    is the set of states of $\reg(\ensyss)$.
  \item The set of player-1 states is 
    $S^{\simreggame}_1 =S^{\reg(\ensysr)} \times S^{\reg(\ensyss)} \times
    \set{1}$.
  \end{compactitem}

\item   $\rightarrow^{\simreggame}$ is the set of edges where
  \begin{compactitem}[$\star$]
  \item The  player-2 transitions  are:\\
    $\tuple{\reg\left(\tuple{l_{\myr}, \widehat{\kappa}_{\myr}, \myd_{\myr}}\right), 
\reg\left( \en{s_{\mys}}\right), 2} 
    \longrightarrow  $
    $ \tuple{\reg\left(\tuple{l_{\myr}', \widehat{\kappa}'_{\myr}, \myd_{\myr}'}  \right), \reg\left( \en{s_{\mys}}\right), 1}$,
    such that
      $\reg\left(\tuple{l_{\myr}, \widehat{\kappa}_{\myr}, \myd_{\myr}}\right)
      \longrightarrow  
      \reg\left(\tuple{l_{\myr}', \widehat{\kappa}'_{\myr}, \myd_{\myr}'}  \right)$
      is a valid transition in
      $\reg(\ensysr)$.
  \item 
   The player-1 transitions are:\\
    $\tuple{\reg\left(\en{s_{\myr}}\right), 
\reg\left(\tuple{l_{\mys}, \widehat{\kappa}_{\mys}, \myd_{\mys}}\right), 
1} 
    \longrightarrow  $
    $ \tuple{\reg\left(\en{s_{\myr}}\right),
\reg\left(\tuple{l_{\mys}', \widehat{\kappa}'_{\mys}, \myd_{\mys}'}  \right), 
 2}$,
    such that
   \begin{compactenum}
    \item 
$\reg\left(\tuple{l_{\mys}, \widehat{\kappa}_{\mys}, \myd_{\mys}}\right)
      \longrightarrow  
      \reg\left(\tuple{l_{\mys}', \widehat{\kappa}'_{\mys}, \myd_{\mys}'}  \right)$
      is a valid transition in
      $\reg(\ensyss)$; and
    \item $\mu\left(\reg\left(\en{s_{\myr}}\right) \right) =
      \mu\left(\reg\left(\en{s_{\mys}'}\right) \right) $, that is, the
      observation on the (timed automaton) 
      location of $\reg(s_{\mys}')$ is the same as the observation
      on the location of  $\reg(s_{\myr})$.
    \end{compactenum}
    If there is no outgoing transition from a player-1 state according to the 
    above rules, we add a dummy transition to a sink state $s_{\sink}$ which
    we define to be such that the $\Opt$ value for player~1 is
    $\infty$ for all objectives from $s_{\sink}$.

  \end{compactitem}

\item 
The weight function $w^{\simreggame}$ is given as follows.
  \begin{compactitem}[$\star$]
  \item
    $    w^{\simreggame}(e_2) = 0$ for any edge $e_2$ originating from
    a player-2 state.
    


  \item 
    $
    w^{\simreggame}\left(
      \begin{array}{l}
        \tuple{\reg\left(\tuple{l_{\myr}, \widehat{\kappa}_{\myr}, \myd_{\myr}} \right), 
          \reg\left(\tuple{l_{\mys}, \widehat{\kappa}_{\mys}, \myd_{\mys}}
          \right), 1} \longrightarrow\\
        \qquad\qquad 
        \tuple{\reg\left(\tuple{l_{\myr}, \widehat{\kappa}_{\myr}, \myd_{\myr}} \right), 
          \reg\left(\tuple{l_{\mys}', \widehat{\kappa}'_{\mys},  \myd_{\mys}'}\right), 2}
      \end{array}
    \right)$ 
    is the value
$  \myd_{\myr}-  \myd_{\mys}'$.
  \end{compactitem}
  We note that     $\myd_{\mys}'$. is the number
  of  integer boundaries crossed by the clock $z$ in a  transition to go
  from any state in $\reg\left(\tuple{l_{\mys}, \widehat{\kappa}_{\mys}, \myd_{\mys}} \right)$
  to any state in $\reg\left(\tuple{l_{\mys}', \widehat{\kappa}'_{\mys}, \myd_{\mys}'} \right)$, and
  similarly for  $\myd_{\myr}$
  Thus, the quantity $  \myd_{\myr}-  \myd_{\mys}'$ encodes the difference of the 
integer boundaries crossed by the clock $z$ in  the region graphs $\reg(\ensysr)$
and $\reg(\ensyss)$ during the last step in the simulation game.

\smallskip\noindent\emph{Intuitive explanation of $\simreggame$:}
The simulation game $\simgame_t$ can be viewed as proceeding in a sequence of
rounds -- in each round first player~2 picks a transition in $\A_{\myr}$, and then 
player~1 picks a transition in $\A_{\mys}$, trying to the match the move of player~2.
The weighted game $\simreggame$ can similarly be viewed as proceeding in a sequence
of rounds.
First player~2 takes a transition from
a state $\tuple{\reg\left(\tuple{l_{\myr}, \widehat{\kappa}_{\myr}, \myd_{\myr}}\right), 
\reg(\en{s_{\mys}}), 2} $ to
$\tuple{\reg\left(\tuple{l_{\myr}', \widehat{\kappa}'_{\myr}, \myd_{\myr}'}\right),
 \reg(s_{\mys}), 1}$ corresponding to
the transition in the timed automation $\A_{\myr}$. 
The integer boundaries crossed by the global clock are recorded in the
update $\myd_{\myr}'$ (but the weight of the transition is taken as $0$).
Denoting  $\tuple{l_{\myr}', \widehat{\kappa}'_{\myr}, \myd_{\myr}'}$ as $\en{s'_{\myr}}$, and letting
$\en{s_{\mys}} = \tuple{l_{\mys}, \widehat{\kappa}_{\mys}, \myd_{\mys}}$, the next transition 
is from 
$\tuple{\reg\left(\en{s'_{\myr}}\right), 
\reg\left(\tuple{l_{\mys}, \widehat{\kappa}_{\mys}, \myd_{\mys}}\right), 1}$ to
 $ \tuple{\reg\left(\en{s'_{\myr}}\right),
\reg\left(\tuple{l_{\mys}', \widehat{\kappa}'_{\mys}, \myd_{\mys}'}  \right), 
 2}$,
corresponding to a player-1 transition  in the timed automation $\A_{\mys}$ in the
simulation game $\simgame_t$.
The duration of the player-1 transition in $\A_{\mys}$ corresponds to $ \myd_{\mys}'$
integer boundaries being crossed by the clock $z$ of $\A_{\mys}$ 
Thus, the difference in the integer boundaries crossed in the trajectories of
$\A_{\myr}$ and $\A_{\mys}$ for  this round is $ \myd_{\mys}' -  \myd_{\myr}'$,
and this is the weight of the second transition of $\simreggame$.


\end{compactitem}



The next lemma states that to compute the values of the integer-time
\qtsf{s}
on the region graphs, we can  use the 
objectives $\maxDiffLC, \EvMaxDiffLC, \AvDiffLC$ on the weighted
finite game $\simreggame\bigl(\reg(\ensysr),\reg(\ensyss)\bigr)$.

\begin{lemma}
\label{lemma:SimToGame}
Let $\A_{\myr}$ and $\A_{\mys}$ be well-formed timed automata such that
all clocks are bounded by $c_{\max}$, and let  
$\simreggame\bigl(\reg(\ensysr),\reg(\ensyss)\bigr)$ be the 
weighted game structure 
corresponding to  $\simgame_t(\reg(\en{\A_{\myr}}), \reg(\en{\A_{\mys}}))$,
 as described above.
Fix the coB\"uchi  objective $ \coBuchi(\ticks_{\myr}=0)$ in the following.
For
$\tuple{\intPsi, \Xi} $ equal to
$\tuple{\intdist_{\maxdiff},  \maxDiffLC}$, or
$\tuple{\intdist_{\limmaxdiff}, \EvMaxDiffLC}$, or
$\tuple{\intdist_{\limavg}, \AvDiffLC}$, we have
\[ 
\simfunc_{\intPsi^{\td}}^{\simgame_t(\reg(\en{\A_{\myr}}), \reg(\en{\A_{\mys}}))}
\Bigl(\tuple{\reg(\en{s_{\myr}}), \reg(\en{s_{\mys}})}\Bigr) 
=
\Bigl(\Opt^{\simreggame\bigl(\reg(\ensysr),\reg(\ensyss)\bigr)}
\left(\Xi\right)\Bigr)
\Bigl(\tuple{\reg(\en{s_{\myr}}), \reg(\en{s_{\mys}}),2}\Bigr) 
\]
\end{lemma}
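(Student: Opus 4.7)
The plan is to establish a value-preserving correspondence between plays of the (discrete) simulation game $\simgame_t\bigl(\reg(\en{\A_{\myr}}),\reg(\en{\A_{\mys}})\bigr)$ and plays of the weighted finite game $\simreggame\bigl(\reg(\ensysr),\reg(\ensyss)\bigr)$, from which equality of the two $\inf_{\pi_\mys}\sup_{\pi_\myr}$ values follows immediately.

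First I would argue that the two games are played on essentially the same underlying graph.  In $\simreggame$ the player-1 edges are restricted to targets whose location observation matches that of the current $A_\myr$-state, with a dummy edge to a sink of value $\infty$ whenever no such edge exists.  This is precisely the scenario in which $\intPsi^{\td}$ takes the value $\infty$ by definition, so an optimal player-1 never benefits from a ``non-matching'' choice in $\simgame_t$.  Consequently, plays and strategies of the two games are in a natural bijection.

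Next comes the central numerical identity.  From the weight assignment (zero on every player-2 edge; $\myd_\myr-\myd_\mys'$ on every player-1 edge) together with Lemma~\ref{lemma:IntTimeViaTicks}, a straightforward induction on $k$ gives
\[
\EL(\rho(2k))\,=\,\EL(\rho(2k+1))\,=\,\inttime_{\rho(\myr)}[k] \,-\, \inttime_{\rho(\mys)}[k],
\]
so the sequence $\DiffL(\rho(n))=|\EL(\rho(n))|$ is the ``doubled'' version $b_n=a_{\lfloor n/2\rfloor}$ of the integer-time difference sequence $a_k=|\inttime_{\rho(\myr)}[k]-\inttime_{\rho(\mys)}[k]|$.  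Combining this with Lemma~\ref{lemma:TimeDivEncode} --- which says that $\coBuchi(\ticks_\myr=0)$ holds iff $\rho(\myr)\notin\td(\en{\A_\myr})$ --- the coB\"uchi disjunction in $\Xi$ exactly mirrors the ``$0$ on time-convergent runs'' clause of $\intPsi^{\td}$, so it suffices to check $\Xi(\rho)=\intPsi^{\td}(\rho)$ whenever $\rho(\myr)$ is time-divergent.

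For $\maxDiffLC$ and $\EvMaxDiffLC$ this last equality is immediate because duplication leaves suprema and $\limsup$s unchanged when consecutive terms coincide.  For $\AvDiffLC$, a short Ces\`aro manipulation writes the odd-prefix average of $(b_n)$ as $\tfrac{2m}{2m+1}A_m$ and the even-prefix average as $\tfrac{1}{2}\bigl(A_m+\tfrac{m-1}{m}A_{m-1}\bigr)$, where $A_m=\tfrac{1}{m}\sum_{k<m}a_k$, and shows that both have the same $\limsup$ as $A_m$; hence $\limsup_N\tfrac{1}{N}\sum_{n<N}b_n=\limsup_m A_m=\intdist_{\limavg}(\rho(\myr),\rho(\mys))$.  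The main technical point is precisely this last step --- showing that the pointwise discrepancy between the doubled-sequence averages and the original averages vanishes in the $\limsup$ --- which follows by sandwiching using the uniform bound $0\le a_k\le c_{\max}\cdot k$ coming from the per-step bound on $\ticks$-values.
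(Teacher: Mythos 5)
This is essentially the paper's own argument: a round-for-round correspondence between plays of $\simgame_t(\reg(\en{\A_{\myr}}),\reg(\en{\A_{\mys}}))$ and of $\simreggame$ (with the sink absorbing the case of no observation-matching move), the equivalence of $\coBuchi(\ticks_{\myr}=0)$ with non-time-divergence of the $\myr$-projection via Lemma~\ref{lemma:TimeDivEncode}, and the weight-sum identity $\EL(\rho(2k))=\EL(\rho(2k+1))=\inttime_{\rho(\myr)}[k]-\inttime_{\rho(\mys)}[k]$ coming from the zero weights on player-2 edges, after which the three objective values coincide with the corresponding $\intPsi^{\td}$ values. You give more detail than the paper on the $\AvDiffLC$ doubling step; only note that the bound $a_k\le c_{\max}\cdot k$ by itself does not make the correction term $a_m/(2m+1)$ vanish --- the clean justification is that each odd-prefix average of the doubled sequence is a convex combination of $A_m$ and $a_m$ bounded by $\max(A_m,A_{m+1})$, while the even-prefix averages equal $A_m$ exactly, so the $\limsup$ of Ces\`aro means is unchanged.
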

\begin{proof}
Note that every finite play $\rho^{\simgame_t}$ 
of $\simgame_t\left(\reg(\en{\A_{\myr}}), \reg(\en{\A_{\mys}})\right)$  in which player~1
has not lost (in the simulation game) 
corresponds to a finite play $\rho^{\simreggame}$ 
in $\simreggame$ in which the sink location $s_{\sink}$
has not been visited, and similarly for the other direction 
(for starting states $\tuple{\reg(\en{s_{\myr}}), \reg(\en{s_{\mys}}),2}$).
The move choices for both players are the same, apart from $s_{\sink}$
transitions.

Observe that  any two states $\reg(\en{s_{\myr}}), \reg(\en{s_{\mys}}$ are not untimed
similar in $\simgame_t\left(\reg(\en{\A_{\myr}}), \reg(\en{\A_{\mys}})\right)$ iff
in the game $\simreggame$,
for every player-1 strategy, player~2 has a strategy which forces the play
into the sink location and thus leads to an $\infty$ value for all the quantitative 
objectives.
Thus, consider the case where 
$\reg(\en{s_{\myr}}), \reg(\en{s_{\mys}}$ are untimed
similar.
Now, $\td$ has been shown to be equivalent to 
$\Buchi\left( \bigvee_{i=1}^{c_{\max}}\ticks=i\right)$ 
 earlier, on the region graphs.
This B\"uchi condition is equivalent to
$\neg \coBuchi(\ticks=0)$.
Thus,
the condition $\rho^{\simgame_t}(\myr)\notin \td$  holds
iff    $ \rho^{\simreggame} \in \coBuchi(\ticks_{\myr}=0)$ holds.
Finally, we note that for any  play
$\rho^{\simgame_t}
\bigl(\pi_{\myr},\pi_{\mys}, \tuple{\reg(\en{s_{\myr}}), \reg(\en{s_{\mys}}),2}\bigr)$,
the corresponding play
$\rho^{\simreggame}
\bigl(\pi_{\myr},\pi_{\mys}, \tuple{\reg(\en{s_{\myr}}), \reg(\en{s_{\mys}}),2}\bigr)$ is such that
\begin{compactenum}
\item 
Forall  $i>0$,
we have
$\inttime_{\left(\rho^{\simgame_t}\right)(\myr)}[i] - 
\inttime_{\left(\rho^{\simgame_t}\right)(\mys)}[i] \ =
\ 
\sum_{j=1}^{i}
w^{\simreggame}
\Bigl(\rho^{\simreggame}[2j-1] \longrightarrow \rho^{\simreggame}[2j] \Bigr)$.
\item For every $i\geq 0$, we have
$w^{\simreggame}
\Bigl(\rho^{\simreggame}[2i] \longrightarrow \rho^{\simreggame}[2i+1] \Bigr)
= 0$
\end{compactenum}
The desired results follow.
\qedhere
\end{proof}


\subsection{Integer-Time  Simulation Functions Approximate
Real-Valued Simulation Functions}

\smallskip\noindent\textbf{Precision of the Integer-Time \qtsf{s}.}
Given a positive integer $\alpha \geq 1$, and a timed automaton $\A$, let $\alpha\cdot \A$
denote the timed automaton obtained from $\A$ by multiplying every constant by $\alpha$.
Note that if clocks are bounded by $c_{\max}$ in $\A$, then clocks are bounded by
$\alpha\cdot c_{\max}$ in $\alpha\cdot \A$.
The automaton $\alpha\cdot \A$ is just $\A$ with a blown up timescale.
One time unit in $\A$ corresponds to $\alpha$ time units in $\alpha\cdot \A$.
We let $\alpha\cdot \symb{\A} = \symb{\alpha\cdot \A}$,
and 
$\alpha\cdot \tuple{l, \kappa, \z, \myd} =  
\tuple{l,  \alpha\cdot \kappa, \fractional(\alpha\cdot \z), 
\floor{\alpha\cdot \z}+ \alpha\cdot\myd}$
where $\fractional(\beta)$ denotes the fractional part of $\beta$, \emph{i.e.}
$\beta-\floor{\beta}$ for $\beta\geq 0$.
Note that in $\alpha\cdot \symb{\A}$, the clock $z$ \emph{still cycles from 0 to 1}.
Thus, we first blow up the time scale of $\A$ to obtain $\alpha\cdot \A$, 
and \emph{then} take the expanded
game structure $\symb{\alpha\cdot \A}$.

\begin{lemma}
\label{lemma:Zoom}
Let $\A_{\myr},\A_{\mys}$ be timed automata, with the corresponding enlarged
timed transition systems $\en{\A_{\myr}}, \en{\A_{\mys}}$ respectively, and let
$\simgame_t(\en{A_{\myr}},\en{A_{\mys}})$ be the two player turn-based bipartite timed 
simulation  game structure.
For $\Psi \in
 \set{\dist_{\maxdiff}, \dist_{\limmaxdiff}, \dist_{\limavg}}$, for any $\alpha$ a
positive integer, and for
any states $\en{s_{\myr}}$ and $\en{s_{\mys}}$  of $\ensysr$ and $\ensyss$ respectively,
we have
\[
\alpha\cdot \simfunc_{\Psi^{\td}}^{\ensysr, \ensyss}(\tuple{\en{s_{\myr}}, \en{s_{\mys}}})=  
\simfunc_{\Psi^{\td}}^{\alpha\cdot \ensysr, \alpha\cdot \ensyss}
(\tuple{\alpha\cdot \en{s_{\myr}}, \alpha\cdot \en{s_{\mys}}})
\]

\end{lemma}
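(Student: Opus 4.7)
The plan is to exhibit a natural time-scaling bijection between plays of $\simgame_t(\ensysr,\ensyss)$ and plays of $\simgame_t(\alpha\cdot\ensysr,\alpha\cdot\ensyss)$, and then show that each of the three trajectory-difference metrics is homogeneous of degree one under this scaling. First I would observe that multiplying all constants in a timed automaton $\A$ by the positive integer $\alpha$ scales every clock constraint, invariant, and reset threshold by $\alpha$; hence $s\stackrel{\Delta}{\longrightarrow}s'$ is a valid transition of $\A$ if and only if $\alpha\cdot s \stackrel{\alpha\cdot\Delta}{\longrightarrow}\alpha\cdot s'$ is a valid transition of $\alpha\cdot\A$. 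This yields a bijection $\Phi$ between trajectories of $\ensysr$ and of $\alpha\cdot\ensysr$ (and similarly for $\ensyss$) that multiplies every time-stamp by $\alpha$, and a corresponding bijection on plays of the turn-based game $\simgame_t$. The bijection preserves observations and preserves time-divergence (since $\alpha$ is a fixed positive integer, a partial-sum sequence diverges iff its $\alpha$-multiple diverges). Consequently $\rho(\myr)\in\td(\ensysr)$ iff $\Phi(\rho)(\myr)\in\td(\alpha\cdot\ensysr)$, so the $\td$-clause in Definition~\ref{defintion:MetricPlays} behaves compatibly on both sides.

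The second step is to verify the homogeneity property for each of the three trajectory-difference metrics. If $\traj,\traj'$ are two trajectories with matching observation sequences, then by construction $\mtime_{\Phi(\traj)}[n] = \alpha\cdot\mtime_{\traj}[n]$, so
\[
\bigl|\mtime_{\Phi(\traj)}[n]-\mtime_{\Phi(\traj')}[n]\bigr|
\;=\;\alpha\cdot\bigl|\mtime_{\traj}[n]-\mtime_{\traj'}[n]\bigr|.
\]
Taking $\sup_n$ gives $\dist_{\maxdiff}(\Phi(\traj),\Phi(\traj'))=\alpha\cdot\dist_{\maxdiff}(\traj,\traj')$; taking $\lim_M\sup_{n\ge M}$ gives the same identity for $\dist_{\limmaxdiff}$; and taking $\lim_M\sup_{n\ge M}$ of the averaged partial sums (and using that the factor $\alpha$ pulls out of both the sum and the division by $n$) gives the identity for $\dist_{\limavg}$. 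Since observation sequences agree on matched traces, the $\infty$-clauses match trivially. These three equalities together with the preservation of $\td$ yield, for every play $\rho$ of $\simgame_t(\ensysr,\ensyss)$,
\[
\Psi^{\td}(\Phi(\rho)) \;=\; \alpha\cdot \Psi^{\td}(\rho).
\]

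The final step is to transport this identity through the $\inf\sup$ in Definition~\ref{def:qsim}. The bijection $\Phi$ on plays induces a bijection between player-1 (respectively player-2) strategies on the two games: given a player-$i$ strategy $\pi_i$ in $\simgame_t(\ensysr,\ensyss)$, define the scaled strategy $\Phi_*\pi_i$ in $\simgame_t(\alpha\cdot\ensysr,\alpha\cdot\ensyss)$ by translating the state-history through $\Phi^{-1}$, consulting $\pi_i$, and then scaling the prescribed successor by $\alpha$; this is well defined by step~1 and clearly a bijection $\Pi_{\myr}\leftrightarrow \Pi_{\myr}^{\alpha}$ and $\Pi_{\mys}\leftrightarrow \Pi_{\mys}^{\alpha}$, with the outcome of $(\Phi_*\pi_{\myr},\Phi_*\pi_{\mys})$ from $\alpha\cdot\tuple{\en{s_{\myr}},\en{s_{\mys}},2}$ being exactly $\Phi$ of the outcome of $(\pi_{\myr},\pi_{\mys})$ from $\tuple{\en{s_{\myr}},\en{s_{\mys}},2}$. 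Combining with the homogeneity identity and the fact that $\alpha\cdot(\cdot)$ commutes with $\inf$ and $\sup$ over non-negative extended reals (for $\alpha>0$), we conclude
\[
\simfunc_{\Psi^{\td}}^{\alpha\cdot\ensysr,\alpha\cdot\ensyss}
\bigl(\tuple{\alpha\cdot\en{s_{\myr}},\alpha\cdot\en{s_{\mys}}}\bigr)
\;=\;\alpha\cdot \simfunc_{\Psi^{\td}}^{\ensysr,\ensyss}\bigl(\tuple{\en{s_{\myr}},\en{s_{\mys}}}\bigr).
\]
The main things requiring care are (i)~checking that the transition-validity argument in step~1 is exactly an ``iff'' so that the strategy bijection in step~3 is genuinely a bijection (not merely a one-sided map), and (ii)~that the scaling factor commutes cleanly with the $\lim\sup$ over averages in the $\dist_{\limavg}$ case, which is routine since $\alpha$ is a strictly positive constant pulled outside the limit.
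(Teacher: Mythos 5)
Your proposal is correct and follows essentially the same route as the paper, whose entire proof is the one-line observation that times in $\alpha\cdot\A$ are the times in $\A$ multiplied by $\alpha$; your scaling bijection on plays, the homogeneity of the three metrics, the preservation of time-divergence, and the induced strategy bijection are exactly the details implicit in that observation.
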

\begin{proof}
The proof follows from observing that the times in $\alpha\cdot \A$ are just
the times in $\A$ multiplied by $\alpha$.
\qedhere
\end{proof}

The following lemma states that integer-time \qtsf{s} 
can
approximate the exact
\qtsf{s}
to within any desired
degree of accuracy.

\begin{proposition}[Integer-Time \qtsf{s}
Approximate
Exact
\qtsf{s}
to Any Desired Degree]
\label{proposition:IntSimToSimDegree}
Let $\A_{\myr},\A_{\mys}$ be timed automata,  with the corresponding enlarged
timed transition systems $\en{\A_{\myr}}, \en{\A_{\mys}}$ respectively, and let
$\simgame_t(\en{A_{\myr}},\en{A_{\mys}})$ be the two player turn-based bipartite timed simulation game structure.
For $\tuple{\intPsi, \Psi} $ in
$\set{\tuple{\intdist_{\maxdiff},\dist_{\maxdiff}}, 
\tuple{\intdist_{\limmaxdiff}, \dist_{\limmaxdiff}}, 
\tuple{   \intdist_{\limavg},  \dist_{\limavg}}}$,
and for any positive integer $\alpha > 0$,
we have the following assertions to be true.
\begin{compactenum}
\item $\simfunc_{\intPsi^{\td}}(\tuple{\alpha\cdot \en{s_{\myr}}, \alpha\cdot 
\en{s_{\mys}}}) = \infty$
iff  
$\simfunc_{\Psi^{\td}}(\tuple{\en{s_{\myr}}, \en{s_{\mys}}})= \infty$.
\item If $\simfunc_{\intPsi^{\td}}(\tuple{\alpha\cdot\en{s_{\myr}}, \alpha\cdot\en{s_{\mys}}}) < \infty$ and
 $\simfunc_{\Psi^{\td}}(\tuple{\en{s_{\myr}}, \en{s_{\mys}}})< \infty$, then
 \[\mspace{-30mu}
\left\arrowvert
\alpha^{-1}\cdot \simfunc_{\intPsi^{\td}}(\tuple{\alpha\cdot\en{s_{\myr}}, \alpha\cdot\en{s_{\mys}}})-
\simfunc_{\Psi^{\td}}(\tuple{\en{s_{\myr}}, \en{s_{\mys}}})
\right\arrowvert \leq \frac{1}{\alpha}
\]
\end{compactenum}
\end{proposition}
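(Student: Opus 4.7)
The plan is to combine Proposition~\ref{proposition:IntSimToSim} with Lemma~\ref{lemma:Zoom} in a direct fashion: scale the automata by $\alpha$ to get a one-unit approximation guarantee on the \emph{scaled} timescale, then divide through by $\alpha$ to translate this back into a $1/\alpha$ guarantee on the original timescale.

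First I would apply Proposition~\ref{proposition:IntSimToSim} to the scaled timed automata $\alpha\cdot\A_{\myr}$ and $\alpha\cdot\A_{\mys}$ at the states $\alpha\cdot\en{s_{\myr}}$ and $\alpha\cdot\en{s_{\mys}}$. This yields
\[
\simfunc_{\intPsi^{\td}}(\tuple{\alpha\cdot\en{s_{\myr}},\alpha\cdot\en{s_{\mys}}}) = \infty
\ \Longleftrightarrow\
\simfunc_{\Psi^{\td}}(\tuple{\alpha\cdot\en{s_{\myr}},\alpha\cdot\en{s_{\mys}}}) = \infty,
\]
and, in the finite case,
\[
\bigl\lvert\simfunc_{\intPsi^{\td}}(\tuple{\alpha\cdot\en{s_{\myr}},\alpha\cdot\en{s_{\mys}}})
- \simfunc_{\Psi^{\td}}(\tuple{\alpha\cdot\en{s_{\myr}},\alpha\cdot\en{s_{\mys}}})\bigr\rvert \leq 1.
\]
Next I would invoke Lemma~\ref{lemma:Zoom}, which gives
$\simfunc_{\Psi^{\td}}(\tuple{\alpha\cdot\en{s_{\myr}},\alpha\cdot\en{s_{\mys}}})
= \alpha\cdot \simfunc_{\Psi^{\td}}(\tuple{\en{s_{\myr}},\en{s_{\mys}}})$ for the three real-valued metrics of interest.

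For the infinity case, chaining the two equivalences immediately produces the biconditional: $\simfunc_{\intPsi^{\td}}(\tuple{\alpha\cdot\en{s_{\myr}},\alpha\cdot\en{s_{\mys}}}) = \infty$ iff $\alpha\cdot\simfunc_{\Psi^{\td}}(\tuple{\en{s_{\myr}},\en{s_{\mys}}}) = \infty$, which is equivalent to $\simfunc_{\Psi^{\td}}(\tuple{\en{s_{\myr}},\en{s_{\mys}}}) = \infty$ since $\alpha$ is a positive integer. For the finite case, substituting the equality from Lemma~\ref{lemma:Zoom} into the one-unit bound yields
\[
\bigl\lvert\simfunc_{\intPsi^{\td}}(\tuple{\alpha\cdot\en{s_{\myr}},\alpha\cdot\en{s_{\mys}}})
- \alpha\cdot\simfunc_{\Psi^{\td}}(\tuple{\en{s_{\myr}},\en{s_{\mys}}})\bigr\rvert \leq 1,
\]
and dividing both sides by $\alpha > 0$ gives the required $1/\alpha$ approximation bound.

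There is essentially no obstacle here, since all the work has been pushed into the two prior results. The only mild subtlety worth double-checking is that Lemma~\ref{lemma:Zoom} applies cleanly to the \emph{enlarged} states (and in particular that the enlargement commutes with the $\alpha$-scaling in the sense used in the statement), which follows from the definition $\alpha\cdot\en{\A} = \en{\alpha\cdot\A}$ and the corresponding componentwise action on states given just before Lemma~\ref{lemma:Zoom}.
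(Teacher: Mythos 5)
Your proposal is correct and follows exactly the paper's own argument: the paper's proof consists precisely of invoking Proposition~\ref{proposition:IntSimToSim} applied to $\alpha\cdot\A_{\myr}$, $\alpha\cdot\A_{\mys}$ together with Lemma~\ref{lemma:Zoom}, which is the chaining-and-rescaling you carry out in detail. Nothing further is needed.
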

\begin{proof}
The proof follows from 
Lemma~\ref{lemma:Zoom} and Proposition~\ref{proposition:IntSimToSim} applied
to $\alpha\cdot \A_{\myr}$ and $\alpha\cdot \A_{\mys}$.
\qedhere
\end{proof}

\subsection{Final Algorithms and Results}
We now present the final algorithm for computing
the values for the \qtsf{s}
$\simfunc_{\Psi^{\td}}$ for 
$\Psi \in
\set{\dist_{\maxdiff}, \dist_{\limmaxdiff}, \dist_{\limavg}}$,
 to within
any desired degree of accuracy.
The algorithm is listed in  the function $h_{\Psi, \alpha}(s_{\myr}, s_{\mys})$.
The proof of the correctness of the algorithm uses
 Proposition~\ref{proposition:IntSimToSimDegree}, and
Lemma~\ref{lemma:SimToGame},
and the results of the previous section on games on finite state game graphs.

\begin{theorem}
Let $\A_{\myr}$ and $\A_{\mys}$ be well-formed timed automata such that
all clocks are bounded by $c_{\max}$, and let  $\alpha \geq 1$ be a positive integer.
Let
$\simfunc_{\Psi^{\td}}$ denote the \qtsf
for $\Psi \in
\set{\dist_{\maxdiff}, \dist_{\limmaxdiff}, \dist_{\limavg}}$.
The function $h_{\Psi, \alpha}()$ is such that
for
 any states $s_{\myr}$ of $\A_{\myr}$ and $s_{\mys}$ of $\A_{\mys}$,  either
\begin{compactenum}
\item $\simfunc_{\Psi^{\td}}(\tuple{s_{\myr}, s_{\mys}}) = $
$h_{\Psi, \alpha}(s_{\myr}, s_{\mys}) = \infty$; or
\item Both values are finite and
$
\left\arrowvert\simfunc_{\Psi^{\td}}(\tuple{s_{\myr}, s_{\mys}}) \, -\, h_{\Psi, \alpha}(s_{\myr}, s_{\mys})
\right\arrowvert \leq \frac{1}{\alpha}
$
\end{compactenum}
\end{theorem}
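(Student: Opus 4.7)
The plan is to define $h_{\Psi,\alpha}(s_{\myr}, s_{\mys})$ as
\[
h_{\Psi,\alpha}(s_{\myr}, s_{\mys}) \;=\; \alpha^{-1}\cdot \Opt^{\simreggame(\reg(\alpha\cdot \ensysr),\, \reg(\alpha\cdot \ensyss))}(\Xi_{\Psi})\Bigl(\bigl\langle \reg(\alpha\cdot \en{s_{\myr}}),\, \reg(\alpha\cdot \en{s_{\mys}}),\, 2\bigr\rangle\Bigr),
\]
where $\Xi_{\Psi}$ is $\maxDiffLC_{\coBuchi(\ticks_{\myr}=0)}$, $\EvMaxDiffLC_{\coBuchi(\ticks_{\myr}=0)}$, or $\AvDiffLC_{\coBuchi(\ticks_{\myr}=0)}$ according as $\Psi$ is $\dist_{\maxdiff}$, $\dist_{\limmaxdiff}$, or $\dist_{\limavg}$. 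First I would verify that this quantity is actually computable: the finite-state game $\simreggame$ is constructed by a standard region-graph expansion of the enlarged timed transition systems $\en{\alpha\cdot \A_{\myr}}$ and $\en{\alpha\cdot \A_{\mys}}$, and its size is polynomial in the region graph of $\alpha\cdot \A$. The three objectives $\maxDiffLC$, $\EvMaxDiffLC$, $\AvDiffLC$ all have polynomial-time algorithms by the final theorem of the previous section on finite game graphs, so $h_{\Psi,\alpha}$ is effectively computable.

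The core of the proof is a three-step chain of (in)equalities that I would apply in sequence. Step one: apply Lemma~\ref{lemma:SimToGame} to the timed automata $\alpha\cdot \A_{\myr},\alpha\cdot \A_{\mys}$ to obtain
\[
\alpha\cdot h_{\Psi,\alpha}(s_{\myr}, s_{\mys}) \;=\; \simfunc_{\intPsi^{\td}}^{\simgame_t(\reg(\en{\alpha\cdot \A_{\myr}}),\, \reg(\en{\alpha\cdot \A_{\mys}}))}\Bigl(\bigl\langle \reg(\alpha\cdot \en{s_{\myr}}),\, \reg(\alpha\cdot \en{s_{\mys}})\bigr\rangle\Bigr).
\]
Step two: invoke the bisimulation lemma (the unnamed lemma immediately preceding Subsection~\ref{subsection:ReductionFinite}'s region-graph discussion) to pass from the region-graph game to the infinite-state game, giving equality with $\simfunc_{\intPsi^{\td}}^{\simgame_t(\en{\alpha\cdot \A_{\myr}},\,\en{\alpha\cdot \A_{\mys}})}(\tuple{\alpha\cdot\en{s_{\myr}},\alpha\cdot\en{s_{\mys}}})$. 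Step three: apply Proposition~\ref{proposition:IntSimToSimDegree}, which is precisely tailored to relate this last quantity (after dividing by $\alpha$) to $\simfunc_{\Psi^{\td}}(\tuple{s_{\myr}, s_{\mys}})$ up to an additive error of $1/\alpha$, and which also matches the infinite cases exactly.

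For the two cases in the theorem statement, the infinite case is handled by the first part of Proposition~\ref{proposition:IntSimToSimDegree}: if one of the two quantities is infinite, so is the other, and the chain of equalities in steps one and two preserves infiniteness because they are exact equalities. In the finite case, the chain of equalities collapses to $h_{\Psi,\alpha}(s_{\myr}, s_{\mys}) = \alpha^{-1}\cdot \simfunc_{\intPsi^{\td}}(\tuple{\alpha\cdot\en{s_{\myr}},\alpha\cdot\en{s_{\mys}}})$, and the second part of Proposition~\ref{proposition:IntSimToSimDegree} directly yields the desired bound $|h_{\Psi,\alpha}(s_{\myr}, s_{\mys}) - \simfunc_{\Psi^{\td}}(\tuple{s_{\myr}, s_{\mys}})| \leq 1/\alpha$. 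The main obstacle in making this rigorous is not any single step but rather checking that the three reductions compose correctly: in particular that Zeno-handling via $\coBuchi(\ticks_{\myr}=0)$ in $\Xi_{\Psi}$ faithfully encodes the $\td$ side-condition in $\Psi^{\td}$ after scaling by $\alpha$, which I would verify by noting that the $\ticks$ counter behaves identically (modulo the scale factor already absorbed into the weight function $w^{\simreggame}$) in $\en{\alpha\cdot \A}$ as in $\en{\A}$, so the B\"uchi/coB\"uchi encoding of time-divergence from Lemma~\ref{lemma:TimeDivEncode} transfers verbatim.
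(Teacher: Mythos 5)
Your proposal is correct and follows essentially the same route as the paper: the paper's proof likewise combines Lemma~\ref{lemma:SimToGame} (reduction to the finite weighted region game, applied to the $\alpha$-scaled automata) with Proposition~\ref{proposition:IntSimToSimDegree}, and notes computability via the finite-game algorithms of Section~\ref{section:Finite}. The only difference is that you spell out explicitly the intermediate step (the region-graph bisimulation lemma equating the integer-time simulation values on $\en{\alpha\cdot \A}$ and on its region graph) that the paper's one-line proof leaves implicit.
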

\IncMargin{1em}
\begin{function}
     \SetKwInOut{Input}{Input}
    \SetKwInOut{Output}{Output}
    \Input{States $s_{\myr},s_{\mys}$ from $\A_{\myr},\A_{\mys}$ respectively;
    $\alpha$ a positive integer;\\
      $\Psi \in
      \set{\dist_{\maxdiff}, \dist_{\limmaxdiff}, \dist_{\limavg}}$}
    \Output{A number 
    approximating $\simfunc_{\Psi^{\td}}(\tuple{s_{\myr}, s_{\mys}}) $ (maximum error difference: 
  1/$\alpha$)}
    $\reg(\en{\alpha\cdot \A_{\myr}}), \reg(\en{\alpha\cdot \A_{\mys}}) :=$ 
    Region graphs of the expanded timed game structures\\
    \hspace*{55mm} $\en{\alpha\cdot \A_{\myr}}$ and $\en{\alpha\cdot \A_{\mys}}$\;
    $\simreggame: = \simreggame\bigl(\reg(\en{\alpha\cdot\A_{\myr}}),
\reg(\en{\alpha\cdot \A_{\mys}})\bigr)$ 
       \tcc*{Finite weighted game constructed from the region graphs}
    \Switch{$\Psi$}
    {
      \uCase{$\dist_{\maxdiff}$}
      {
        $\Xi :=  \maxDiffLC_{\coBuchi(\ticks_{\myr}=0)}$\;
      }
      \uCase{$ \dist_{\limmaxdiff}$}
      {
        $ \Xi := \EvMaxDiffLC_{\coBuchi(\ticks_{\myr}=0)}$\;
      }
      \uCase{$\dist_{\limavg}$}
      {
        $\Xi := \AvDiffLC_{\coBuchi(\ticks_{\myr}=0)}$\;
        }
      }
 \Return{      $\alpha^{-1}\cdot\Opt^{\simreggame}
   (\Xi)\Big(\left\langle
         \reg(\en{\alpha\cdot s_{\myr}}),\ 
         \reg(\en{\alpha\cdot s_{\mys}}),\ 
         2
          \right\rangle\Big)$
      }\;
      
%
 \caption{$h_{\Psi, \alpha}$($s_{\myr}, s_{\mys}$)}
  \label{function:hfun}
\end{function}
\DecMargin{1em}
\begin{proof}
The proof follows from
  Proposition~\ref{proposition:IntSimToSimDegree}
and Lemma~\ref{lemma:SimToGame}.
Since 
    $\simreggame\bigl(\reg(\en{\alpha\cdot\A_{\myr}}),
\reg(\en{\alpha\cdot \A_{\mys}})\bigr)$
  is a finite weighted game graph, the value of
  $h_{\Psi ,\alpha}(s_{\myr}, s_{\mys})$ can be computed using the algorithms of 
  Section~\ref{section:Finite}.
\qedhere
\end{proof}

\section{Concluding Remarks}
We have defined three ways of quantifying timing mismatches, and
have presented algorithms for computing the values of three kinds  of 
quantitative timed simulation functions
   which quantify corresponding timing mismatches between two timed automata to within 
any desired degree of accuracy.
We note that the optimal player-1 strategies in the weighted game
 $\simreggame$ used in Function $h_{\Psi, \alpha}()$ 
are also computable, and are witnesses to the
quantitative simulation function values (similar to
simulation relations witnessing the simulation decision problem).
We expect that the algorithms presented in this paper will contribute to the further
development of approximation 
theories for continuous, switched and hybrid dynamical systems for the automatic synthesis of more powerful controllers.

\renewcommand*{\bibfont}{\raggedright}
\printbibliography

\end{document}